\newcommand{\beq}{\begin{equation}}
\newcommand{\eeq}{\end{equation}}
\newcommand{\beqs}{\begin{eqnarray}}
\newcommand{\eeqs}{\end{eqnarray}}
\newcommand{\rz}{\mathbb{R}}
\newcommand{\cof}{\textup{cof}}
\newcommand{\mathscr}{\mathcal}
\newcommand{\bfa}{{\bf a}}
\newcommand{\bfb}{{\bf b}}
\newcommand{\bfc}{{\bf c}}
\newcommand{\bfd}{{\bf d}}
\newcommand{\bfe}{{\bf e}}
\newcommand{\bfg}{{\bf g}}
\newcommand{\bfm}{{\bf m}}
\newcommand{\bfn}{{\bf n}}
\newcommand{\bfs}{{\bf s}}
\newcommand{\bfu}{{\bf u}}
\newcommand{\bfv}{{\bf v}}
\newcommand{\bfx}{{\bf x}}
\newcommand{\bfy}{{\bf y}}
\newcommand{\bfz}{{\bf z}}
\newcommand{\bfA}{{\bf A}}
\newcommand{\bfB}{{\bf B}}
\newcommand{\bfE}{{\bf E}}
\newcommand{\bfF}{{\bf F}}
\newcommand{\bfH}{{\bf H}}
\newcommand{\bfI}{{\bf I}}
\newcommand{\bfQ}{{\bf Q}}
\newcommand{\bfR}{{\bf R}}
\newcommand{\bfS}{{\bf S}}
\newcommand{\bfU}{{\bf U}}
\newcommand{\bfW}{{\bf W}}
\newcommand{\vphi}{{\varphi}}
\newcommand{\eps}{{\varepsilon}}
\newcommand{\beql}{\begin{equation} \label}
\newcommand{\calP}{{\cal P}}
\newtheorem{theorem}{Theorem}
\newtheorem{lem}[theorem]{Lemma}
\newtheorem{proposition}[theorem]{Proposition}
\newtheorem{cor}[theorem]{Corollary}
\theoremstyle{definition}
\theoremstyle{remark}
\begin{document}

 \title{Study of the {\it cofactor conditions}: conditions of supercompatibility
between phases}

\author{\normalsize Xian Chen$^{a}$, Vijay Srivastava$^{b}$, Vivekanand Dabade$^{a}$, and Richard D. James$^{a}$ \\
\normalsize\it$^{a}$ Department of Aerospace Engineering and Mechanics, University of Minnesota, \\
\normalsize\it Minneapolis, Minnesota 55455, USA\\
\normalsize\it$^{b}$ GE Global Research Center, Niskayuna, New York 12309, USA
}
\date{\normalsize \today}
 


\maketitle

\begin{abstract}
The {\it cofactor conditions},  introduced in \citet{james_2005}, are
conditions
of compatibility between
 phases in martensitic materials.  They consist of three subconditions:
i) the condition that the middle
principal stretch of the transformation stretch tensor $\bfU$ is unity
($\lambda_2 = 1$),  ii) the condition $\bfa \cdot
\bfU\, \cof (\bfU^2 - \bfI)\bfn  = 0$, where the vectors $\bfa$ and $\bfn$
are certain vectors arising in the specification of the twin system,
and iii) the inequality ${\rm tr} \bfU^2 + \det \bfU^2 -
(1/4) |\bfa|^2 |\bfn|^2
\ge 2$.  Together, these conditions are necessary
and sufficient for the equations of the
crystallographic theory of martensite to be satisfied for the given
twin system but for any volume fraction $f$ of the twins, $0 \le f \le 1$.
This contrasts sharply with the generic solutions  of the crystallographic
theory which have at most two such volume fractions for a given twin system
of the form $f^*$ and $1-f^*$.
In this paper we simplify the form of the cofactor conditions, we
give their specific forms for various symmetries and twin types,
we clarify the
extent to which the satisfaction of the cofactor conditions for one
twin system implies its satisfaction for other twin systems.  In particular,
we prove that the satisfaction of the cofactor conditions for either
Type I or Type II twins implies that there are solutions of the
crystallographic theory using these twins that have {\it no elastic transition
layer}.  We show that the latter further implies macroscopically
curved, transition-layer-free
austenite/martensite interfaces for Type I twins, and planar
transition-layer-free
interfaces for Type II twins which nevertheless
permit significant flexibility
(many deformations) of the martensite.  We identify some real material systems
nearly satisfying the cofactor conditions.
Overall, the cofactor conditions are shown to dramatically increase
the number of deformations possible in austenite/martensite mixtures without the
presence of elastic energy needed for coexistence.  In the context of
earlier work that links the special case $\lambda_2 = 1$ to reversibility
\citep{jun_06, zhiyong_08, zarnetta_08},
it is expected that satisfaction of the cofactor conditions for Type I
or Type II twins will
lead to further lowered hysteresis and improved resistance to transformational
fatigue in alloys whose composition has been tuned to satisfy
these conditions.
\end{abstract}

\newpage
\tableofcontents



\section{Introduction}
\label{intro}

This paper gives a precise derivation and
implications of the cofactor conditions
\citep{james_2005},
defined briefly in the abstract.
These conditions are appropriate to a material that undergoes an austenite to
martensitic phase transformation having
symmetry-related variants of martensite.
The cofactor conditions represent a degeneracy of the equations
of the crystallographic theory of martensite  \citep{lieberman_55, bowles_54a, bowles_54b},
under which this theory possesses
solutions with any volume fraction $0 \le f \le 1$ of the
twins \citep{james_2005}.

For the special cases $f = 0$ and $f = 1$ the equations of the
crystallographic theory reduce to the equations of compatibility between
austenite and the appropriate single variant of martensite.  Hence,
as also can be seen from the conditions themselves (in particular,
the condition $\lambda_2 = 1$),
the cofactor
conditions imply perfect compatibility between austenite
and each single variant
of martensite.  The solutions of the crystallographic
theory for the intermediate volume fractions
$0 < f < 1$  imply the existence of the
standard low energy transition layers between
austenite and finely-twinned martensite. 

The main result of this paper is that in many cases, the cofactor
conditions imply that the transition layer can be eliminated altogether, resulting in the coexistence
of austenite and twinned martensite with zero elastic energy.  Examples are shown in Figures \ref{curve_inf} (right), \ref{typeI_interface}, \ref{type2_interface}, \ref{nucleation_AinM} and \ref{nucleation_MinA}. These include macroscopically curved austenite/martensite interfaces and natural mechanisms of nucleation.
The latter are continuous families of deformations in which the austenite grows from zero volume in a matrix of martensite, or the martensite grows in a matrix of austenite, all having zero elastic energy.  Said differently, while the crystallographic theory implies that the energy due to elastic distortion can be reduced as close to zero as desired by making the twins finer and finer, the elastic energy in the cases studied here is eliminated at all length scales.  From a physical viewpoint, the only remaining energy is then a small interfacial energy.  We describe explicitly the cases in which the transition layer can be eliminated in Section \ref{cof:condn}.


The value of $\lambda_2$ can be modified
by changing composition, and the special case
$\lambda_2 = 1$ (up to experimental error in
the measurement of lattice parameters) has been achieved in many systems.
As reviewed in detail below, satisfaction of only the condition
$\lambda_2 = 1$
has a dramatic effect on hysteresis and transformational fatigue
\citep{jun_06, zhiyong_08, zarnetta_08, remi_09,vijay_10}; see also
\citep{buschbeck_11, meethong_07, louie_10, vijay_11}.
A theory for the width of the hysteresis loop that predicts this
sensitivity was given in \citep{zhiyong_08, knupfer_2011, zwicknagl_2013}.  It is based on the
idea that transformation
is delayed, say on cooling, because the additional bulk and twin-boundary
energy at the austenite/martensite interface has to be compensated by
a further lowering of the energy wells of the martensite phase, so as
to have a free energy decreasing transformation path.
This bulk and interfacial
energy is eliminated by tuning composition to make $\lambda_2 = 1$.
Both this theory
and broad collection of measurements of hysteresis demonstrate extreme
sensitivity of the width of the hysteresis to $\lambda_2$ (and composition),
which also explains why this was not observed previously.  For example, as
shown in Figure \ref{hires}, 1/4 \% changes of composition
in the Ti$_{50}$Ni$_{50-x}$Pd$_{x}$ system give
a minimum width   of the  hysteresis loop  at $x = 9.25$ with a remarkable value
$(1/2) (A_f + A_s - M_f- M_s) =\ $2$^{\circ}$C.  This is accompanied
by improvements of the reversibility of the phase transformation as measured by
the migration of the transformation temperature under repeated cycling.

\begin{figure}[htp]
\centering
\includegraphics[width=0.6 \textwidth]{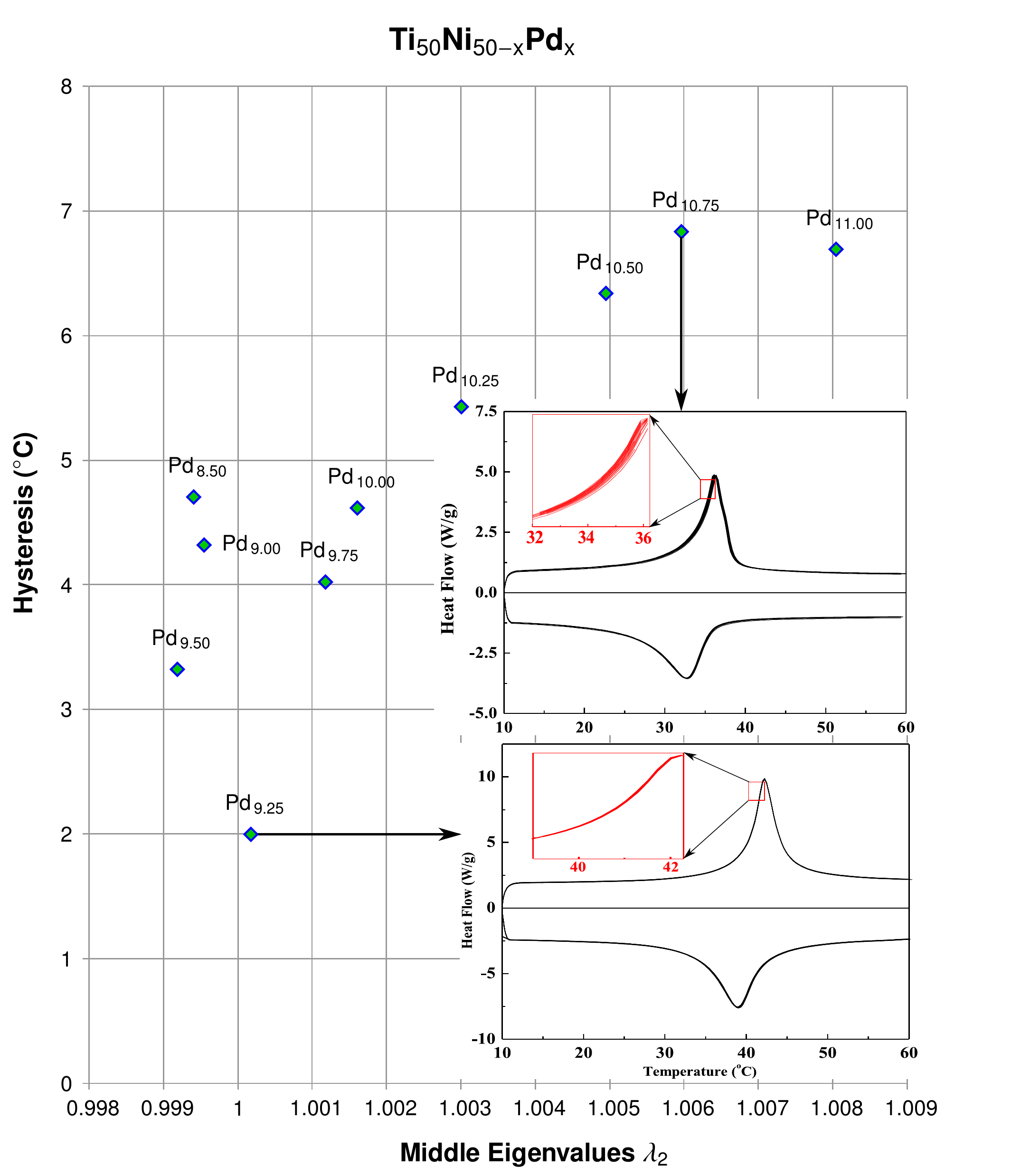}
\caption{Reduced hysteresis of Ti$_{50}$Ni$_{50-x}$Pd$_x$ alloy
system as the composition is tuned to achieve $\lambda_2 = 1$.  A
thermal hysteresis of 2$^\circ$ C is obtained at $x = 9.25$.  The insets
show a comparison of thermal hysteresis under repeated cycling
through the transformation (30 cycles) measured by differential
scanning calorimetry at $x = 9.25$ vs.~$x = 10.75$. A careful comparison
of these graphs shows an average migration of transformation temperature
of $0.16^\circ$C/cycle at $x = 10.75$ is reduced to
$0.030^\circ$C/cycle at $x = 9.25$.  These values should be contrasted
to ordinary
TiNi which exhibits an average migration over 30 cycles of about
$0.6^\circ$C/cycle.   \label{hires}}
\end{figure}

%
%

Tuning $\lambda_2$ to $1$ actually entails  a reduction of the
number of deformations that belong to solutions of the crystallographic
theory in many cases.  This can be seen in the following way.
In general, for $\lambda_2$ near 1 but $\lambda_2 \ne 1$,
the crystallographic
theory implies the existence of four solutions per
twin system \citep{ball_james_87}, resulting in four average
deformation gradients of twinned
laminates that participate in austenite/martensite interfaces.
As $\lambda_2 \to 1$, these four solutions converge to four perfect
austenite/single-variant martensite interfaces.  (This is consistent
with the fact that when the middle eigenvalue $\lambda_2$ of a
positive-definite symmetric tensor $\bfU$ is 1, there
are two solutions $\bfR_1, \bfa_1 \otimes \bfn_1$   and
$\bfR_2, \bfa_2 \otimes \bfn_2$  of the equation of perfect compatibility
$\bfR \bfU - \bfI = \bfa \otimes \bfn$,
$\bfR \in$ SO(3), $\bfa, \bfn \in \rz^3$ \citep{ball_1992}.)
However, some of these four also result from other solutions of the
crystallographic theory, because a variant can belong to many
twin systems.  In fact, a simple counting exercise shows that
the number of deformation gradients participating in exact
interfaces equals the number of generic twin
systems \citep{pitteri_98}.  For example, in a classic cubic to orthorhombic
phase
transformation \citep{zhiyong_08} as in the material TiNiPd
(Figure \ref{hires}),
there are 6 variants of martensite, resulting generically
in 30 twin systems and 24 (resp., 96) solutions of the crystallographic
theory for $\lambda_2 \lesssim 1$ (resp., $\lambda_2 \gtrsim 1$).
If $\lambda_2 = 1$ in this case, there are only 30
deformation gradients corresponding to exact austenite/martensite
interfaces. 

Fewer deformation gradients means fewer ways that nontransforming impurities,
defects, triple junctions and precipitates can be accommodated by a growing
austenite/martensite interface.  This intuition on the beneficial effects
of having more deformations, which is prevalent
in the literature on phase transformations, is quantified in random
polycrystals by \citep{bhatta_96}.  This line of thought
also plays an important role in the concept of non-generic twins of
\citet{pitteri_98}.  
As summarized above, if the cofactor conditions are satisfied, there are infinitely many deformation gradients participating in austenite/twinned-martensite interfaces. As mentioned above, in some cases (Type I or Type II but generally not Compound twins, see below) the elastic transition layer can be eliminated. Particularly in these cases, the demonstrated advantages with regard to hysteresis and reversibility of having no transition layer are combined with the benefits of having a great many deformations. The precise nature of these possible benefits with regard to the shape memory effect or transformational fatigue awaits further theoretical and experimental study.

This paper unifies the treatment of compatibility of variants of martensite,
by including automatically Type I/II and Compound twins,
the ``domains'' of Li \citep{jian_95, jian_97},
and the non-conventional and non-generic twins of Soligo, Pitteri and
Zanzotto \citep{pitteri_98, soligo_99}.
All of these cases can satisfy the cofactor conditions,
and all of these cases are analyzed here.

Geometrically linear theory is often used in the literature.
We present a treatment of the cofactor conditions in that
case.  They can be obtained either by direct linearization of
the cofactor conditions of the geometrically nonlinear theory, or by
starting over and imposing the condition of ``any volume fraction of the
twins'' in the geometrically linear form of the crystallographic theory.

\vspace{2mm}
\noindent {\it Notation and method of visualization}:
As general background for this paper, including most
notation, see the book of \citet{kaushik_03}.
$\mathbb R^{3 \times 3}$ is the set of $3 \times 3$
real matrices, $\mathbb R^{3 \times 3}_+$ is the subset
of $\mathbb R^{3 \times 3}$ with positive determinant,
$\mathbb R^{3 \times 3}_\text{+sym}$ is positive-definite, symmetric
real $3 \times 3$ matrices,
and $\text{SO}(3)$ denotes the group of all $3 \times 3$ orthogonal
matrices with determinant 1. The notation
$\textup{cof}\, \mathbf A$ denotes the cofactor of the matrix
$\mathbf A$: in components relative to an orthonormal basis,
$(\textup{cof} \mathbf A)_{ij} = (-1)^{i+j}\textup{det}
(\hat{\mathbf A}_{ij})$,
where $\hat{\mathbf A}_{ij}$ is the  determinant of the submatrix
obtained by removing the $i$th row
and $j$th column of $\bfA$.  The pictures of microstructures shown in
this paper are plotted using the following algorithm: a) A deformation
$\bfy(\bfx)$ defined on a cube $\Omega$ and having the given
values of $\nabla \bfy$, e.g., those arising from materials satisfying the
cofactor conditions, is constructed analytically\footnote{In cases that this
deformation contains a transition layer at an interface, linear
interpolation of the
deformation across this layer is used, unless otherwise noted.}.
b) Suitable rectangular arrays of points $\bfx_1, \bfx_2, \dots \in \partial \Omega$
are specified.
c) Dots at the points $\bfy(\bfx_1), \bfy(\bfx_2), \dots$ are plotted, colored
by their phase or variant.
This is a direct visualization via the Cauchy-Born
rule.


\section{Geometrically nonlinear theory of martensite and the crystallographic
theory} \label{geomart}

The cofactor conditions arise as degeneracy conditions in the
crystallographic theory of martensite, but they have wider implications
for the existence of energy minimizing microstructures within the geometrically nonlinear
theory of martensitic transformations.  Thus we present a brief summary of
the parts of the theory that are needed in this paper.
As general references we cite
\citet{kaushik_03, james_2000, ball_james_87}.

The domain $\Omega \subset \mathbb R^3$, interpreted as a region occupied by
undistorted
austenite at the transformation temperature,  serves as reference
configuration for deformations
$\mathbf{y} : \Omega \rightarrow \rz^3$ arising from transformation or elastic
distortion.
The total energy of an unloaded body  subjected to a deformation
$\mathbf{y} : \Omega \rightarrow \rz^3$ at a temperature
$\theta$ is given by
\begin{equation}
\int_\Omega \vphi(\nabla \mathbf{y}(\bfx), \theta)\,
d \mathbf{x} \label{tot_energy}.
\end{equation}
The Helmholtz free energy per unit reference volume,
$\vphi(\bfF, \theta)$, depends on the deformation
gradient $\bfF \in \mathbb R^{3 \times 3}_+$ and
the absolute temperature $\theta >0$.  This energy density can be related
to  atomistic
theory by the Cauchy-Born rule \citep{pitteri_03}.  In this scenario
$\bfF$ is interpreted
as a linear transformation locally mapping a Bravais lattice representing
undistorted
austenite to the martensite lattice.  If the austenite is
represented by a complex lattice consisting of the union of several
Bravais lattices, all having the same lattice vectors but having different
base points $\bfa_1, \dots, \bfa_m$,  the appropriate version of the Cauchy-Born
rule -- the {\it weak Cauchy-Born rule} in the terminology of
\citet{pitteri_98} and \citet{ericksen_08}  --  gives an energy density of the form
$\hat{\vphi}(\bfF, \bfa_m - \bfa_1, \dots, \bfa_2 - \bfa_1, \theta)$.
In that case the free energy density given above is defined by
\beq
  \vphi (\bfF, \theta) = \min_{\bfs_1, \dots, \bfs_{m-1}}
  \hat{\vphi}(\bfF, \bfs_1, \dots, \bfs_{m-1}, \theta).
\eeq
 The free energy density $\vphi$ is frame-indifferent,
$\vphi(\bfR \bfF, \theta) = \vphi(\bfF, \theta)$ for all $\theta>0,\ \bfR \in
\text{SO(3)}$ and $\bfF \in \mathbb R^{3 \times 3}_\text{+}$,
and its energy-well structure is restricted by conditions of
symmetry which are not repeated here.

The result is that there is a set of {\it transformation stretch matrices}
$\bfU_1, \dots, \bfU_n$, each in $\mathbb R^{3 \times 3}_\text{+sym}$,
that are related by symmetry, $\bfU_i = \bfQ_i \bfU_1 \bfQ_i^T$,
$i = 1, \dots, n$, where $\calP  = \{\bfQ_1, \dots, \bfQ_n\}, \bfQ_i \in\ $O(3) is the point group
of undistorted austenite at $\theta_c$.  $\bfU_1, \dots, \bfU_n$ define the energy
wells of the {\it variants of martensite}.
That is, there is a {\it transformation temperature} $\theta_c$
such that
\beq
  \vphi(\bfU_1, \theta) = \dots = \vphi(\bfU_n, \theta) \le
  \vphi(\bfF, \theta), \quad \theta \le \theta_c.
\eeq
The matrices $\bfU_i = \bfQ_i \bfU_1 \bfQ_i^T$,
$i = 1, \dots, n$ depend weakly on temperature, due to ordinary
thermal expansion, but this dependence is suppressed.

For $\theta = \theta_c$, the identity $\bfI$, representing the austenite,
 is also a minimizer:
\beq
 0 = \vphi(\bfI, \theta_c) = \vphi(\bfU_1, \theta_c) \le
 \vphi(\bfF, \theta_c).
\eeq
Without loss of generality we have put the minimum value of the
energy at $\theta_c$ equal to zero.
As $\theta$ is increased from $\theta_c$ the austenite well persists,
but it is perturbed slightly away from $\bfI$ due again to ordinary
thermal expansion.  $\bfU_1, \dots, \bfU_n$ also can be continued
as local minimizers of the energy density  for $\theta> \theta_c$.
 While there are various obvious generalizations of
our results, in this paper we nominally discuss energy
minimizers and minimizing
sequences at $\theta_c$.  In summary, the full set of
minimizers of the free energy density $\vphi$ at $\theta_c$ includes
\beq
   \text {SO(3)}\bfI\ \cup\   \text {SO(3)}\bfU_1\ \cup\ \cdots
   \cup\  \text {SO(3)}\bfU_n
\eeq
for given symmetry-related tensors $\bfU_1, \dots, \bfU_n$
in $\mathbb R^{3 \times 3}_\text{+sym}$.  To avoid degeneracy
we assume that $\bfI, \bfU_1, \dots, \bfU_n$ are distinct.  A
general algorithm that can be used to obtain the transformation stretch
matrices directly from x-ray measurements, applicable also to
complex lattices,  is presented in a forthcoming paper
 \citep{chen_12a}.
 
\subsection{Twins and domains}
\label{twins_domains}

Accounting for frame-indifference, the equation of compatibility for
two variants of martensite is
\beq
\hat{\mathbf{R}} \mathbf{U}_i - \bar{\mathbf{R}}\mathbf{U}_j = \mathbf{a}
\otimes
\mathbf{n}, \label{twin_rel}
\eeq
which is to be solved for $\hat{\bfR}, \bar{\bfR} \in\ $SO(3) and
$\bfa, \bfn \in \mathbb R^3$.   Without loss of generality,
we can put $ \bar{\bfR} = \bfI$
and $j=1$.  The former is accomplished by premultiplying (\ref{twin_rel})
by $\bar{\bfR}^T$ (corresponding to an overall rigid rotation)
and suitably redefining $\hat{\bfR}$ and $\bfa$.
The latter is
accomplished by subsequently pre- and post- multiplying (\ref{twin_rel}) by
$\bfQ_j, \dots, \bfQ_j^T$ and using the symmetry relations above.
Thus we consider
\beq
\hat{\mathbf{R}} \mathbf{U}_i - \mathbf{U}_1 = \mathbf{a} \otimes
\mathbf{n}. \label{twin_rel1}
\eeq
To recover the general case (\ref{twin_rel}) we multiply (\ref{twin_rel1})
by $\bfQ_j, \dots, \bfQ_j^T$  and then premultiply by an arbitrary
$\bar{\bfR} \in\ $SO(3) and make the obvious notational changes.

Because of results given in the Appendix and described in the
following paragraphs, it is seen that the details
of symmetry relations, the number of variants, point groups, etc., do
not play a direct role in the analysis.  So we simplify the notation.
  Let
$\bfU = \bfU_1 \in \mathbb R^{3 \times 3}_\text{+sym}$ and
$\hat{\bfU} \in \mathbb R^{3 \times 3}_\text{+sym}$.
Let $\hat{\bfR} \in {\rm SO(3)}, \bfa, \bfn \in \rz^3$ satisfy
\beq
  \hat{\bfR} \hat{\bfU} - \bfU =  \bfa \otimes \bfn.  \label{twin_rel2}
\eeq

It is known that the solutions of the equation of compatibility
(\ref{twin_rel2}) between
martensite variants can be classified into five types: Type I,
Type II, Compound, non-conventional but generic and non-generic twins.
The terminology non-generic twins and non-conventional twins was introduced
by Pitteri and Zanzotto \citep{pitteri_98, soligo_99} in the
context of cubic to monoclinic transformations.  Briefly,
Type I/II twins are the well-known solutions generated by a two-fold
$\bfQ \in \calP$
such that $\bfU_j = \bfQ \bfU_1 \bfQ^T \ne \bfU_1$.  Compound
twins are possible when there are two distinct two-fold transformations
relating $\bfU_j$ and $\bfU_1$
and can be considered as both Type I and Type II simultaneously.
Non-conventional twins are solutions of (\ref{twin_rel2}) that are not
generated by a two-fold transformation in $\calP$ but that persist under
arbitrary small perturbations of $\bfU_1$, and non-generic twins
are solutions of (\ref{twin_rel2}) that do not persist
under arbitrary small perturbations of $\bfU_1$ and therefore can be considered
as associated to special choices of the lattice parameters.  Both non-generic
and non-conventional twins do not in general have a mirror symmetry relation
across the interface.   Or, more precisely, if atom positions on each
side of interface are constructed using the Cauchy-Born rule
and non-generic or non-conventional
solutions of (\ref{twin_rel2}),
then generally there will be no mirror symmetry relating the
atom positions across the interface.  Noticing this fact from a
purely experimental viewpoint in LaNbO$_4$, Li
referred to these structures as ``domains'' rather
than twins in his thesis \citep{jian_95}.

In the Appendix we show that all solutions of
(\ref{twin_rel2}) can be expressed in a common form by simple formulas.
In particular, these formulas include Types I/II, Compound, non-conventional
and non-generic twins,
as well as cases that may occur with other symmetries that have not yet been
classified.
Our analysis of the cofactor conditions below relies only on the presence of
these formulas, so we use this framework below.  Our formulas have the
same form as for Type I/II twins with an associated two-fold rotation
(which is given by an explicit formula), but this two-fold rotation is
not generally in $\calP$.  For this reason we here use the terminology
of Li and call these general solutions  {\it Type I domains} and
{\it Type II domains} (see also the case of {\it Compound domains}
defined below).
It can be seen from the Appendix that these domains are
twins with respect to a mythical
symmetry, not the symmetry of lattices of austenite and martensite
consistent with the framework above.

The analysis, under the hypotheses on $\bfU, \hat{\bfU}$ given
above, that all solutions of (\ref{twin_rel2}) (and therefore of
(\ref{twin_rel}))
are Type I, Type II or Compound domains is given in the Appendix.
The proposition
given there implies that if $\hat{\bfR}, \bfa, \bfn$ satisfy
(\ref{twin_rel2}), then there is a unit vector $\hat{\bfe}$
such that
\beq
 \hat{\bfU} =  (-\bfI + 2 \hat{\bfe} \otimes \hat{\bfe})\bfU
  (-\bfI + 2 \hat{\bfe} \otimes \hat{\bfe}),   \label{ehati}
\eeq
and it therefore follows by standard results (see \citet{kaushik_03})
 that there are two
solutions $(\bfR_I, \bfa_I \otimes \bfn_I)$ and
$(\bfR_{II}, \bfa_{II} \otimes \bfn_{II})$
of (\ref{twin_rel2}) given by
\beq
\begin{array}{lll}
\text{Type I } ~&\mathbf{n}_{I} = \hat{\mathbf{e}}, ~ &\mathbf{a}_I =
2 (\dfrac{\mathbf{U}^{-1} \hat{\mathbf{e}}}{|\mathbf{U}^{-1}
\hat{\mathbf{e}}|^2} - \mathbf{U} \hat{\mathbf{e}}),  \\
\text{Type II } ~ &\mathbf{n}_{II} = {2}(\hat{\mathbf{e}} -
\dfrac{\mathbf{U}^2 \hat{\mathbf{e}}}{|\mathbf{U} \hat{\mathbf{e}}|^2}),
 ~ &\mathbf{a}_{II} =  \mathbf{U} \hat{\mathbf{e}}.
\end{array} \label{typeI_twin}
\eeq
Following this specification of $\bfa_I \otimes \bfn_I$ and
$\bfa_{II} \otimes \bfn_{II}$, the corresponding
rotations $\bfR_I$ and $\bfR_{II}$ can be calculated from
(\ref{twin_rel2}).  Note that by changing $\bfa \to \rho\, \bfa$
and $\bfn \to (1/\rho) \bfn$, $\rho \ne 0$, we do not change
$\bfa \otimes \bfn$, so these individual vectors are not uniquely determined
by the solution.  This situation occurs widely below, and so
statements about uniqueness or numbers of solutions always refer to
the diadic $\bfa \otimes \bfn$ rather than the individual vectors.
This observation can be used to normalize $\bfn$, up to
$\pm$, but we do not do that in this paper.

As seen from Corollary \ref{cd} of the Appendix, there are cases
in which $\bfU$ and $\hat{\bfU}$ are related as in (\ref{ehati}) by
{\it two} nonparallel unit vectors $\hat{\bfe}_1, \hat{\bfe}_2$.
This apparently gives rise to four solutions of (\ref{twin_rel2})
via (\ref{typeI_twin}), but these solutions cannot be distinct due to the
fact that there are at most two solutions
$\hat{\bfR}, \bfa \otimes \bfn$  of (\ref{twin_rel2})
according to Prop. 4 of  \citet{ball_james_87}.  One solution
can be considered Type I for $\hat{\bfe}_1$ and
Type II for $\hat{\bfe}_2$ and the other is
Type II for $\hat{\bfe}_1$ and Type I for $\hat{\bfe}_2$.
In the conventional cases of
twins, these degenerate solutions are interpreted as Compound twins.
Corollary \ref{cd}
and (\ref{typeI_twin}) show that the same situation can arise in the general
case of the Appendix.  Therefore we use  the following terminology
throughout the rest of this paper.
We call the solutions given in (\ref{typeI_twin})  {\it Type I/II domains}
in the case that there is one and only one unit vector $\hat{\bfe}$
satisfying (\ref{ehati}) (up to $\pm$) and
$\bfa_I \otimes \bfn_I/\bfa_{II} \otimes \bfn_{II}$ is given
by the first line/second line  of  (\ref{typeI_twin}). In cases where there
are two nonparallel unit vectors satisfying (\ref{ehati}), we call the
resulting pair of solutions {\it Compound domains}.

Compound domains are characterized below.
\begin{proposition}  \label{lem1} (Compound domains) Assume
that $\bfU \in \mathbb R^{3\times3}_{\rm +sym}$.
Let $|\hat{\bfe}_1| = 1$ be given,  define
$\hat{\bfU} = (-\bfI + 2 \hat{\bfe}_1 \otimes \hat{\bfe}_1)\bfU
(-\bfI + 2 \hat{\bfe}_1 \otimes \hat{\bfe}_1)$ and suppose
$\hat{\bfU} \ne \bfU$.
There is a second unit vector $\hat{\bfe}_2$, not parallel to $\hat{\bfe}_1$,
satisfying $\hat{\bfU} = (-\bfI + 2 \hat{\bfe}_2 \otimes \hat{\bfe}_2)\bfU
(-\bfI + 2 \hat{\bfe}_2 \otimes \hat{\bfe}_2)$  if and only if
$\hat{\bfe}_1$ is perpendicular to an eigenvector of $\bfU$.
In the case that $\hat{\bfe}_1$ is perpendicular to an eigenvector of $\bfU$,
$\hat{\bfe}_2$ is unique up to $\pm$ and is
perpendicular to both $\hat{\bfe}_1$ and that eigenvector.

Supposing that $\hat{\bfe}_1$ is perpendicular to
an eigenvector $|\bfv| = 1$ of $\bfU\ (\ne \hat{\bfU})$
and $\hat{\bfe}_2  = \bfv \times \hat{\bfe}_1$, then the two solutions
$\bfa_C^1 \otimes \bfn_C^1,\ \bfa_C^2 \otimes \bfn_C^2$ of (\ref{twin_rel2})
can be written
\beqs
 \bfn_C^1 &=& \hat{\bfe}_1, \quad  \bfa_C^1 = \xi \bfU \hat{\bfe}_2,
 \quad \quad  {\rm where} \ \
\xi = 2 \frac{\hat{\bfe}_2 \cdot \bfU^{-2} \hat{\bfe}_1}
{\hat{\bfe}_1 \cdot \bfU^{-2} \hat{\bfe}_1}, \nonumber \\
 \bfn_C^2 &=& \hat{\bfe}_2, \quad  \bfa_C^2 = \eta \bfU \hat{\bfe}_1,
 \quad \quad  {\rm where} \ \
\eta = -2 \frac{\hat{\bfe}_2 \cdot \bfU^{2} \hat{\bfe}_1}
{\hat{\bfe}_1 \cdot \bfU^{2} \hat{\bfe}_1}.
\label{comp}
\eeqs
 \end{proposition}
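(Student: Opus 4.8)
The plan is to prove the two halves of the biconditional separately, then verify the explicit formulas \eqref{comp} by direct substitution. Throughout, write $\bfQ_1 = -\bfI + 2\hat{\bfe}_1\otimes\hat{\bfe}_1$ and $\bfQ_2 = -\bfI + 2\hat{\bfe}_2\otimes\hat{\bfe}_2$ for the two $180^\circ$ rotations about $\hat{\bfe}_1$ and $\hat{\bfe}_2$; both lie in $\mathrm{SO}(3)$ since they are reflections composed with $-\bfI$. The condition $\hat{\bfU} = \bfQ_2\bfU\bfQ_2$ together with $\hat{\bfU} = \bfQ_1\bfU\bfQ_1$ is equivalent to $\bfQ_1\bfQ_2\,\bfU = \bfU\,\bfQ_1\bfQ_2$ (using $\bfQ_i^{-1}=\bfQ_i$), i.e.\ $\bfU$ commutes with the rotation $\bfR := \bfQ_1\bfQ_2$. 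Now $\bfR$ is a product of two $180^\circ$ rotations, hence itself a rotation by some angle $2\vartheta$ about the axis $\hat{\bfe}_1\times\hat{\bfe}_2$ (the common perpendicular), where $\vartheta$ is the angle between $\hat{\bfe}_1$ and $\hat{\bfe}_2$; since $\hat{\bfe}_2\not\parallel\hat{\bfe}_1$, $\bfR\ne\bfI$. A symmetric positive-definite matrix commutes with a nontrivial rotation $\bfR$ if and only if the rotation axis of $\bfR$ is an eigenvector of $\bfU$ and $\bfU$ acts as a multiple of the identity on the orthogonal complement; in particular the axis $\hat{\bfe}_1\times\hat{\bfe}_2$ must be an eigenvector of $\bfU$. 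This will be the key structural lemma, and I expect the cleanest route is the elementary spectral argument: decompose $\rz^3$ into eigenspaces of $\bfU$ and observe which subspaces a rotation can preserve.

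For the ``only if'' direction, suppose such an $\hat{\bfe}_2$ exists. By the above, $\bfU$ commutes with $\bfR=\bfQ_1\bfQ_2\ne\bfI$, so the axis $\bfv := \hat{\bfe}_1\times\hat{\bfe}_2/|\hat{\bfe}_1\times\hat{\bfe}_2|$ is a unit eigenvector of $\bfU$, and $\hat{\bfe}_1\perp\bfv$ as desired. (One must also use $\hat\bfU\ne\bfU$ to exclude the degenerate possibility that $\bfQ_1$ itself commutes with $\bfU$, which would make $\bfR$ trivial or the construction vacuous; this is where the hypothesis $\hat\bfU\ne\bfU$ enters, guaranteeing $\hat{\bfe}_1$ is not itself an eigenvector and that $\vartheta\ne0,\pi$.) For the ``if'' direction, suppose $\hat{\bfe}_1\perp\bfv$ for some unit eigenvector $\bfv$ of $\bfU$. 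Set $\hat{\bfe}_2 := \bfv\times\hat{\bfe}_1$, a unit vector perpendicular to both $\bfv$ and $\hat{\bfe}_1$, hence not parallel to $\hat{\bfe}_1$. Then $\{\hat{\bfe}_1,\hat{\bfe}_2,\bfv\}$ is an orthonormal frame and one checks directly that $\bfQ_1\bfU\bfQ_1 = \bfQ_2\bfU\bfQ_2$: in this frame $\bfQ_1$ fixes $\hat{\bfe}_1$ and negates $\hat{\bfe}_2,\bfv$, while $\bfQ_2$ fixes $\hat{\bfe}_2$ and negates $\hat{\bfe}_1,\bfv$, so $\bfQ_1\bfQ_2$ negates $\hat{\bfe}_1,\hat{\bfe}_2$ and fixes $\bfv$, which is an eigenvector of $\bfU$ — hence $\bfU$ commutes with $\bfQ_1\bfQ_2$, giving $\hat\bfU=\bfQ_2\bfU\bfQ_2$. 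Uniqueness of $\hat{\bfe}_2$ up to sign: any valid $\hat{\bfe}_2$ forces the axis $\hat{\bfe}_1\times\hat{\bfe}_2$ to be an eigenvector of $\bfU$ in the plane perpendicular to $\hat{\bfe}_1$; since $\hat{\bfe}_1\ne$ eigenvector (as $\hat\bfU\ne\bfU$), the eigenvalue on that plane is simple (the other two eigenvalues being distinct from it, or if not, then every direction works but then $\hat\bfU=\bfU$, contradiction), so the eigenvector direction $\bfv$ is determined up to sign, and $\hat{\bfe}_2$ is then determined up to sign as the third member of the orthonormal frame.

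It remains to verify the formulas \eqref{comp}. By \eqref{typeI_twin} applied with the single vector $\hat{\bfe}_1$, the Type I solution is $\bfn = \hat{\bfe}_1$, $\bfa = 2(\bfU^{-1}\hat{\bfe}_1/|\bfU^{-1}\hat{\bfe}_1|^2 - \bfU\hat{\bfe}_1)$. Using the orthonormal eigenframe adapted to $\bfv$ (write $\bfU\hat{\bfe}_1 = \alpha_{11}\hat{\bfe}_1 + \alpha_{12}\hat{\bfe}_2$ since $\bfU$ maps the $\bfv^\perp$-plane to itself and $\bfv$ is an eigenvector), one shows $\bfa$ is parallel to $\bfU\hat{\bfe}_2$ and computes the scalar multiple; the expression $\xi = 2(\hat{\bfe}_2\cdot\bfU^{-2}\hat{\bfe}_1)/(\hat{\bfe}_1\cdot\bfU^{-2}\hat{\bfe}_1)$ emerges from $|\bfU^{-1}\hat{\bfe}_1|^2 = \hat{\bfe}_1\cdot\bfU^{-2}\hat{\bfe}_1$ and the identity $\bfU^{-1}\hat{\bfe}_1 - |\bfU^{-1}\hat{\bfe}_1|^2\bfU\hat{\bfe}_1 \parallel \bfU\hat{\bfe}_2$, which follows since both sides lie in the $\bfv^\perp$-plane and are orthogonal to $\bfU\hat{\bfe}_1$ (for the left side, take the dot product with $\bfU\hat{\bfe}_1$ and simplify). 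The Type II solution from \eqref{typeI_twin} with $\hat{\bfe}_1$ gives $\bfa=\bfU\hat{\bfe}_1$, $\bfn = 2(\hat{\bfe}_1 - \bfU^2\hat{\bfe}_1/|\bfU\hat{\bfe}_1|^2)$, and an analogous computation in the eigenframe shows $\bfn\parallel\hat{\bfe}_2$ with the stated $\eta$ after rescaling the dyad $\bfa\otimes\bfn$ so that the normal is exactly $\hat{\bfe}_2$ and the slip direction absorbs the factor; the sign in $\eta$ comes from the orientation of $\hat{\bfe}_2$ relative to $\bfU\hat{\bfe}_1$. The main obstacle is purely bookkeeping — matching the normalizations in \eqref{typeI_twin}, which fix particular representatives of $\bfa\otimes\bfn$, against the normalizations $\bfn_C^1=\hat{\bfe}_1$, $\bfn_C^2=\hat{\bfe}_2$ chosen here — rather than any conceptual difficulty; all the geometry is controlled by the single fact that $\bfU$ preserves the plane $\{\hat{\bfe}_1,\hat{\bfe}_2\}$.
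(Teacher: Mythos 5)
Your overall route is sound and genuinely different from the paper's in two places: for the ``only if'' direction you argue via commutation of $\bfU$ with $\bfR=\bfQ_1\bfQ_2$ (the paper instead quotes Corollary \ref{cd} of its Appendix to get orthogonality, then shows $\hat{\bfe}_1\times\hat{\bfe}_2$ is an eigenvector), and for the formulas you simplify the Type I/II expressions (\ref{typeI_twin}) directly in the adapted frame (the paper instead runs the ``four apparent solutions must collapse to two'' argument via Ball--James). However, there is a genuine gap, and it sits exactly where your stated structural lemma is false. A symmetric positive-definite $\bfU$ commutes with a $180^\circ$ rotation about $\bfw$ if and only if $\bfw$ is an eigenvector of $\bfU$; no isotropy on $\bfw^\perp$ is required (e.g.\ ${\rm diag}(1,2,3)$ commutes with $-\bfI+2\bfe_3\otimes\bfe_3$). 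Taken literally, your ``iff'' would force $\bfU$ to be a multiple of the identity on ${\rm span}\{\hat{\bfe}_1,\hat{\bfe}_2\}$, making $\hat{\bfe}_1$ an eigenvector and $\hat{\bfU}=\bfU$, i.e.\ it would say compound domains never exist. The half you actually invoke (the axis of a nontrivial commuting rotation is an eigenvector) is correct, so your existence conclusion ``$\hat{\bfe}_1$ is perpendicular to an eigenvector'' stands.

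The gap is in the clause that $\hat{\bfe}_2$ is perpendicular to $\hat{\bfe}_1$ and unique up to $\pm$. Knowing that the axis $\hat{\bfe}_1\times\hat{\bfe}_2$ is the eigenvector $\bfv$ only gives $\hat{\bfe}_2\perp\bfv$, which leaves a whole circle of candidates in $\bfv^\perp$; calling $\hat{\bfe}_2$ ``the third member of the orthonormal frame'' assumes the orthogonality you are supposed to prove. The repair is the angle dichotomy implicit in your own setup: $\bfR=\bfQ_1\bfQ_2$ is a rotation by $2\vartheta$ about $\bfv$, where $\vartheta$ is the angle between $\hat{\bfe}_1$ and $\hat{\bfe}_2$; if $2\vartheta\ne\pi$, a symmetric matrix commuting with $\bfR$ must be isotropic on $\bfv^\perp$, so $\hat{\bfe}_1$ would be an eigenvector of $\bfU$ and $\hat{\bfU}=\bfU$, a contradiction; hence $\vartheta=\pi/2$ and $\hat{\bfe}_2=\pm\,\bfv\times\hat{\bfe}_1$, with $\bfv$ itself determined up to sign by the eigenvalue argument you sketched (the paper gets this orthogonality instead from Corollary \ref{cd}). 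Separately, a small slip in the formula verification: $\bfU^{-1}\hat{\bfe}_1-|\bfU^{-1}\hat{\bfe}_1|^2\,\bfU\hat{\bfe}_1$ and $\bfU\hat{\bfe}_2$ are both orthogonal to $\bfU^{-1}\hat{\bfe}_1$, not to $\bfU\hat{\bfe}_1$ (indeed $\bfU\hat{\bfe}_2\cdot\bfU\hat{\bfe}_1=\hat{\bfe}_2\cdot\bfU^{2}\hat{\bfe}_1$ is generically nonzero); with that correction, or simply by expanding $\bfU^{\mp2}\hat{\bfe}_1$ in the frame $\{\hat{\bfe}_1,\hat{\bfe}_2,\bfv\}$, your derivation of $\xi$ and $\eta$ is correct and is arguably cleaner than the paper's.
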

\begin{proof} Suppose $\hat{\bfe}_1 \cdot \bfv = 0$ for some $|\bfv| = 1$
satisfying $\bfU \bfv = \bfv$.  Define $\hat{\bfe}_2 = \hat{\bfe}_1 \times \bfv$
so that $\hat{\bfe}_1, \hat{\bfe}_2, \bfv = 0$ is an orthonormal basis.
Then, $(-\bfI + 2 \hat{\bfe}_1 \otimes \hat{\bfe}_1)
(-\bfI + 2 \hat{\bfe}_2 \otimes \hat{\bfe}_2) = -\bfI + 2 \bfv \otimes \bfv$.
Since $(-\bfI + 2 \bfv \otimes \bfv) \bfU (-\bfI + 2 \bfv \otimes \bfv) = \bfU$,
we have
\beq
(-\bfI + 2 \hat{\bfe}_2 \otimes \hat{\bfe}_2)\bfU
(-\bfI + 2 \hat{\bfe}_2 \otimes \hat{\bfe}_2)  =
(-\bfI + 2 \hat{\bfe}_1 \otimes \hat{\bfe}_1)\bfU
(-\bfI + 2 \hat{\bfe}_1 \otimes \hat{\bfe}_1).  \label{two}
\eeq

Conversely, if there are two nonparallel unit vectors
$\hat{\bfe}_1, \hat{\bfe}_2$
satisfying (\ref{two}), then
by Corollary \ref{cd} of the Appendix, $\hat{\bfe}_1 \cdot \hat{\bfe}_2 = 0$.
Let $\bfv =  \hat{\bfe}_1 \times \hat{\bfe}_2 $, so that $|\bfv|  = 1$
and $(-\bfI + 2 \hat{\bfe}_1 \otimes \hat{\bfe}_1)
(-\bfI + 2 \hat{\bfe}_2 \otimes \hat{\bfe}_2) = -\bfI + 2 \bfv \otimes \bfv$.
Hence it follows from (\ref{two}) that
$(-\bfI + 2 \bfv \otimes \bfv) \bfU (-\bfI + 2 \bfv \otimes \bfv) = \bfU$.
Operating the latter on $\bfv$  it is seen that $\bfv$ is an eigenvector
of $\bfU$, so $\hat{\bfe}_1$ is perpendicular to an eigenvector of $\bfU$.

Suppose that  $\hat{\bfe}_1$ is perpendicular to an eigenvector
$|\bfv| = 1$ of $\bfU$ and $\hat{\bfe}_2  = \bfv \times \hat{\bfe}_1$.
Then $\hat{\bfU} :=  (-\bfI + 2 \hat{\bfe}_1 \otimes \hat{\bfe}_1)\bfU
  (-\bfI + 2 \hat{\bfe}_1 \otimes \hat{\bfe}_1) =
  (-\bfI + 2 \hat{\bfe}_2 \otimes \hat{\bfe}_2)\bfU
  (-\bfI + 2 \hat{\bfe}_2 \otimes \hat{\bfe}_2) \ne \bfU$, so that there are
by (\ref{ehati}) and (\ref{typeI_twin}) apparently four solutions
of (\ref{twin_rel2}): $\bfa_I^1 \otimes \bfn_I^1$,
$\bfa_{II}^1 \otimes \bfn_{II}^1$  based on $\hat{\bfe}_1$ and
$\bfa_I^2 \otimes \bfn_I^2$,
$\bfa_{II}^2 \otimes \bfn_{II}^2$  based on $\hat{\bfe}_2$.
By Prop. 4 of \citet{ball_james_87} these must reduce to
two.  This can happen in two possible ways:
\beq
\bfa_I^1 \parallel \bfa_{II}^2, \   \bfn_I^1 \parallel \bfn_{II}^2, \
\bfa_{II}^1 \parallel \bfa_{I}^2, \   \bfn_{II}^1 \parallel \bfn_{I}^2 \quad {\rm or}
\quad \bfa_I^1 \parallel \bfa_{II}^1,\    \bfn_I^1 \parallel \bfn_{II}^1, \
\bfa_I^2 \parallel \bfa_{II}^2,\    \bfn_I^2 \parallel \bfn_{II}^2.  \label{pars}
\eeq
By direct calculation the latter cannot happen, as it contradicts
$\hat{\bfU} \ne {\bfU}$.  The former leads to the simplification of the
formulas (\ref{typeI_twin}) given by (\ref{comp}).
\end{proof}

According to results in the Appendix, there are at most two
nonparallel unit vectors $\hat{\bfe}$ satisfying (\ref{ehati}), if
$\hat{\bfU} \ne \bfU$.  The statement to the left of the ``or'' in
(\ref{pars}) may be interpreted by saying that Compound domains are ``both
Type I and Type II'', although our precise definitions above make
Types I, II and Compound mutually exclusive.

%
 

\subsection{Crystallographic theory of martensite}
\label{crythe}

The crystallographic theory of martensite
 concerns conditions for which a twinned laminate
and the austenite phase are interpolated by a transition layer so that
the energy in the layer tends to zero as the twins are
made finer and finer.  The construction yields a sequence of
deformations $\bfy^{(k)},\ k = 1, 2, \dots$, where $k$ can be taken
as the inverse width of the transition layer,  such that
\beq
\int_\Omega \vphi(\nabla \mathbf{y}^{(k)}(\bfx), \theta_c)\, d \mathbf{x}
\to 0 \quad {\rm as}\ k \to \infty.   \label{minseq}
\eeq
Under the hypothesis of \citet[Prof. 2]{ball_james_87}, a suitable sequence $\mathbf y^{(k)}$ satisfying \eqref{minseq} converges strongly in a suitable function space to a deformation $\mathbf y$, as $k \to \infty$, such that
\beq
\nabla \mathbf y = f(\mathbf U + \mathbf a \otimes \mathbf n) + (1 - f) \mathbf U, \quad a.e.\label{weakconv}
\eeq
in the vicinity of the austenite/martensite interface and on the side of martensite. 

The equations of the crystallographic theory are built on a solution
of (\ref{twin_rel2}).  Assuming (\ref{twin_rel2}) holds,
the equations of the crystallographic theory of martensite are
\beq
\textbf{R} [f (\textbf{U} + \textbf{a} \otimes \textbf{n}) +
(1 - f) \textbf{U}] - \textbf{I} = \textbf{b} \otimes \textbf{m},
\label{ctm}
\eeq
which are to be solved for the volume fraction $0 \le f \le 1$
of the Type I/II or Compound domains,
a possible rigid rotation  $\bfR \in\ $SO(3)
of the whole martensite
laminate, and vectors
$\bfb, \bfm \in \mathbb R^{3}$.

\section{Cofactor conditions}
\label{cofcondtn}
 
The cofactor
conditions are necessary and sufficient  that (\ref{ctm})
has a solution $(f, \bfR, \bfb \otimes \bfm)$ for every $0 \le f \le 1$.
 
\begin{theorem}
\label{thm:cofactor}
Let $\mathbf U \in\mathbb R^{3\times3}_{\rm +sym}$ and define
$\hat{\bfU} = (-\bfI + 2 \hat{\bfe} \otimes \hat{\bfe})\bfU
  (-\bfI + 2 \hat{\bfe} \otimes \hat{\bfe})$ for some $|\hat{\bfe}| = 1$,
so that there exist $\hat{\bfR} \in {\rm SO(3)}$ and
$\bfa, \bfn \in \rz^3$ such that
\beq
  \hat{\bfR} \hat{\bfU} = \bfU +  \bfa \otimes \bfn.  \label{hy}
\eeq
Assume $\bfa \ne 0, \bfn \ne 0$.
The equation (\ref{ctm}) of the crystallographic theory
has a solution  $\bfR \in {\rm SO(3)}$,
$\bfb, \bfm \in \mathbb R^{3}$ for each
$f\in[0,1]$
if and only if the following {\bf cofactor conditions} are satisfied:
\begin{align*}
&\lambda_2 = 1, \text{ where $\lambda_2$ is the middle eigenvalue of
$\mathbf U$,}\tag{CC1}\label{cc1}\\
&\mathbf a\cdot\mathbf U\cof(\mathbf U^2-\mathbf I)\mathbf n = 0, \tag{CC2}\label{cc2}\\
&\textup{tr} \mathbf U^2 - \textup{det}\mathbf U^2 -
\dfrac{|\mathbf a|^2 |\mathbf n|^2}{4} - 2 \geqslant 0. \tag{CC3}\label{cc3}
\end{align*}
\end{theorem}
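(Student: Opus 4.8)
The plan is to convert (\ref{ctm}) into an eigenvalue condition on the one–parameter family $\bfY_f := f(\bfU+\bfa\otimes\bfn)+(1-f)\bfU=\bfU+f\,\bfa\otimes\bfn$ and then to read off (CC1)--(CC3) from two scalar polynomials in $f$. By the classical criterion for solvability of $\bfR\bfF-\bfI=\bfb\otimes\bfm$ with $\bfR\in{\rm SO}(3)$ (\citet{ball_1992}; see also \citet{kaushik_03}), equation (\ref{ctm}) has a solution for a given $f$ if and only if the middle eigenvalue of $\bfC_f:=\bfY_f^T\bfY_f$ equals $1$. For a symmetric positive–definite $3\times3$ matrix $\bfC$, the number $1$ is its middle eigenvalue exactly when $\det(\bfC-\bfI)=0$ (so $1$ is an eigenvalue, the other two being some $\mu,\nu>0$) and $\operatorname{tr}\cof(\bfC-\bfI)=(\mu-1)(\nu-1)\le 0$ (so that $1$ lies between $\mu$ and $\nu$). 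Hence the theorem reduces to the claim that $g(f):=\det(\bfC_f-\bfI)=0$ \emph{and} $h(f):=\operatorname{tr}\cof(\bfC_f-\bfI)\le 0$ for all $f\in[0,1]$ if and only if (CC1)$\wedge$(CC2)$\wedge$(CC3).

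The second step is to record what the twin relation (\ref{hy}) forces. Because $\hat{\bfU}=(-\bfI+2\ehat\otimes\ehat)\bfU(-\bfI+2\ehat\otimes\ehat)$ is an orthogonal conjugate of $\bfU$, it has the same spectrum as $\bfU$; in particular $\det\bfY_1=\det(\hat{\bfR}\hat{\bfU})=\det\bfU$, while the matrix–determinant lemma gives $\det\bfY_f=\det\bfU\,(1+f\,\bfa\cdot\bfU^{-1}\bfn)$. Therefore $\bfa\cdot\bfU^{-1}\bfn=0$ and $\det\bfC_f=(\det\bfU)^2$ for every $f$. Taking traces in $\bfC_1=(\bfU+\bfa\otimes\bfn)^T(\bfU+\bfa\otimes\bfn)=\hat{\bfU}^2$ and using $\operatorname{tr}\hat{\bfU}^2=\operatorname{tr}\bfU^2$ yields $2\,\bfa\cdot\bfU\bfn=-|\bfa|^2|\bfn|^2$. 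The identity $\bfa\cdot\bfU^{-1}\bfn=0$ also makes $\bfY_f^{-1}=\bfU^{-1}-f(\bfU^{-1}\bfa)\otimes(\bfU^{-1}\bfn)$ affine in $f$ (Sherman--Morrison), so both $\operatorname{tr}\bfC_f$ and $\operatorname{tr}\cof\bfC_f=(\det\bfU)^2\operatorname{tr}\bfC_f^{-1}$ --- and hence $g$ and $h$ --- are polynomials in $f$ of degree at most $2$. Finally $\bfC_0=\bfU^2$ and $\bfC_1=\hat{\bfU}^2$ have the same spectrum, so $g(0)=g(1)=\det(\bfU^2-\bfI)$.

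The conditions now come off one at a time. Solvability at $f=0$ is exactly $\lambda_2(\bfU^2)=1$, i.e.\ (CC1); assume it. By Jacobi's formula $g'(0)=\operatorname{tr}\!\big[\cof(\bfU^2-\bfI)\,(\bfU\bfa\otimes\bfn+\bfn\otimes\bfU\bfa)\big]=2\,\bfa\cdot\bfU\cof(\bfU^2-\bfI)\bfn$, so the unique quadratic with $g(0)=g(1)=\det(\bfU^2-\bfI)=0$ and this derivative at $0$ is $g(f)=2\big(\bfa\cdot\bfU\cof(\bfU^2-\bfI)\bfn\big)f(1-f)$; thus $g\equiv 0$ on $[0,1]$ $\iff$ (CC2). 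Assuming (CC1) and (CC2), $\bfC_f$ has eigenvalues $1,\mu(f),\nu(f)$ with $\mu\nu=(\det\bfU)^2$ and $\mu+\nu=\operatorname{tr}\bfC_f-1$, so $h(f)=(\det\bfU)^2-\operatorname{tr}\bfC_f+2$; inserting $\operatorname{tr}\bfC_f=\operatorname{tr}\bfU^2+2f\,\bfa\cdot\bfU\bfn+f^2|\bfa|^2|\bfn|^2=\operatorname{tr}\bfU^2-f(1-f)|\bfa|^2|\bfn|^2$ gives $h(f)=(\det\bfU)^2-\operatorname{tr}\bfU^2+2+f(1-f)|\bfa|^2|\bfn|^2$. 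Since $\bfa,\bfn\ne 0$, $h$ is strictly concave in $f$ with maximum at $f=\tfrac12$, so $h\le 0$ on $[0,1]$ $\iff$ $h(\tfrac12)=-\big(\operatorname{tr}\bfU^2-\det\bfU^2-\tfrac14|\bfa|^2|\bfn|^2-2\big)\le 0$ $\iff$ (CC3). Putting the pieces together: (\ref{ctm}) has a solution for every $f\in[0,1]$ $\iff$ $g\equiv 0$ and $h\le 0$ on $[0,1]$ $\iff$ (CC1)$\wedge$(CC2)$\wedge$(CC3), in both directions.

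The main obstacle is the degree count in the second step. A priori $g$ and $h$ are polynomials of high degree in $f$, and it is the identity $\bfa\cdot\bfU^{-1}\bfn=0$ forced by (\ref{hy}) that collapses them to quadratics; without it $\det\bfC_f$ is not constant, and neither the normal form $g(f)=\det(\bfU^2-\bfI)+(\cdots)f(1-f)$ nor the parabola argument for $h$ goes through. The only other delicate point is recognizing that the coefficient of $f$ in $g$ is precisely the cofactor bilinear form of (CC2) --- Jacobi's formula together with the symmetry of $\cof(\bfU^2-\bfI)$ takes care of that. Everything else is bookkeeping with the principal invariants of $\bfC_f$.
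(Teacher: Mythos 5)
Your proposal is correct and takes essentially the same route as the paper: the same reduction to the condition that the middle eigenvalue of $(\bfU+f\,\bfn\otimes\bfa)(\bfU+f\,\bfa\otimes\bfn)$ equal $1$, the same use of $\det$ and $\mathrm{tr}$ of (\ref{hy}) to get $\bfn\cdot\bfU^{-1}\bfa=0$ and $2\,\bfa\cdot\bfU\bfn=-|\bfa|^2|\bfn|^2$, the same quadratic $g(f)$ with $g(0)=g(1)$ whose vanishing encodes (\ref{cc1})--(\ref{cc2}) via $g'(0)$, and the same $f=\tfrac12$ extremal argument for (\ref{cc3}). The only cosmetic differences are your Sherman--Morrison/invariant bookkeeping for the degree bound (the paper instead factors out $\det\bfU$ and notes the rank-deficient coefficient of $f$) and your phrasing of the middle-eigenvalue test through $\mathrm{tr}\,\cof(\bfC_f-\bfI)\le 0$, which is just the negative of the paper's product $(1-\lambda_1(f)^2)(\lambda_3(f)^2-1)$.
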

\begin{proof}  The proof follows Section 5 of \citet{ball_james_87}.
As is well known, e.g.,  \citet[Prop. 4]{ball_james_87}, given
$\bfU \in \mathbb R^{3\times3}_{\rm +sym}$, there is a solution
$\bfR \in {\rm SO(3)}$, $\bfc, \bfd \in \mathbb R^3$ of
$\bfR \bfU - \bfI = \bfc \otimes \bfd$ if and only if the
middle eigenvalue of $\bfU$ is 1. Since $\bfU$ has middle eigenvalue equal to 1
if and only if $\bfU^2$ has middle eigenvalue equal to 1, the satisfaction
of (\ref{ctm}) for every $0 \le f \le 1$ is equivalent to the condition
that the middle eigenvalue of the positive-definite
symmetric matrix $(\bfU + f \bfn \otimes \bfa)(\bfU +
f \bfa \otimes \bfn)$
is 1 for every $0 \le f \le 1$.  An eigenvalue of
$(\bfU + f \bfn \otimes \bfa)(\bfU + f \bfa \otimes \bfn)$ is 1
for every $0 \le f \le 1$ if and only if $g(f)$ vanishes identically
on $[0,1]$, where
\beq
g(f) = \det [(\bfU + f \bfn \otimes \bfa)(\bfU +
f \bfa \otimes \bfn ) - \bfI].
\eeq
Taking the determinant of (\ref{hy}), we see that
$\bfn \cdot \bfU^{-1}\bfa = 0$.  Hence,
$\det(\bfU + f \bfa \otimes \bfn)  = \det \bfU \ne 0$ and
\beqs
g(f) &=& (\det \bfU) \det[\bfU + f \bfa \otimes \bfn -
(\bfU + f \bfn \otimes \bfa)^{-1}] \nonumber \\
&=& (\det \bfU) \det[\bfU - \bfU^{-1} + f (\bfa \otimes \bfn +
 \bfU^{-1}\bfn \otimes \bfU^{-1}\bfa)].
\eeqs
Since the matrix multiplying $f$ is singular, then $g(f)$ is at most quadratic
in $f$.  In addition, by the hypothesis (\ref{hy}), it follows that
\beq
(\bfU + \bfn \otimes \bfa)(\bfU + \bfa \otimes \bfn)
 =  \hat{\bfU}^2 =
 (-\bfI + 2 \hat{\bfe} \otimes \hat{\bfe})
 \bfU^2 (-\bfI + 2 \hat{\bfe} \otimes \hat{\bfe}). \label{hy1}
 \eeq
Hence, putting $\bfQ = -\bfI + 2 \hat{\bfe} \otimes \hat{\bfe}$, we have that
\beq
   g(1) = \det (\bfQ \bfU^2 \bfQ^T - \bfI) = \det (\bfU^2 - \bfI) = g(0).
\eeq
A quadratic $g$ satisfying $g(0) =g(1)$ is expressible in the form
$g(f) = a(f(f-1)) + b$.  Hence, $g$ vanishes identically on $[0,1]$
if and only if $a = b = 0$.  In particular, $b = 0$ is (\ref{cc1})
and $0 = a = -g'(0)$ is (\ref{cc2}).  We have therefore shown that
(\ref{cc1}),  (\ref{cc2}) are necessary and sufficient that an
eigenvalue of $(\bfU + f \bfn \otimes \bfa)(\bfU + f \bfa \otimes \bfn)$
is 1 for every $0 \le f \le 1$.  Let the eigenvalues of
$(\bfU + f \bfn \otimes \bfa)(\bfU + f \bfa \otimes \bfn)$ be
$1, \lambda_1(f)^2, \lambda_3(f)^2$ with no particular ordering assumed.
Taking the trace of (\ref{hy1}) we have the identity
$2 \bfn \cdot \bfU \bfa  +  |\bfa|^2 |\bfn|^2 = 0$.  Using this identity
and the relations
\beqs
 1 + \lambda_1 (f)^2 + \lambda_3(f)^2 &=& {\rm tr}((\bfU +
 f \bfn \otimes \bfa)(\bfU + f \bfa \otimes \bfn))  \nonumber \\
 &=& {\rm tr}(\bfU^2) + 2 f \bfn \cdot \bfU\bfa + f^2 |\bfa|^2 |\bfn|^2,
\eeqs
and $\lambda_1(f)^2 \lambda_3(f)^2 = \det \bfU^2$, we get
\beq
 (1 - \lambda_1(f)^2)(\lambda_3(f)^2 - 1) = {\rm tr}(\bfU^2) - \det \bfU^2
  + (f^2 - f) |\bfa|^2 |\bfn|^2 - 2.  \label{eigencalc}
\eeq
Assuming (\ref{cc1}) and  (\ref{cc2}) are satisfied, (\ref{cc3}) holds
as a necessary condition that 1 is the middle eigenvalue at $f = 1/2$.
Since $f^2 - f \ge -1/4$ it is then seen that (\ref{cc1}), (\ref{cc2})
and (\ref{cc3}) are sufficient that the middle eigenvalue of
$(\bfU + f \bfn \otimes \bfa)(\bfU + f \bfa \otimes \bfn)$ is 1,
completing the proof.
\end{proof}
Noticed that $\lambda_1(f)$ and $\lambda_3(f)$ are chosen to be positive values for every $0 \leq f \leq 1$. Then it is clear that $0 < \lambda_1 = \lambda_1(0)$ and $\lambda_3 = \lambda_3(0)$ are eigenvalues of $\mathbf U$.
\begin{cor} \label{cor2}  Assume the hypotheses of Theorem \ref{thm:cofactor}
and suppose the cofactor conditions are satisfied.  Then the
other two eigenvalues $\lambda_1(f)^2 \le 1 \le \lambda_3(f)^2$
of $(\bfU + f \bfn \otimes \bfa)(\bfU +
f \bfa \otimes \bfn)$ satisfy $\lambda_1(f)^2 < 1 < \lambda_3(f)^2$
for $0 \le f \le 1$ and $f \ne 1/2$.   In particular, the
eigenvalues $\lambda_1,\ \lambda_3$
of $\bfU$ satisfy $\lambda_1 < 1< \lambda_3$.
\end{cor}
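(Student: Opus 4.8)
The plan is to read everything off the identity \eqref{eigencalc} established in the proof of Theorem~\ref{thm:cofactor}: once \eqref{cc1} and \eqref{cc2} hold, one has, for every $f \in [0,1]$,
\[
(1 - \lambda_1(f)^2)(\lambda_3(f)^2 - 1) = {\rm tr}(\bfU^2) - \det \bfU^2 + (f^2 - f)\,|\bfa|^2 |\bfn|^2 - 2 .
\]
First I would fix the sign structure of the left-hand side. Because the cofactor conditions make $1$ the \emph{middle} eigenvalue of the positive-definite symmetric matrix $(\bfU + f\bfn\otimes\bfa)(\bfU + f\bfa\otimes\bfn)$ for all $f \in [0,1]$, the remaining two eigenvalues, labelled as in the statement, obey $\lambda_1(f)^2 \le 1 \le \lambda_3(f)^2$; hence each of the two factors $1 - \lambda_1(f)^2$ and $\lambda_3(f)^2 - 1$ is nonnegative.

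Next I would show the right-hand side is strictly positive when $f \ne 1/2$. On $[0,1]$ the quadratic $f \mapsto f^2 - f$ attains its minimum value $-1/4$ \emph{only} at $f = 1/2$, so $f^2 - f > -1/4$ for $f \in [0,1]$ with $f \ne 1/2$; since the standing hypothesis $\bfa \ne 0,\ \bfn \ne 0$ gives $|\bfa|^2 |\bfn|^2 > 0$, this yields $(f^2-f)\,|\bfa|^2|\bfn|^2 > -\frac{|\bfa|^2|\bfn|^2}{4}$ for all such $f$. Substituting this into the displayed identity and invoking \eqref{cc3}, which is precisely ${\rm tr}\,\bfU^2 - \det\bfU^2 - \frac{|\bfa|^2|\bfn|^2}{4} - 2 \ge 0$, I conclude that the right-hand side of \eqref{eigencalc} is strictly positive whenever $f \ne 1/2$. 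Therefore the product $(1-\lambda_1(f)^2)(\lambda_3(f)^2-1)$ is strictly positive while each of its factors is nonnegative, which forces each factor to be strictly positive; that is, $\lambda_1(f)^2 < 1 < \lambda_3(f)^2$ for every $f \in [0,1]$ with $f \ne 1/2$.

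For the final assertion I would specialize to $f = 0$: there $(\bfU + f\bfn\otimes\bfa)(\bfU + f\bfa\otimes\bfn) = \bfU^2$, whose eigenvalues are $\lambda_1^2, 1, \lambda_3^2$, where $\lambda_1 = \lambda_1(0)$ and $\lambda_3 = \lambda_3(0)$ are the eigenvalues of $\bfU$ other than $\lambda_2 = 1$ (as recorded in the observation preceding the corollary). Since $0 \ne 1/2$, the strict inequality just proved gives $\lambda_1^2 < 1 < \lambda_3^2$, and positive-definiteness of $\bfU$ upgrades this to $\lambda_1 < 1 < \lambda_3$. I do not expect a genuine obstacle here; the only point needing care is the logical step at the end of the second paragraph, where the two factors are a priori known only to be nonnegative, so one must combine ``product $>0$'' with ``both factors $\ge 0$'' rather than attempting to sign either factor in isolation.
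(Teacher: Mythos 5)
Your proof is correct and follows essentially the same route as the paper: both rest on the identity \eqref{eigencalc} together with \eqref{cc3} and the fact that $f^2 - f > -1/4$ on $[0,1]$ except at $f = 1/2$, the only difference being that you argue directly (right-hand side strictly positive, hence both nonnegative factors strictly positive) while the paper phrases the same computation as a contradiction at a hypothetical $f^*$ where a factor vanishes. Your handling of the endpoint case $f=0$ and the upgrade from $\lambda_1^2<1<\lambda_3^2$ to $\lambda_1<1<\lambda_3$ also matches the paper's remarks.
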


\begin{proof} Suppose  we have
some $0 \le f^* \le 1$ such that $\lambda_1 (f^*)^2 = 1$  or
$\lambda_3 (f^*)^2 = 1$.  Then, the formula  (\ref{eigencalc})
gives
\beq
0 =  (1 - \lambda_1(f^*)^2)(\lambda_3(f^*)^2 - 1) = {\rm tr}\bfU^2 - \det \bfU^2
  + ((f^*)^2 - f^*) |\bfa|^2 |\bfn|^2 - 2   \label{fstar}
\eeq
That is,
\beq
 {\rm tr}\bfU^2 - \det \bfU^2  - \dfrac{|\mathbf a|^2 |\mathbf n|^2}{4}
   - 2 = - \left((f^*)^2 - f^* + \frac{1}{4} \right) |\bfa|^2 |\bfn|^2.
   \label{eigencalc1}
\eeq
Since $(f^2 - f + \frac{1}{4}) > 0$ for $0 \le f \le 1$, $f \ne 1/2$,
then  (\ref{eigencalc1}) violates (\ref{cc3}) except
at $f^* = 1/2$, completing the proof.
\end{proof}
This result above shows incidentally that the cofactor conditions
cannot be satisfied in the classic cubic-to-tetragonal case, for in that
case the presence of a repeated eigenvalue would imply that either
$\lambda_1 = 1$ or $\lambda_3 = 1$, contradicting Corollary \ref{cor2}.

\begin{cor}  \label{cor4}
 Assume the hypotheses of Theorem \ref{thm:cofactor}
and suppose the cofactor conditions are satisfied. There
are two distinct solutions $(\bfR_f^{\kappa} \in {\rm SO(3)},\
\bfb_f^{\kappa} \otimes \bfm_f^{\kappa})$, $\kappa \in \{\pm 1\}$,
of the equation  (\ref{ctm}) of the crystallographic theory
for each $0 \le f \le 1, f \ne 1/2$.  The solutions for
$\bfb_f^{\kappa},\ \bfm_f^{\kappa}$ are
\beqs
 \bfb_f^{\kappa} &=& \frac{\rho}{\sqrt{\lambda_3(f)^2 - \lambda_1(f)^2}}
 \left( \lambda_3(f) \sqrt{1-\lambda_1(f)^2}\ \bfv_1(f)
 + \kappa \lambda_1(f) \sqrt{\lambda_3(f)^2-1}\ \bfv_3(f) \right)     \nonumber \\
 \bfm_f^{\kappa} &=& \frac{1}{\rho}\frac{\lambda_3(f) - \lambda_1(f)}
 {\sqrt{\lambda_3(f)^2 - \lambda_1(f)^2}}
 \left( - \sqrt{1-\lambda_1(f)^2}\ \bfv_1(f)  + \kappa  \sqrt{\lambda_3(f)^2-1}\
 \bfv_3(f) \right),   \label{bm}
\eeqs
$\kappa \in \{\pm 1\}$, $\rho \ne 0$ and $\bfv_1(f), \bfv_3(f)$
are orthonormal. (Note that the presence of $\rho$ does not affect
$\bfb_f^{\kappa} \otimes \bfm_f^{\kappa}$.)
\end{cor}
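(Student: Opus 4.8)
The plan is to recognize the equation (\ref{ctm}) of the crystallographic theory as the classical equation for compatibility of austenite with a homogeneously strained martensite, and then to read off the count and form of its solutions from the standard result. First I would simplify the bracketed matrix: $f(\bfU + \bfa\otimes\bfn) + (1-f)\bfU = \bfU + f\,\bfa\otimes\bfn =: \bfG_f$, and by the identity $\bfn\cdot\bfU^{-1}\bfa = 0$ established in the proof of Theorem \ref{thm:cofactor} we have $\det\bfG_f = \det\bfU > 0$, so $\bfG_f^{-1}$ exists and (\ref{ctm}) reads $\bfR\,\bfG_f - \bfI = \bfb\otimes\bfm$. As already noted in the proof of Theorem \ref{thm:cofactor}, this is solvable in $\bfR\in{\rm SO(3)},\ \bfb,\bfm\in\rz^3$ if and only if the middle eigenvalue of the positive-definite symmetric matrix $\bfC_f := (\bfU + f\,\bfn\otimes\bfa)(\bfU + f\,\bfa\otimes\bfn) = \bfG_f^T\bfG_f$ equals $1$, which under the cofactor conditions holds for every $f\in[0,1]$ by Theorem \ref{thm:cofactor}.

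Next I would quote the precise count and explicit form of the solutions from \citet[Prop. 4]{ball_james_87} (see also \citet{ball_1992}): when the eigenvalues $\lambda_1(f)^2\le 1\le\lambda_3(f)^2$ of $\bfC_f$ satisfy the strict inequalities $\lambda_1(f)^2 < 1 < \lambda_3(f)^2$, there are exactly two solution dyads $\bfb\otimes\bfm$, one for each $\kappa\in\{\pm1\}$. By Corollary \ref{cor2} these strict inequalities hold for every $f\in[0,1]$ with $f\ne 1/2$, which yields the asserted pair of distinct solutions for each such $f$. To obtain the formulas I would write the solvability relation as $\bfC_f - \bfI = \bfb\otimes\bfm + \bfm\otimes\bfb + |\bfb|^2\,\bfm\otimes\bfm$; since the right side is annihilated by the eigenvector $\bfv_2(f)$ of $\bfC_f$ belonging to the eigenvalue $1$, and since $\bfC_f - \bfI$ has rank $2$, both $\bfb$ and $\bfm$ are forced into ${\rm span}\{\bfv_1(f),\bfv_3(f)\}$, where $\bfv_1(f),\bfv_3(f)$ are the orthonormal eigenvectors of $\bfC_f$ for $\lambda_1(f)^2,\lambda_3(f)^2$. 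Expanding in that orthonormal pair reduces the problem to a $2\times2$ algebraic system whose solution — after absorbing the free scaling $\bfb\mapsto\rho\,\bfb$, $\bfm\mapsto\rho^{-1}\bfm$, which does not change $\bfb\otimes\bfm$ — is exactly (\ref{bm}), with the two branches indexed by $\kappa$. The rotation is then recovered as $\bfR_f^\kappa = (\bfI + \bfb_f^\kappa\otimes\bfm_f^\kappa)\bfG_f^{-1}$; that $\bfR_f^\kappa\in{\rm SO(3)}$ is automatic, because $\bfG_f^T\bfG_f = \bfC_f = (\bfI + \bfm_f^\kappa\otimes\bfb_f^\kappa)(\bfI + \bfb_f^\kappa\otimes\bfm_f^\kappa)$ forces $(\bfR_f^\kappa)^T\bfR_f^\kappa = \bfI$, while a determinant sign check forces $\det\bfR_f^\kappa = 1$; distinctness of the two dyads for $\kappa=\pm1$ is immediate from $\lambda_1(f)^2 < 1 < \lambda_3(f)^2$.

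The only genuine labour is the $2\times2$ bookkeeping and the sign checks needed to put the solution in the exact form (\ref{bm}) — this is the same computation carried out in Section 5 of \citet{ball_james_87} — so the main obstacle is essentially cosmetic: keeping the normalizations and signs in (\ref{bm}) consistent with a fixed orientation of the eigenbasis $\bfv_1(f),\bfv_3(f)$, and ensuring the value $f = 1/2$ is kept out of the strict-inequality regime where the two-solution count could drop, which is precisely what Corollary \ref{cor2} supplies. Everything substantive is already contained in Theorem \ref{thm:cofactor} and Corollary \ref{cor2}.
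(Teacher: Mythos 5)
Your proposal is correct and follows essentially the same route as the paper: existence for every $f$ from Theorem \ref{thm:cofactor}, the strict inequalities $\lambda_1(f)^2 < 1 < \lambda_3(f)^2$ for $f \ne 1/2$ from Corollary \ref{cor2}, and the explicit two-solution formulas from Prop.~4 of \citet{ball_james_87}. The extra detail you supply (the eigenbasis reduction and the recovery of $\bfR_f^\kappa$) merely reproduces the content of that cited proposition, so nothing substantive differs from the paper's argument.
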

\begin{proof}
The existence of a solution of  (\ref{ctm}) for each $0 \le f \le 1$
follows from Theorem \ref{thm:cofactor}.  The fact that there are
two distinct solutions for $f \ne 1/2$ follows from Corollary  \ref{cor2}.
In particular, the conclusion $\lambda_1(f)^2 < 1 < \lambda_3(f)^2$ for
$f\ne 1/2$, and the explicit characterization (\ref{bm}) of the vectors
$\bfb_f^{\kappa},  \bfm_f^{\kappa}$ given by Prop. 4 of
\citet{ball_james_87}
shows that $(\bfR_f^{+1},\ \bfb_f^{+1} \otimes \bfm_f^{+1})
 \ne (\bfR_f^{-1},\ \bfb_f^{-1} \otimes \bfm_f^{-1})$.
\end{proof}
\begin{cor}
\label{cor3} Assume the hypotheses of Theorem \ref{thm:cofactor}.
In the cofactor conditions,
 (\ref{cc2}) can be replaced by the simpler form
\beq
\left(\mathbf a\cdot\hat{\mathbf v}_2\right)
\left(\mathbf n\cdot\hat{\mathbf v}_2\right) = 0,   \tag{CC2'}
\label{cc2'}
\eeq
where $\hat{\bfv}_2$ is a normalized  eigenvector of $\bfU$
corresponding to its middle eigenvalue.
That is, assuming the hypotheses of Theorem \ref{thm:cofactor},
(\ref{cc1}), (\ref{cc2}), (\ref{cc3}) $\Longleftrightarrow$
(\ref{cc1}), (\ref{cc2'}), (\ref{cc3}).
\end{cor}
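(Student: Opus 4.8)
\textbf{Proof proposal for Corollary \ref{cor3}.} The plan is to compute the left-hand side of (\ref{cc2}) explicitly under the standing assumption (\ref{cc1}) and exhibit it as a scalar multiple of the left-hand side of (\ref{cc2'}), the scalar being nonzero under the cofactor conditions. Write the spectral decomposition $\mathbf U = \sum_{i=1}^{3}\lambda_i\,\hat{\mathbf v}_i\otimes\hat{\mathbf v}_i$ with $\{\hat{\mathbf v}_1,\hat{\mathbf v}_2,\hat{\mathbf v}_3\}$ orthonormal and, by (\ref{cc1}), $\lambda_2 = 1$ the middle eigenvalue. Then $\mathbf U^2-\mathbf I$ is symmetric with the same eigenvectors and eigenvalues $\lambda_1^2-1$, $0$, $\lambda_3^2-1$. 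Using $\cof\mathbf A = (\det\mathbf A)\mathbf A^{-1}$ for invertible symmetric $\mathbf A$ and passing to the limit (or by direct computation in the eigenbasis), one gets
\[
\cof(\mathbf U^2-\mathbf I) = (\lambda_1^2-1)(\lambda_3^2-1)\,\hat{\mathbf v}_2\otimes\hat{\mathbf v}_2 .
\]
Since $\mathbf U\hat{\mathbf v}_2 = \lambda_2\hat{\mathbf v}_2 = \hat{\mathbf v}_2$, left multiplication by $\mathbf U$ leaves this rank-one tensor unchanged, and contracting with $\mathbf a$ and $\mathbf n$ yields the key identity
\[
\mathbf a\cdot\mathbf U\,\cof(\mathbf U^2-\mathbf I)\,\mathbf n = (\lambda_1^2-1)(\lambda_3^2-1)\,(\mathbf a\cdot\hat{\mathbf v}_2)(\mathbf n\cdot\hat{\mathbf v}_2).
\]

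Granting this identity, the equivalence follows at once. For the implication (\ref{cc1}),(\ref{cc2'}),(\ref{cc3}) $\Rightarrow$ (\ref{cc1}),(\ref{cc2}),(\ref{cc3}): if (\ref{cc2'}) holds then $(\mathbf a\cdot\hat{\mathbf v}_2)(\mathbf n\cdot\hat{\mathbf v}_2)=0$, so the right-hand side of the identity vanishes and (\ref{cc2}) holds; this direction uses only (\ref{cc1}). For the converse, I would invoke Corollary \ref{cor2}: assuming the full cofactor conditions (\ref{cc1}),(\ref{cc2}),(\ref{cc3}), that corollary gives $\lambda_1 < 1 < \lambda_3$, whence $(\lambda_1^2-1)(\lambda_3^2-1) < 0$, in particular nonzero; since the left-hand side of the identity is $0$ by (\ref{cc2}), we divide the nonzero factor out and obtain $(\mathbf a\cdot\hat{\mathbf v}_2)(\mathbf n\cdot\hat{\mathbf v}_2)=0$, which is (\ref{cc2'}). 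Corollary \ref{cor2} also ensures $\lambda_2=1$ is a simple eigenvalue, so $\hat{\mathbf v}_2$ is well defined up to sign and (\ref{cc2'}) is unambiguous.

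There is no genuine obstacle here; the only point needing care is that the prefactor $(\lambda_1^2-1)(\lambda_3^2-1)$ must be nonzero in the forward direction, and this is exactly where Corollary \ref{cor2} is used. If one preferred to avoid that reference, the same conclusion is available directly from (\ref{cc3}): using the trace identity $2\,\mathbf n\cdot\mathbf U\mathbf a + |\mathbf a|^2|\mathbf n|^2 = 0$ established in the proof of Theorem \ref{thm:cofactor} (so $|\mathbf a|^2|\mathbf n|^2 > 0$ because $\mathbf a,\mathbf n\neq 0$) together with the elementary identity $\textup{tr}\,\mathbf U^2 - \det\mathbf U^2 - 2 = -(\lambda_1^2-1)(\lambda_3^2-1)$ valid when $\lambda_2=1$, condition (\ref{cc3}) forces $-(\lambda_1^2-1)(\lambda_3^2-1)\ge |\mathbf a|^2|\mathbf n|^2/4 > 0$, again giving $(\lambda_1^2-1)(\lambda_3^2-1)\neq 0$.
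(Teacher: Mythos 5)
Your proof is correct and follows essentially the same route as the paper: under (\ref{cc1}) one computes $\mathbf a\cdot\mathbf U\,\cof(\mathbf U^2-\mathbf I)\,\mathbf n=(\lambda_1^2-1)(\lambda_3^2-1)(\mathbf a\cdot\hat{\mathbf v}_2)(\mathbf n\cdot\hat{\mathbf v}_2)$ in the eigenbasis and then uses Corollary \ref{cor2} to see the prefactor is nonzero, with the reverse implication being trivial. Your optional remark that (\ref{cc3}) alone (with $\mathbf a,\mathbf n\neq 0$) already forces $(\lambda_1^2-1)(\lambda_3^2-1)<0$ is a correct, slightly more self-contained variant, but it does not change the substance of the argument.
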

\begin{proof}
Assuming the hypotheses of Theorem \ref{thm:cofactor} and
(\ref{cc1}), (\ref{cc2}), (\ref{cc3}), we write
$\bfU = \lambda_1 \hat{\bfv}_1 \otimes \hat{\bfv}_1 +
\hat{\bfv}_2 \otimes \hat{\bfv}_2 +
\lambda_3 \hat{\bfv}_3 \otimes \hat{\bfv}_3$ using ordered eigenvalues,
which, according to Corollary \ref{cor2}, satisfy
$\lambda_1 < 1 < \lambda_3$.  Then (\ref{cc3}) becomes
\beq
 (\lambda_1^2 -1)(\lambda_3^2 - 1) \left(\mathbf a\cdot\hat{\mathbf v}_2\right)
\left(\mathbf n\cdot\hat{\mathbf v}_2\right) = 0,
\eeq
implying (\ref{cc2'}).  Trivially,
(\ref{cc1}), (\ref{cc2'}), (\ref{cc3}) $\Longrightarrow$
(\ref{cc1}), (\ref{cc2}), (\ref{cc3}).
\end{proof}

\section{Microstructures possible under the cofactor conditions} \label{cof:condn}

Under the mild hypotheses of Theorem \ref{thm:cofactor}, the
satisfaction of the cofactor conditions
implies the existence of low energy transition layers in austenite/martensite
interfaces for every volume fraction
$0 \le f \le 1$, in the sense of (\ref{minseq}), i.e., in the sense of the
crystallographic theory.  In many cases the transition
layer can be eliminated altogether, resulting in zero elastic energy in these
cases.  These cases are identified here.

Let the hypotheses of Theorem \ref{thm:cofactor} be satisfied and
write the implied solutions
of the crystallographic theory as above in the form
$\bfR_f^{\kappa} \in$ SO(3), $\bfb_f^{\kappa}, \bfm_f^{\kappa} \in \rz^3$,
$\kappa \in \{\pm 1\}$, so we have
\beqs
 \hat{\bfR} \hat{\bfU} - \bfU = \bfa \otimes \bfn, \quad
 \hat{\bfU}\!\!\! &=&  \!\!\!
(-\bfI + 2 \hat{\bfe} \otimes \hat{\bfe})\bfU(-\bfI +
2 \hat{\bfe} \otimes \hat{\bfe}) ,\quad |\hat{\bfe}| = 1,  \nonumber \\
 \bfR_f^{\kappa} [ f(\bfU + \bfa \otimes \bfn) + (1-f) \bfU] &=&
 \bfI + \bfb_f^{\kappa} \otimes \bfm_f^{\kappa}, \quad 0 \le f \le 1,\ \kappa = \pm 1.
\eeqs
At $f = 0$ we have
\beq
  \bfR_0^{\kappa} \bfU = \bfI + \bfb_0^{\kappa} \otimes \bfm_0^{\kappa},
  \label{f=0case}
\eeq
which describes the implied austenite/single variant martensite interface.
 According to Corollary \ref{cor4} specialized to the
 case $f = 0 \ne 1/2$, we know there
are two distinct solutions $(\bfR_0^{\kappa} \in$ SO(3),
$\bfb_0^{\kappa} \otimes \bfm_0^{\kappa})$, $\kappa = \pm1$ of (\ref{f=0case}).
Values of $\bfb_0^{\kappa}, \bfm_0^{\kappa}$ belonging to these solutions
can be written explicitly  as
\beqs
 \bfb_0^{\kappa} &=& \frac{\rho}{\sqrt{\lambda_3^2 - \lambda_1^2}}
 \left( \lambda_3 \sqrt{1-\lambda_1^2}\ \bfv_1
 + \kappa \lambda_1 \sqrt{\lambda_3^2-1}\ \bfv_3\right)     \nonumber \\
 \bfm_0^{\kappa} &=& \frac{1}{\rho}\frac{\lambda_3 - \lambda_1}{\sqrt{\lambda_3^2 -
 \lambda_1^2}}
 \left( - \sqrt{1-\lambda_1^2}\ \bfv_1  + \kappa  \sqrt{\lambda_3^2-1}\
 \bfv_3 \right),  \quad \kappa \in \{\pm 1\}, \label{b0m0}
\eeqs
for some $\rho \ne 0$ by specialization of (\ref{bm}), where
$0 < \lambda_1 < 1 < \lambda_3$ are the ordered eigenvalues of $\bfU$
with corresponding orthonormal eigenvectors $\bfv_1, \bfv_2, \bfv_3$.

\subsection{Preliminary results for Types I and II domains}
 
Proposition \ref{lem1} says that if the cofactor conditions
are satisfied for Type I or Type II domains, then
$\hat{\bfU} = (-\bfI + 2 \hat{\bfe} \otimes \hat{\bfe})\bfU
(-\bfI + 2 \hat{\bfe} \otimes \hat{\bfe})$ holds for some
$\hat{\bfe}$ with $\bfv_2 \cdot \hat{\bfe} \ne 0$.
In fact, only one unit vector $\hat{\bfe}$ satisfies this
condition up to $\pm$.

The condition $\bfv_2 \cdot \hat{\bfe} \ne 0$ implies that the main
condition (\ref{cc2'}) (see  Corollary \ref{cor3}) of the cofactor
conditions simplifies for Types I and II domains.

\begin{proposition} \label{propI/II}
Assume $\bfU = \lambda_1 \bfv_1 \otimes \bfv_1 + \bfv_2 \otimes \bfv_2
+ \lambda _3 \bfv_3 \otimes \bfv_3 $, $0 < \lambda_1 < 1 < \lambda_3$,   and
 $\hat{\bfU} = (-\bfI + 2 \hat{\bfe} \otimes \hat{\bfe})\bfU
  (-\bfI + 2 \hat{\bfe} \otimes \hat{\bfe}) \ne \bfU$, $|\hat{\bfe}| = 1$.
  Recall Corollary \ref{cor3}.
\begin{enumerate}
\item For Type I domains
$(\bfa_I \cdot \bfv_2)(\bfn_I \cdot \bfv_2) = 0 \Longleftrightarrow
\bfa_I \cdot \bfv_2 =0 \Longleftrightarrow|\bfU^{-1} \hat{\bfe}| = 1$. \label{Ichar}
\item For Type II domains
$(\bfa_{II} \cdot \bfv_2)(\bfn_{II} \cdot \bfv_2) = 0
\Longleftrightarrow \bfn_{II} \cdot \bfv_2 = 0 \Longleftrightarrow|\bfU\, \hat{\bfe}| = 1$.
\label{IIchar}
\end{enumerate}
\end{proposition}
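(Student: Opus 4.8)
The plan is to reduce the whole proposition to one algebraic fact about $\bfU$: since the middle eigenvalue is $\lambda_2 = 1$ (which is forced here by the assumption $0<\lambda_1<1<\lambda_3$), the corresponding eigenvector satisfies $\bfU\bfv_2 = \bfv_2$, and hence also $\bfU^{-1}\bfv_2 = \bfv_2$ and $\bfU^2\bfv_2 = \bfv_2$. Combined with the symmetry of $\bfU$ this lets me slide $\bfU^{\pm1}$ and $\bfU^2$ off $\bfv_2$ inside any inner product. The second ingredient is the observation, recorded just before the statement, that for Type I or Type II domains (as opposed to Compound domains) the unit vector $\hat{\bfe}$ is \emph{not} perpendicular to any eigenvector of $\bfU$; in particular $\hat{\bfe}\cdot\bfv_2 \ne 0$. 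This is exactly the content of Proposition \ref{lem1}: if $\hat{\bfe}$ were orthogonal to an eigenvector of $\bfU$, a second nonparallel unit vector would satisfy \eqref{ehati} and we would be in the Compound case, contrary to hypothesis.

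For Type I, I would use $\bfn_I = \hat{\bfe}$ from \eqref{typeI_twin}, so $\bfn_I\cdot\bfv_2 = \hat{\bfe}\cdot\bfv_2 \ne 0$; this immediately gives the first equivalence, namely $(\bfa_I\cdot\bfv_2)(\bfn_I\cdot\bfv_2)=0 \Leftrightarrow \bfa_I\cdot\bfv_2 = 0$. To get the second equivalence, I evaluate $\bfa_I\cdot\bfv_2$ using $\bfa_I = 2\big(\bfU^{-1}\hat{\bfe}/|\bfU^{-1}\hat{\bfe}|^2 - \bfU\hat{\bfe}\big)$: since $\bfU^{-1}\hat{\bfe}\cdot\bfv_2 = \hat{\bfe}\cdot\bfU^{-1}\bfv_2 = \hat{\bfe}\cdot\bfv_2$ and likewise $\bfU\hat{\bfe}\cdot\bfv_2 = \hat{\bfe}\cdot\bfv_2$, this collapses to $\bfa_I\cdot\bfv_2 = 2(\hat{\bfe}\cdot\bfv_2)\big(|\bfU^{-1}\hat{\bfe}|^{-2} - 1\big)$. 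As $\hat{\bfe}\cdot\bfv_2 \ne 0$, this vanishes precisely when $|\bfU^{-1}\hat{\bfe}| = 1$, completing part \ref{Ichar}.

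Part \ref{IIchar} is entirely parallel with the roles of $\bfa$ and $\bfn$ swapped. Here $\bfa_{II} = \bfU\hat{\bfe}$, so $\bfa_{II}\cdot\bfv_2 = \hat{\bfe}\cdot\bfU\bfv_2 = \hat{\bfe}\cdot\bfv_2 \ne 0$, which reduces $(\bfa_{II}\cdot\bfv_2)(\bfn_{II}\cdot\bfv_2) = 0$ to $\bfn_{II}\cdot\bfv_2 = 0$; and with $\bfn_{II} = 2\big(\hat{\bfe} - \bfU^2\hat{\bfe}/|\bfU\hat{\bfe}|^2\big)$ and $\bfU^2\hat{\bfe}\cdot\bfv_2 = \hat{\bfe}\cdot\bfU^2\bfv_2 = \hat{\bfe}\cdot\bfv_2$ one gets $\bfn_{II}\cdot\bfv_2 = 2(\hat{\bfe}\cdot\bfv_2)\big(1 - |\bfU\hat{\bfe}|^{-2}\big)$, which vanishes exactly when $|\bfU\hat{\bfe}| = 1$. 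I do not anticipate a real obstacle: the computation is short and hinges only on $\bfU\bfv_2 = \bfv_2$; the single point needing care is to invoke Proposition \ref{lem1} (via the definition of Type I/II domains) to guarantee $\hat{\bfe}\cdot\bfv_2 \ne 0$, without which the stated biconditionals would fail.
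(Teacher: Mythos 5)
Your proposal is correct and follows essentially the same route as the paper: the paper's proof likewise invokes Proposition \ref{lem1} together with the definition of Type I/II domains to conclude $\hat{\bfe}\cdot\bfv_2 \ne 0$, and then derives both equivalences from the formulas (\ref{typeI_twin}) and $\bfU\bfv_2 = \bfv_2$, exactly the computation you write out explicitly. The only difference is that the paper leaves the short algebra implicit, whereas you display it; no gap either way.
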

\begin{proof} By  Proposition \ref{lem1} and the definitions
of Type I and II domains (which exclude the case of Compound domains), we
have $\hat{\bfe} \cdot \bfv_2 \ne 0$. The results then follow from
(\ref{typeI_twin})  and the condition
$\bfU \bfv_2  = \bfv_2$.
\end{proof}

Proposition \ref{propI/II} shows that one of the two main
cofactor conditions can be interpreted geometrically as the
condition that the vector $\hat{\bfe}$ which defines the twin
system (or, more generally, the domain system) lies on the intersection of
the strain ellipsoid, or inverse strain ellipsoid, and the unit sphere.

\subsection{Elimination of the transition layer in the austenite/martensite
interface for some Type I
domains}

The removal of the transition layer in the case of Type I domains
proceeds by proving the existence of a zero-energy triple junction.
The key is to prove that
$\bfR_1^{\kappa_*}  = \bfR_0^{\kappa}$ for suitable choices of
$\kappa, \kappa_* \in \{\pm1\}$.

\begin{theorem} \label{thmI} (Type I domains)  Assume the hypotheses of Theorem
\ref{thm:cofactor} and suppose the cofactor conditions are satisfied
using Type I domains.
There are particular choices of $\sigma, \sigma_* \in \{\pm 1\}$ such that
$\bfR_1^{\sigma_*}  = \bfR_0^{\sigma}$ and
$\bfb^{\sigma_*}_1  =  \xi\, \bfb^{\sigma}_0$ for some $\xi \ne 0$, so that
\beq
\bfR_0^{\sigma} \bfU  = \bfI +
\bfb_0^{\sigma} \otimes \bfm_0^{\sigma}, \quad
\bfR_0^{\sigma} (\bfU + \bfa_I \otimes \bfn_I) =
 \bfI + \bfb_0^{\sigma} \otimes \xi \bfm_1^{\sigma_*},   \label{cases}
\eeq
and therefore, by taking a convex combination of the equations in
(\ref{cases}), one of the two families of solutions of the
crystallographic theory can be written
\beq
 \bfR_0^{\sigma} [\bfU + f \bfa_I \otimes \bfn_I)] =
 \bfI + \bfb_0^{\sigma} \otimes \bigg(f \xi \bfm_1^{\sigma_*} +
 (1-f) \bfm_0^{\sigma} \bigg)
 \quad {\rm for\ all}\ 0 \le f \le 1.  \label{cc}
\eeq
The three deformation gradients
$\bfI,\ \bfR_0^{\sigma} \bfU,\ \bfR_0^{\sigma}\hat{\bfR} \hat{\bfU}$
can form a compatible austenite/martensite triple junction
in the sense that
\beq
\bfR_0^{\sigma} \bfU - \bfI = \bfb_0^{\sigma} \otimes \bfm_0^{\sigma},
\quad
\bfR_0^{\sigma}\hat{\bfR} \hat{\bfU} - \bfI = \bfb_0^{\sigma} \otimes \xi \bfm_1^{\sigma_*},
\quad
\bfR_0^{\sigma}\hat{\bfR} \hat{\bfU} - \bfR_0^{\sigma} \bfU = \bfR_0^{\sigma} \bfa_I \otimes \bfn_I.
\label{triple}
\eeq
There is a constant $c \ne 0$ such that $ c \bfn_I  = \xi \bfm_1^{\sigma_*} - \bfm_0^{\sigma}$,
so the three vectors $\bfm_0^{\sigma}, \bfm_1^{\sigma_*}$, and $\bfn_I$ lie in a plane.
\end{theorem}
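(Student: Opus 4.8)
\medskip
\noindent\textbf{Proof strategy.}
The plan is to collapse the whole statement onto the single identity $\bfR_1^{\sigma_*}=\bfR_0^{\sigma}$ for a suitable pair of signs $\sigma,\sigma_*$: granted this (and the accompanying $\bfb_1^{\sigma_*}\parallel\bfb_0^{\sigma}$), the displays $(\ref{cases})$, $(\ref{cc})$, $(\ref{triple})$ and the closure relation $c\,\bfn_I=\xi\bfm_1^{\sigma_*}-\bfm_0^{\sigma}$ will follow by bookkeeping, so essentially all the work goes into that identity. First I would rewrite the $f=0$ equation $(\ref{f=0case})$ (with $\kappa=\sigma$) as $\bfR_0^{\sigma}=(\bfI+\bfb_0^{\sigma}\otimes\bfm_0^{\sigma})\bfU^{-1}$, giving $\bfR_0^{\sigma}(\bfU+\bfa_I\otimes\bfn_I)-\bfI=\bfb_0^{\sigma}\otimes\bfm_0^{\sigma}+(\bfR_0^{\sigma}\bfa_I)\otimes\bfn_I$, and I would note that $\bfR_0^{\sigma}\bfa_I-\bfU^{-1}\bfa_I$ is always a scalar multiple of $\bfb_0^{\sigma}$. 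So it is enough to find a sign $\sigma$ with $\bfU^{-1}\bfa_I\parallel\bfb_0^{\sigma}$: for that sign $\bfR_0^{\sigma}\bfa_I=c\,\bfb_0^{\sigma}$ with $c\ne0$ (since $\bfa_I\ne0$), hence $\bfR_0^{\sigma}(\bfU+\bfa_I\otimes\bfn_I)-\bfI=\bfb_0^{\sigma}\otimes(\bfm_0^{\sigma}+c\,\bfn_I)$ has rank one, which by $(\ref{hy})$ (with $\bfn=\bfn_I$ for a Type I domain, so $\hat{\bfR}\hat{\bfU}=\bfU+\bfa_I\otimes\bfn_I$) says that $\bfR_0^{\sigma}\hat{\bfR}$ solves $\bfR'\hat{\bfU}=\bfI+\bfc\otimes\bfd$.

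Producing that sign is the crux, and it is the only step where the cofactor conditions enter essentially. The first part of Proposition~\ref{propI/II} turns the Type I form of $(\ref{cc2'})$ into $|\bfU^{-1}\hat{\bfe}|=1$, so $\bfa_I=2(\bfU^{-1}\hat{\bfe}-\bfU\hat{\bfe})$ and $\bfU^{-1}\bfa_I=2(\bfU^{-2}\hat{\bfe}-\hat{\bfe})$. Expanding $\hat{\bfe}=\alpha_1\bfv_1+\alpha_2\bfv_2+\alpha_3\bfv_3$ in the eigenframe of $\bfU$ and using $\lambda_2=1$ from $(\ref{cc1})$, the $\bfv_2$ component of $\bfU^{-2}\hat{\bfe}-\hat{\bfe}$ cancels, so $\bfU^{-1}\bfa_I=2[(\lambda_1^{-2}-1)\alpha_1\bfv_1+(\lambda_3^{-2}-1)\alpha_3\bfv_3]$ lies in ${\rm span}\{\bfv_1,\bfv_3\}$, the same plane that carries $\bfb_0^{\sigma}$ by $(\ref{b0m0})$. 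Imposing proportionality of these two planar vectors and then squaring, I expect the magnitude condition to reduce exactly to $(\lambda_1^{-2}-1)\alpha_1^2+(\lambda_3^{-2}-1)\alpha_3^2=0$, i.e.\ to $|\bfU^{-1}\hat{\bfe}|^2=|\hat{\bfe}|^2=1$, which is the hypothesis just used; the sign is then forced to the unique value $\sigma=-{\rm sgn}(\alpha_1\alpha_3)$ (here $\alpha_1,\alpha_3\ne0$, since otherwise $\hat{\bfU}=\bfU$). This short identity is the only non-formal computation in the proof.

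To finish I would appeal to uniqueness for $\bfR'\hat{\bfU}=\bfI+\bfc\otimes\bfd$. Since $\hat{\bfU}$ is an orthogonal conjugate of $\bfU$, it shares the eigenvalues $\lambda_1<1<\lambda_3$ (distinct by Corollary~\ref{cor2}), so by Prop.~4 of \citet{ball_james_87} that equation has exactly two solution pairs $(\bfR',\bfc\otimes\bfd)$. The $f=1$ case of $(\ref{ctm})$, after substituting $\bfU+\bfa_I\otimes\bfn_I=\hat{\bfR}\hat{\bfU}$, identifies those two pairs with $(\bfR_1^{\pm1}\hat{\bfR},\,\bfb_1^{\pm1}\otimes\bfm_1^{\pm1})$, which are distinct by Corollary~\ref{cor4}. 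Hence the pair $(\bfR_0^{\sigma}\hat{\bfR},\,\bfb_0^{\sigma}\otimes(\bfm_0^{\sigma}+c\,\bfn_I))$ built above is one of them: matching the rotations gives $\bfR_1^{\sigma_*}=\bfR_0^{\sigma}$ for some $\sigma_*$, and matching the rank-one dyads gives $\bfb_1^{\sigma_*}=\xi\,\bfb_0^{\sigma}$ for some $\xi\ne0$ together with $\xi\bfm_1^{\sigma_*}=\bfm_0^{\sigma}+c\,\bfn_I$, i.e.\ $c\,\bfn_I=\xi\bfm_1^{\sigma_*}-\bfm_0^{\sigma}$; this exhibits $\bfn_I$ as an element of ${\rm span}\{\bfm_0^{\sigma},\bfm_1^{\sigma_*}\}$, so the three normals are coplanar.

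The remaining assertions are then purely formal: the first line of $(\ref{cases})$ is $(\ref{f=0case})$ with $\kappa=\sigma$, and the second is $\bfR_0^{\sigma}(\bfU+\bfa_I\otimes\bfn_I)=\bfI+\bfb_0^{\sigma}\otimes\xi\bfm_1^{\sigma_*}$ from the previous step; $(\ref{cc})$ is the convex combination with weights $f$ and $1-f$ of the two lines of $(\ref{cases})$, using $f(\bfU+\bfa_I\otimes\bfn_I)+(1-f)\bfU=\bfU+f\,\bfa_I\otimes\bfn_I$; and $(\ref{triple})$ simply collects the two rank-one relations in $(\ref{cases})$ together with $(\ref{hy})$ premultiplied by $\bfR_0^{\sigma}$.
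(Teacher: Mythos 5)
Your proposal is correct, but it takes a genuinely different route from the paper's proof. The paper works backward from the $f=1$ solutions: it conjugates (\ref{f1}) by $\hat{\bfQ}=-\bfI+2\hat{\bfe}\otimes\hat{\bfe}$ to relate $(\bfb_1^{\kappa},\bfm_1^{\kappa})$ to $(\bfb_0^{\kappa},\bfm_0^{\kappa})$, uses (\ref{Iconds}) to pick the sign $\sigma$ with $\bfb_0^{\sigma}\cdot\hat{\bfe}=0$, then proves $\bfR_1^{\sigma_*}\bfv_2=\bfv_2$ and $\bfR_1^{\sigma_*}\bfa_I\parallel\bfb_0^{\sigma}$, and finally identifies $\bfR_1^{\sigma_*}=\bfR_0^{\sigma}$ by comparing with the $f=0$ equation (using $\bfb_0^{+1}\nparallel\bfb_0^{-1}$). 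You instead work forward from the $f=0$ solution: your eigenbasis computation showing $\bfU^{-1}\bfa_I=2(\bfU^{-2}-\bfI)\hat{\bfe}\parallel\bfb_0^{\sigma}$ for one sign rests on exactly the same identity (\ref{Iconds}) that the paper exploits through $\bfb_0^{\sigma}\cdot\hat{\bfe}=0$ (and the same $\sigma$ does both jobs), but you then exhibit $\bfR_0^{\sigma}\hat{\bfR}$ directly as a solution of $\bfR'\hat{\bfU}=\bfI+\bfc\otimes\bfd$ and conclude by the two-solution uniqueness of \citet{ball_james_87} applied at $f=1$ rather than at $f=0$. This dual organization is shorter: it dispenses with the conjugation step (\ref{b1m1}), the determinant argument and the intermediate fact $\bfR_1^{\sigma_*}\bfv_2=\bfv_2$, and it makes the geometric content explicit ($\bfU^{-1}\bfa_I$ lies along the habit-plane shear direction $\bfb_0^{\sigma}$); what it gives up are the structural by-products of the paper's route, such as the explicit relation $\bfb_1^{\sigma_*}=-\delta\,\bfb_0^{\sigma}$ obtained through $\hat{\bfQ}$, which the paper's surrounding remarks use. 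Two micro-points to spell out in a full write-up: $\bfv_1\cdot\hat{\bfe}$ and $\bfv_3\cdot\hat{\bfe}$ are both nonzero (else the magnitude relation forces $\hat{\bfe}\parallel\bfv_2$ and $\hat{\bfU}=\bfU$), which you noted; and the constructed dyad $\bfb_0^{\sigma}\otimes(\bfm_0^{\sigma}+c\,\bfn_I)$ cannot vanish, since otherwise $\hat{\bfU}=(\bfR_0^{\sigma}\hat{\bfR})^{T}$ would be orthogonal, contradicting $\lambda_1<1<\lambda_3$ from Corollary \ref{cor2}; this is needed before invoking the two-solution count at $f=1$.
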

\begin{proof}  By Proposition \ref{propI/II} we have for Type I domains
under the cofactor conditions,  $\bfa_I \cdot \bfv_2 = 0$ and
$|\bfU^{-1} \hat{\bfe}| = |\hat{\bfe}| = 1$.  The latter can be
written, alternatively,
\beq
 \hat{\bfe} \cdot (\bfU^{-2} - \bfI) \hat{\bfe} = 0
 \Longleftrightarrow \lambda_3 \sqrt{1- \lambda_1^2}\ (\bfv_1 \cdot \hat{\bfe})
 = \pm \lambda_1 \sqrt{\lambda_3^2 - 1} \ (\bfv_3 \cdot \hat{\bfe}). \label{Iconds}
\eeq
Note in passing that $\bfv_3 \cdot \hat{\bfe} \ne 0$, because, if this
were not the case, then it would follow by (\ref{Iconds}) and
Corollary \ref{cor2} that also  $\bfv_1 \cdot \hat{\bfe} = 0$,
so $\hat{\bfe} \parallel \bfv_2$.  But then it would follow that
$\hat{\bfU} = (-\bfI + 2 \hat{\bfe} \otimes \hat{\bfe}) \bfU
(-\bfI + 2 \hat{\bfe} \otimes \hat{\bfe}) = \bfU$ which is
forbidden.

By Corollary \ref{cor4}, we have two families of solutions of the
crystallographic theory that can be written
 $(\bfR_f^{\kappa} \in {\rm SO(3)},\
\bfb_f^{\kappa} \otimes \bfm_f^{\kappa})$, $\kappa \in \{\pm 1\}$,
$0 \le f \le 1$ and these are distinct if $f \ne 1/2$. Thus, at
$f = 1$,
\beq
\bfR_1^{\kappa} (\bfU + \bfa_I \otimes \bfn_I) =
\bfR_1^{\kappa} \hat{\bfR} \hat{\bfU} = \bfI +
\bfb_1^{\kappa} \otimes \bfm_1^{\kappa},  \quad \kappa \in \{\pm 1\}.
\label{f1}
\eeq
Using that $\hat{\bfU} = (-\bfI +
2 \hat{\bfe} \otimes \hat{\bfe}) \bfU (-\bfI + 2\hat{\bfe} \otimes \hat{\bfe})$
and pre- and post- multiplying (\ref{f1}) by the 180 degree rotation
$\hat{\bfQ} = (-\bfI +
2 \hat{\bfe} \otimes \hat{\bfe}) = \hat{\bfQ}^T$, we have that
\beq
\hat{\bfQ}\bfR_1^{\kappa} \hat{\bfR} \hat{\bfQ} \bfU = \bfI +
\hat{\bfQ}\bfb_1^{\kappa} \otimes \hat{\bfQ}\bfm_1^{\kappa},
\quad \kappa \in \{\pm 1\} \label{f=1case}
\eeq
Comparison of (\ref{f=1case}) with (\ref{f=0case}) shows that there is
a map $\hat{\sigma}: \{\pm 1\} \to \{\pm 1\}$ and $\delta \ne 0$ such that
$\hat{\bfQ} \bfb_1^{\hat{\sigma}(\kappa)} = \delta \bfb_0^{\kappa},\
\hat{\bfQ}  \bfm_1^{\hat{\sigma}(\kappa)}  =
(1/\delta) \bfm_0^{\kappa}$, i.e.,
\beq
\bfb_1^{\hat{\sigma}(\kappa)} =
\delta (-\bfI +
2 \hat{\bfe} \otimes \hat{\bfe})\bfb_0^{\kappa}, \quad
\bfm_1^{\hat{\sigma}(\kappa)}  =
\frac{1}{\delta} (-\bfI +
2 \hat{\bfe} \otimes \hat{\bfe}) \bfm_0^{\kappa}. \label{b1m1}
\eeq
We note from (\ref{f=0case}), (\ref{b0m0}) and (\ref{Iconds})  that
\beqs
   \bfb_0^{\kappa} \cdot \hat{\bfe}  &=& \frac{\rho}{\sqrt{\lambda_3^2 - \lambda_1^2}}
 \left( \lambda_3 \sqrt{1-\lambda_1^2} (\bfv_1 \cdot \hat{\bfe})
 + \kappa \lambda_1 \sqrt{\lambda_3^2-1} (\bfv_3 \cdot \hat{\bfe})\right)  \nonumber \\
&=& \frac{\rho  \lambda_1 \sqrt{\lambda_3^2-1} (\bfv_3 \cdot \hat{\bfe})}
{\sqrt{\lambda_3^2 - \lambda_1^2}}  (\pm 1 + \kappa).
\eeqs
Hence there is a particular choice $\kappa = \sigma \in \{ \pm 1 \}$
such that  $\bfb_0^{\sigma} \cdot \hat{\bfe} = 0$. Let
$\sigma_* = \hat{\sigma}(\sigma)$.  For these choices we have from
(\ref{b1m1}) that
\beq
\bfb_1^{\sigma_*} =
-\delta \bfb_0^{\sigma},  \label{b1m1-}
\eeq
so, in particular, $\bfb_1^{\sigma_*} \cdot \hat{\bfe} =
\bfb_1^{\sigma_*} \cdot \bfv_2 = 0$.

Take the determinant of (\ref{f1}) to observe that
 $1+ \bfb_1^{\sigma_*} \cdot \bfm_1^{\sigma_*} =
\det \bfR_1^{\sigma_*} \hat{\bfR} \hat{\bfU} = \det \bfU >0$.
Premultiply (\ref{f1}) by $(\bfR_1^{\sigma_*})^T$, take the transpose of the resulting
equation, operate the result on $\bfv_2$, and use that
$\bfU\bfv_2 = \bfv_2$ and $\bfa_I \cdot \bfv_2 = 0$
(Proposition \ref{propI/II}) to get
\beq
 \bfR_1^{\sigma_*}\bfv_2 = \bfv_2 -
( \bfb_1^{\sigma_*}\cdot \bfR_1^{\sigma_*}\bfv_2 ) \bfm_1^{\sigma_*}.
\label{interm}
\eeq
Dot (\ref{interm}) with $\bfb_1^{\sigma_*}$ and use that
$1 + \bfb_1^{\sigma_*} \cdot \bfm_1^{\sigma_*} > 0$:
\beq
\bfb_1^{\sigma_*} \cdot  \bfR_1^{\sigma_*}\bfv_2
= \frac{1}{( 1 +
 \bfb_1^{\sigma_*} \cdot \bfm_1^{\sigma_*})}\bfb_1^{\sigma_*} \cdot \bfv_2 = 0.
\label{interm1}
\eeq
(The latter follows from (\ref{b1m1-}).)  Equations (\ref{interm}) and
(\ref{interm1}) show that $ \bfR_1^{\sigma_*}\bfv_2 = \bfv_2$.  Using this
conclusion and $\bfn_I = \hat{\bfe}$, evaluate (\ref{f1}) at
$\kappa = \sigma_*$ and operate the
result on $\bfv_2$ to get
\beq
(\hat{\bfe} \cdot \bfv_2)\bfR_1^{\sigma_*} \bfa_I  =
(\bfm_1^{\sigma_*} \cdot \bfv_2) \bfb_1^{\sigma_*}  =
-\delta (\bfm_1^{\sigma_*} \cdot \bfv_2) \bfb_0^{\sigma}.  \label{f1red}
\eeq
Proposition \ref{lem1} shows that $\hat{\bfe} \cdot \bfv_2 \ne 0$, so
both sides of (\ref{f1red}) are nonvanishing.  Thus we can condense the
constants by writing
$\bfR_1^{\sigma_*} \bfa_I = c  \bfb_0^{\sigma}$  for some $c \ne 0$.
Substitution of the latter back into (\ref{f1}) ($\kappa = \sigma_*$)
and use of (\ref{b1m1-}) gives
\beq
\bfR_1^{\sigma_*} \bfU = \bfI
+ \bfb_0^{\sigma} \otimes (-\delta \bfm_1^{\sigma_*} - c \bfn_I).
\label{fI2}
\eeq
Comparison of (\ref{fI2}) and (\ref{b0m0})
(note: $\bfb_0^{+1} \nparallel \bfb_0^{-1}$ under our hypotheses)
we get that
\beq
 \bfR_1^{\sigma_*} = \bfR_0^{\sigma} \quad {\rm and} \quad
 \delta \bfm_1^{\sigma_*} + c \bfn_I  = - \bfm_0^{\sigma}.  \label{lastI}
\eeq
We have proved Theorem \ref{thmI} up to (\ref{cases}), and (\ref{cc})
is $(1-f)$(\ref{cases})$_1 + f$(\ref{cases})$_2$.  The three rank-one
connections summarized in (\ref{triple}) are from (\ref{cases})
and the basic rank-one relation (\ref{twin_rel2})-(\ref{typeI_twin}).
The planarity of the three vectors follows from (\ref{lastI}).
\end{proof}

\begin{figure}[htbp]
\begin{center}
\subfigure{\includegraphics[width = 0.35 \textwidth]{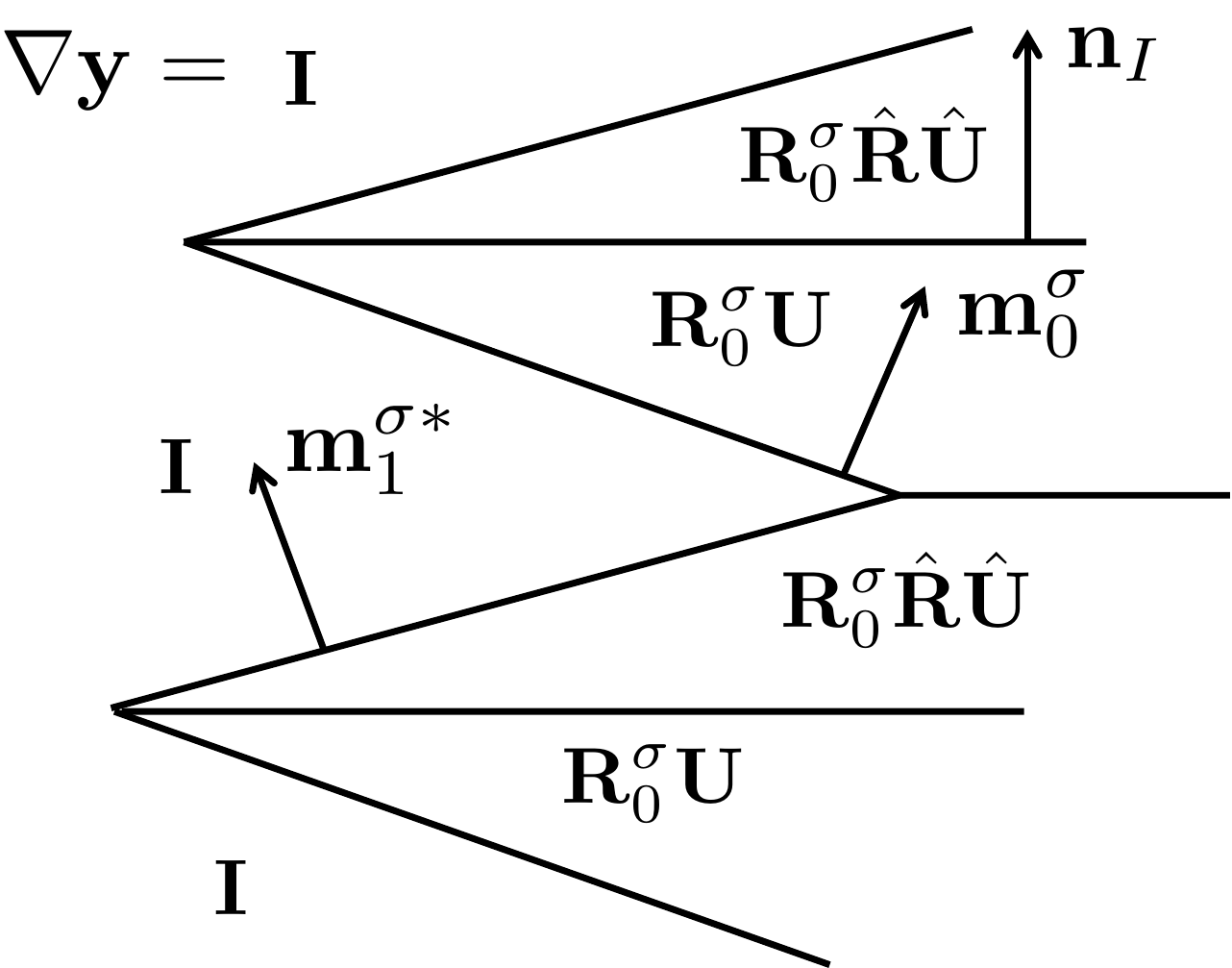}}\hspace{20pt}
\subfigure{\includegraphics[width = 0.4 \textwidth]{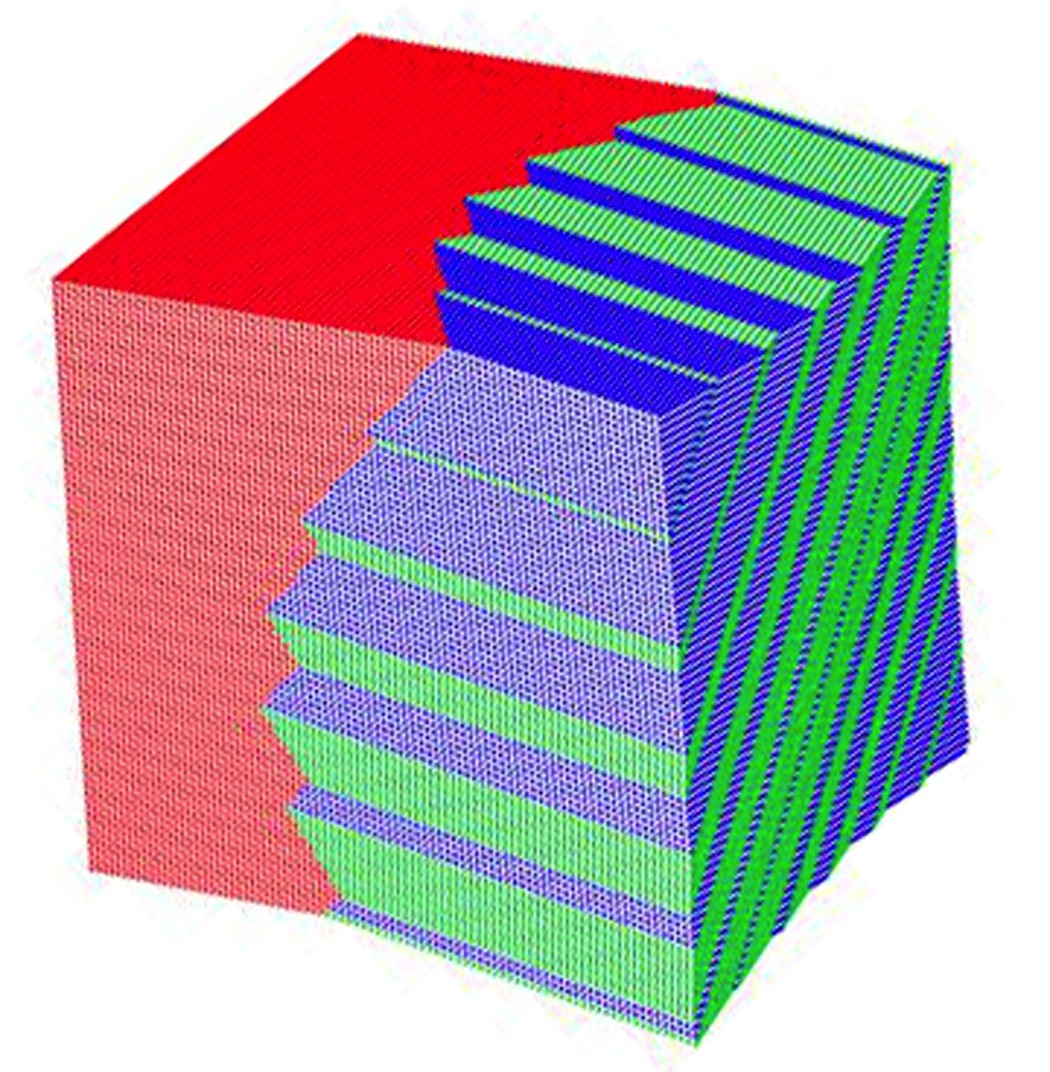}}\hspace{20pt}
\caption{Left diagram is a schematic of three triple conjunctions using the deformation gradients in \eqref{triple}. A macroscopically curved austenite/martensite interface with zero elastic energy is plotted on the right for a material satisfying the cofactor conditions (Type I domain).}
\label{curve_inf}
\end{center}
\end{figure}

Several remarks are worth noting.  First, the final statement about the
planarity of the three vectors is important for actually making the
indicated triple junction.   Second,  the solutions of the crystallographic
theory  given
by (\ref{cc}) do not necessarily correspond to the choice
$\kappa = \sigma$ for all $0 \le f\le 1$ in Corollary
\ref{cor4}.   In fact, the numerical evidence
supports the idea that the solution found in Theorem \ref{thmI}
agrees with different choices of $\kappa$ in Corollary \ref{cor4} for
different values of $f$, although this can be fixed by choosing eigenvectors
$\bfv_1(f),\bfv_2(f), \bfv_3(f)$ that change continuously with $f$ (This, of course, is not done by most
numerical packages).
Third, in the arguments of Theorem \ref{thmI}
 we have nowhere used
the inequality (\ref{cc3}) of the cofactor conditions.  Hence, the
particular family solutions of the crystallographic theory found here
does not rely on explicitly assuming this inequality.  In fact, the inequality
(\ref{cc3}) can be proved as a necessary condition
by use of (\ref{bm}) and (\ref{f1}).

\begin{figure}[ht]
\begin{center}
\subfigure{\includegraphics[width = 5 in]{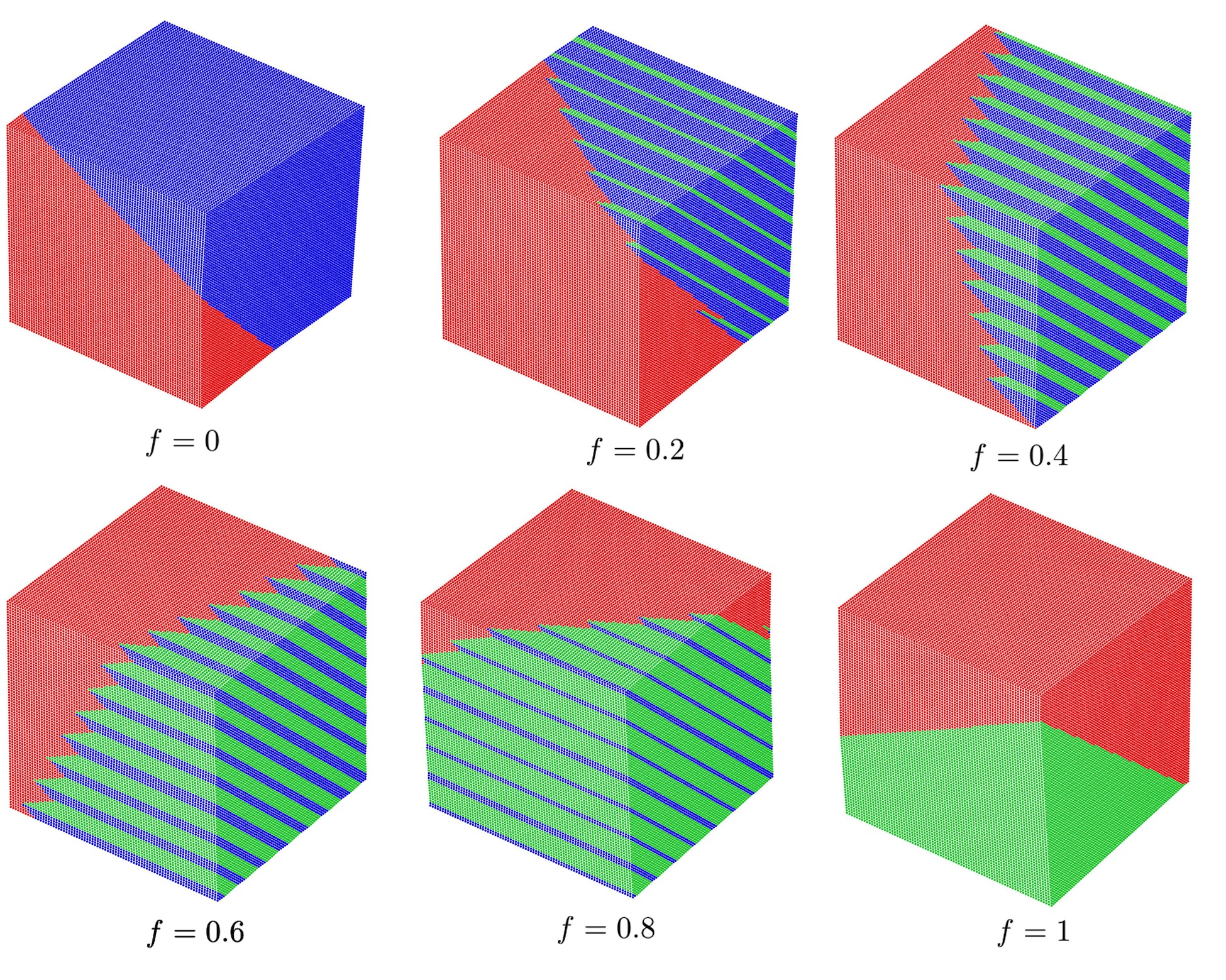}}\hspace{20pt}
\caption{Zero elastic energy austenite/martensite interfaces for a
material satisfying the
cofactor conditions (Type I domain) at  various $f$ from 0 to 1.}
\label{typeI_interface}
\end{center}
\end{figure}
The compatibility conditions given in \eqref{triple} imply the existence of several
interesting microstructures using the triple junction as a building block.  Figure \ref{curve_inf} (left) gives a schematic of  three
triple junctions.  Note that by \eqref{triple} all the jump conditions across all interfaces are satisfied.  Satisfaction of all such jump conditions implies the existence of a continuous deformation with these gradients.  Examples of
deformations constructed in this way (using the method of visualization described in the introduction) are shown in Figures \ref{curve_inf} (right), \ref{typeI_interface}, \ref{nucleation_AinM} and \ref{nucleation_MinA}. Figure \ref{typeI_interface} shows the configurations of austenite/martensite interfaces having
zero elastic energy for $f$ varying from
0 to 1. 

\subsection{Elimination of the transition layer in the austenite/martensite
interface for some Type II
domains}

The reason for the elimination of the transition layer
in the case of Type II domains is different --
it arises from the parallelism of a single variant martensite/austenite
interface and a domain wall -- but the mathematical argument is dual
to the argument for Type I domains.

\begin{figure}[ht]
\begin{center}
\subfigure{\includegraphics[width = 0.7 \textwidth]{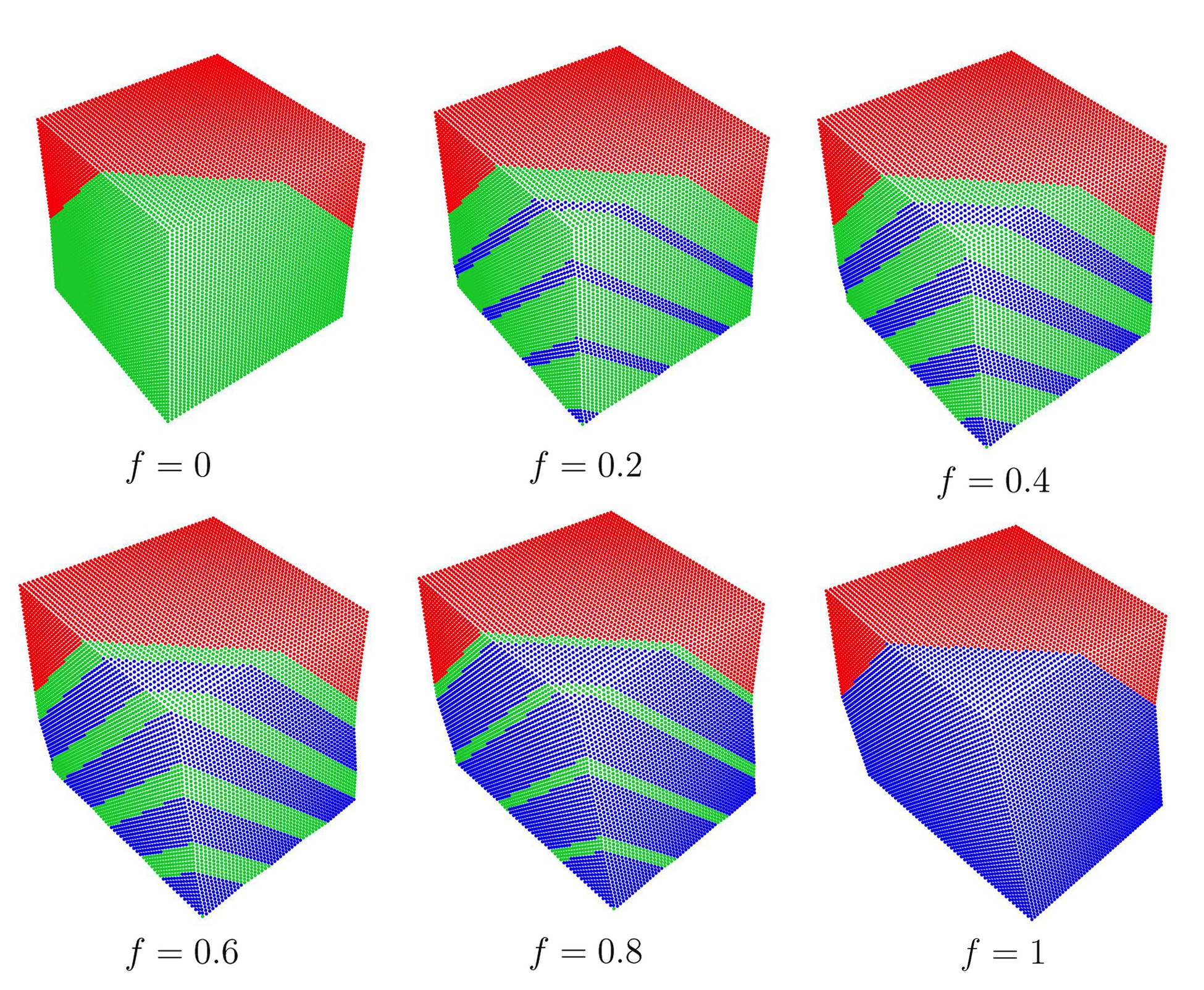}}\hspace{20pt}
\caption{Zero elastic energy austenite/martensite interfaces for a
material satisfying the
cofactor conditions (Type II domain) at  various $f$ from 0 to 1.}
\label{type2_interface}
\end{center}
\end{figure}

\begin{theorem} \label{thmII} (Type II domains)  Assume the hypotheses of Theorem
\ref{thm:cofactor} and suppose the cofactor conditions are satisfied
using Type II domains.
There are particular choices of $\sigma, \sigma_* \in \{\pm 1\}$ such that
$\bfR_1^{\sigma_*}  = \bfR_0^{\sigma}$ and
$\bfm^{\sigma_*}_1  =  \xi\, \bfm^{\sigma}_0$ for some $\xi \ne 0$, so that
\beq
\bfR_0^{\sigma} \bfU  = \bfI +
\bfb_0^{\sigma} \otimes \bfm_0^{\sigma}, \quad
\bfR_0^{\sigma} (\bfU + \bfa_{II} \otimes \bfn_{II}) =
 \bfI + \xi \bfb_1^{\sigma_*} \otimes \bfm_0^{\sigma},   \label{casesII}
\eeq
and therefore, by taking a convex combination of the equations in
(\ref{casesII}), one of the two families of solutions of the
crystallographic theory can be written
\beq
 \bfR_0^{\sigma} [\bfU + f \bfa_{II} \otimes \bfn_{II})] =
 \bfI +  \bigg(f \xi \bfb_1^{\sigma_*} +
 (1-f) \bfb_0^{\sigma} \bigg)  \otimes \bfm_0^{\sigma}
 \quad {\rm for\ all}\ 0 \le f \le 1.  \label{ccII}
\eeq
The normal $\bfm_0^{\sigma}$ to the austenite/martensite interface is
independent of the volume fraction $f$ and is parallel to the
domain wall normal: $ \bfn_{II} = c\,  \bfm_0^{\sigma}$ for some $c \ne 0$.
\end{theorem}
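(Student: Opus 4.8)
The plan is to mirror the argument for Theorem~\ref{thmI} but exploiting the dual structure: where the Type~I argument used $\bfa_I \cdot \bfv_2 = 0$ and $|\bfU^{-1}\hat{\bfe}| = 1$, the Type~II argument should use $\bfn_{II} \cdot \bfv_2 = 0$ (from Proposition~\ref{propI/II}, part~2) and $|\bfU\hat{\bfe}| = 1$. First I would record the consequence of $|\bfU\hat{\bfe}| = 1$, namely $\hat{\bfe}\cdot(\bfU^2 - \bfI)\hat{\bfe} = 0$, which by the spectral decomposition $\bfU = \lambda_1 \bfv_1\otimes\bfv_1 + \bfv_2\otimes\bfv_2 + \lambda_3\bfv_3\otimes\bfv_3$ becomes $\sqrt{\lambda_3^2-1}\,(\bfv_3\cdot\hat{\bfe}) = \pm\sqrt{1-\lambda_1^2}\,(\bfv_1\cdot\hat{\bfe})$. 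As in the Type~I case one checks $\bfv_3\cdot\hat{\bfe}\ne 0$, since otherwise $\hat{\bfe}\parallel\bfv_2$ (using Corollary~\ref{cor2}), forcing $\hat{\bfU} = \bfU$.

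Next, since $\bfn_{II} \cdot \bfv_2 = 0$, the key object to track is $\bfm_0^{\kappa}$ rather than $\bfb_0^{\kappa}$. I would apply the $180^\circ$ rotation $\hat{\bfQ} = -\bfI + 2\hat{\bfe}\otimes\hat{\bfe}$ to the $f=1$ equation $\bfR_1^{\kappa}\hat{\bfR}\hat{\bfU} = \bfI + \bfb_1^{\kappa}\otimes\bfm_1^{\kappa}$ exactly as in (\ref{f=1case})--(\ref{b1m1}), obtaining a map $\hat{\sigma}:\{\pm1\}\to\{\pm1\}$ and $\delta\ne 0$ with $\bfm_1^{\hat{\sigma}(\kappa)} = (1/\delta)\hat{\bfQ}\bfm_0^{\kappa}$. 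Then, using the formula (\ref{b0m0}) for $\bfm_0^{\kappa}$, I would compute $\bfm_0^{\kappa}\cdot\hat{\bfe}$; the dual relation above shows it is proportional to $(\pm 1 + \kappa)$ times a nonzero factor, so a particular $\kappa = \sigma$ makes $\bfm_0^{\sigma}\cdot\hat{\bfe} = 0$, and then with $\sigma_* = \hat{\sigma}(\sigma)$ we get $\bfm_1^{\sigma_*} = (1/\delta)\hat{\bfQ}\bfm_0^{\sigma} = -(1/\delta)\bfm_0^{\sigma}$ — the dual of (\ref{b1m1-}) — so $\bfm_1^{\sigma_*}\cdot\hat{\bfe} = \bfm_1^{\sigma_*}\cdot\bfv_2 = 0$.

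The remainder is the transpose/dual of the Type~I endgame. Taking determinants gives $1 + \bfb_1^{\sigma_*}\cdot\bfm_1^{\sigma_*} = \det\bfU > 0$. Operating the $f=1$ equation directly on $\bfv_2$ (using $\bfU\bfv_2 = \bfv_2$ and $\bfn_{II}\cdot\bfv_2 = 0$) yields $\bfR_1^{\sigma_*}\bfv_2 = \bfv_2 + (\bfm_1^{\sigma_*}\cdot\bfv_2)\,\bfb_1^{\sigma_*}/\,(\text{something})$; more cleanly, since $\bfn_{II}\cdot\bfv_2 = 0$ we get $\bfR_1^{\sigma_*}(\bfU + \bfa_{II}\otimes\bfn_{II})\bfv_2 = \bfR_1^{\sigma_*}\bfv_2 = \bfv_2 + (\bfm_1^{\sigma_*}\cdot\bfv_2)\bfb_1^{\sigma_*}$, and since $\bfm_1^{\sigma_*}\cdot\bfv_2 = 0$ this shows $\bfR_1^{\sigma_*}\bfv_2 = \bfv_2$. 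Then I would pre-multiply the $f=1$ equation by $(\bfR_1^{\sigma_*})^T$, transpose, and operate on $\bfm_0^{\sigma}$ (or argue directly with $\bfa_{II}$): using $\bfn_{II} = 2(\hat{\bfe} - \bfU^2\hat{\bfe}/|\bfU\hat{\bfe}|^2)$ and $|\bfU\hat{\bfe}| = 1$, so $\bfn_{II} = 2(\bfI - \bfU^2)\hat{\bfe}$, together with $\bfm_0^{\sigma}\cdot\hat{\bfe} = 0$, one isolates that $\bfm_0^{\sigma}$ is (up to scalar) $(\bfR_1^{\sigma_*})^{-T}\bfn_{II}$, forcing $\bfn_{II} = c\,\bfm_0^{\sigma}$ for some $c\ne 0$ and simultaneously $\bfR_1^{\sigma_*} = \bfR_0^{\sigma}$ by comparison with (\ref{b0m0}) (using $\bfm_0^{+1}\nparallel\bfm_0^{-1}$). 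Substituting $\bfn_{II} = c\,\bfm_0^{\sigma}$ into the $f=1$ equation then gives $\bfR_0^{\sigma}(\bfU + \bfa_{II}\otimes\bfn_{II}) = \bfI + \xi\bfb_1^{\sigma_*}\otimes\bfm_0^{\sigma}$, which is (\ref{casesII})$_2$, and (\ref{ccII}) is the convex combination $(1-f)(\ref{casesII})_1 + f(\ref{casesII})_2$. The main obstacle I anticipate is the bookkeeping in the transpose step — correctly matching which vector ($\bfm$ versus $\bfb$) is preserved under $\hat{\bfQ}$ and tracking the nonzero scalars $\delta, \xi, c$ — rather than any conceptual difficulty, since the whole argument is the formal dual of Theorem~\ref{thmI} under the interchange $\bfa\leftrightarrow\bfn$, $\bfb\leftrightarrow\bfm$, $\bfU\leftrightarrow\bfU^{-1}$.
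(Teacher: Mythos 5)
Your proposal is correct and follows essentially the same route as the paper's proof: Proposition \ref{propI/II} (part 2) gives $\bfn_{II}\cdot\bfv_2=0$ and $|\bfU\hat{\bfe}|=1$, conjugation by $\hat{\bfQ}=-\bfI+2\hat{\bfe}\otimes\hat{\bfe}$ plus the choice of $\sigma$ with $\bfm_0^{\sigma}\cdot\hat{\bfe}=0$ yields $\bfm_1^{\sigma_*}=-(1/\delta)\bfm_0^{\sigma}$, and then $\bfR_1^{\sigma_*}\bfv_2=\bfv_2$, the transposed $f=1$ equation and comparison with the $f=0$ solutions give $\bfn_{II}=c\,\bfm_0^{\sigma}$ and $\bfR_1^{\sigma_*}=\bfR_0^{\sigma}$, exactly as in the paper. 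The only repair needed is your intermediate phrase that $\bfm_0^{\sigma}$ is proportional to $(\bfR_1^{\sigma_*})^{-T}\bfn_{II}$: the clean version of that step is to act the transpose of the $f=1$ equation on $\bfv_2$, obtaining $(\bfa_{II}\cdot\bfv_2)\,\bfn_{II}=(\bfb_1^{\sigma_*}\cdot\bfv_2)\,\bfm_1^{\sigma_*}$ with $\bfa_{II}\cdot\bfv_2=\hat{\bfe}\cdot\bfv_2\ne0$, which gives the parallelism directly and with no rotation factor.
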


\begin{proof}
By Proposition \ref{propI/II} we have for Type II domains
under the cofactor conditions,  $\bfn_{II} \cdot \bfv_2 = 0$ and
$|\bfU \hat{\bfe}|^2 = |\hat{\bfe}|^2 = 1$.  The latter can be
written
\beq
 \hat{\bfe} \cdot (\bfU^{2} - \bfI) \hat{\bfe} = 0
 \Longleftrightarrow  \sqrt{1- \lambda_1^2}\ (\bfv_1 \cdot \hat{\bfe})
 = \pm  \sqrt{\lambda_3^2 - 1} \ (\bfv_3 \cdot \hat{\bfe}), \label{IIconds}
\eeq
and, as above,  $\bfv_3 \cdot \hat{\bfe} \ne 0$.

Recycling the notation of the Type I case, we have
two families of solutions of the
crystallographic theory that can be written
 $(\bfR_f^{\kappa} \in {\rm SO(3)},\
\bfb_f^{\kappa} \otimes \bfm_f^{\kappa})$, $\kappa \in \{\pm 1\}$,
$0 \le f \le 1$ and these are distinct if $f \ne 1/2$. Thus, at
$f = 1$,
\beq
\bfR_1^{\kappa} (\bfU + \bfa_{II} \otimes \bfn_{II}) =
\bfR_1^{\kappa} \hat{\bfR} \hat{\bfU} = \bfI +
\bfb_1^{\kappa} \otimes \bfm_1^{\kappa},  \quad \kappa \in \{\pm 1\}.
\label{f1II}
\eeq
Using that $\hat{\bfU} = (-\bfI +
2 \hat{\bfe} \otimes \hat{\bfe}) \bfU (-\bfI + 2\hat{\bfe} \otimes \hat{\bfe})$
and pre- and post- multiplying (\ref{f1}) by the 180 degree rotation
$\hat{\bfQ} = (-\bfI +
2 \hat{\bfe} \otimes \hat{\bfe}) = \hat{\bfQ}^T$, we have that
\beq
\hat{\bfQ}\bfR_1^{\kappa} \hat{\bfR} \hat{\bfQ} \bfU = \bfI +
\hat{\bfQ}\bfb_1^{\kappa} \otimes \hat{\bfQ}\bfm_1^{\kappa},
\quad \kappa \in \{\pm 1\} \label{f=1caseII}
\eeq
Comparison of (\ref{f=1caseII}) with (\ref{f=0case}) shows that there is
a map $\hat{\sigma}: \{\pm 1\} \to \{\pm 1\}$ and $\delta \ne 0$ such that
$\hat{\bfQ} \bfb_1^{\hat{\sigma}(\kappa)} = \delta \bfb_0^{\kappa},\
\hat{\bfQ}  \bfm_1^{\hat{\sigma}(\kappa)}  =
(1/\delta) \bfm_0^{\kappa}$, i.e.,
\beq
\bfb_1^{\hat{\sigma}(\kappa)} =
\delta (-\bfI +
2 \hat{\bfe} \otimes \hat{\bfe})\bfb_0^{\kappa}, \quad
\bfm_1^{\hat{\sigma}(\kappa)}  =
\frac{1}{\delta} (-\bfI +
2 \hat{\bfe} \otimes \hat{\bfe}) \bfm_0^{\kappa}. \label{b1m1II}
\eeq
We note from (\ref{f=0case}), (\ref{b0m0}) and (\ref{IIconds})  that
\beqs
\bfm_0^{\kappa} \cdot \hat{\bfe} &=& \frac{1}{\rho}\frac{\lambda_3 -
\lambda_1}{\sqrt{\lambda_3^2 - \lambda_1^2}}
 \left( - \sqrt{1-\lambda_1^2}\ (\bfv_1 \cdot \hat{\bfe})  + \kappa  \sqrt{\lambda_3^2-1}\
 (\bfv_3 \cdot \hat{\bfe}) \right),   \nonumber \\
  &=&
\frac{1}{\rho}\frac{\sqrt{\lambda_3^2-1}(\lambda_3 -
\lambda_1)(\bfv_3 \cdot \hat{\bfe})}{\sqrt{\lambda_3^2 - \lambda_1^2}}
(\mp 1 + \kappa), \quad \kappa \in \{+1, -1\} .
\eeqs
Hence there is a particular choice $\kappa = \sigma \in \{ \pm 1 \}$
such that  $\bfm_0^{\sigma} \cdot \hat{\bfe} = 0$. Let
$\sigma_* = \hat{\sigma}(\sigma)$.  For these choices we have from
(\ref{b1m1II}) that
\beq
\bfm_1^{\sigma_*} =
-\frac{1}{\delta} \bfm_0^{\sigma},  \label{b1m1-II}
\eeq
so, in particular, $\bfm_1^{\sigma_*} \cdot \hat{\bfe} =
\bfm_1^{\sigma_*} \cdot \bfv_2 = 0$.

Following the dual of the Type I case, evaluate (\ref{f1II}) at
$\kappa = \sigma_*$ and operate on $\bfv_2$ to get
\beq
 \bfR_1^{\sigma_*}\bfv_2 = \bfv_2 +
( \bfm_1^{\sigma_*}\cdot \bfv_2 ) \bfb_1^{\sigma_*} = \bfv_2.
\label{intermII}
\eeq
 Using the formula (\ref{typeI_twin}) for $\bfa_{II}$,
 evaluate (\ref{f1II}) at
$\kappa = \sigma_*$ and operate its transpose on $\bfv_2$ to get
\beq
(\bfa_{II} \cdot \bfv_2)\bfn_{II} =
(\bfb_1^{\sigma_*} \cdot \bfv_2)\bfm_1^{\sigma_*}.  \label{f1redII}
\eeq
Lemma \ref{lem1} shows that $\bfa_{II} \cdot \bfv_2 = \hat{\bfe} \cdot \bfv_2 \ne 0$, so
both sides of (\ref{f1redII}) are nonvanishing.  Thus we can condense the
constants by writing
$ \bfn_{II} = c  \bfm_0^{\sigma}$  for some $c \ne 0$.
Substitution of the latter back into (\ref{f1II}) ($\kappa = \sigma_*$)
and use of (\ref{b1m1-II}) gives
\beq
\bfR_1^{\sigma_*} \bfU = \bfI
+ (- c \bfR_1^{\sigma_*} \bfa_{II} - \frac{1}{\delta}\bfb_1^{\sigma_*})
\otimes \bfm_0^{\sigma}.
\label{f2}
\eeq
Comparison of (\ref{f2}) and (\ref{b0m0})
(note: $\bfb_0^{+1} \nparallel \bfb_0^{-1}$ under our hypotheses)
we get that
\beq
 \bfR_1^{\sigma_*} = \bfR_0^{\sigma} \quad {\rm and} \quad
\bfR_1^{\sigma_*} \bfa_{II} + \frac{1}{\delta}\bfb_1^{\sigma_*}
= - \bfb_0^{\sigma}.  \label{lastII}
\eeq
We have proved Theorem \ref{thmII} up to (\ref{casesII}), and (\ref{ccII})
is $(1-f)$(\ref{casesII})$_1 + f$(\ref{casesII})$_2$.  The parallelism of
$\bfn_{II}$ and $\bfm_0^{\sigma}$ is (\ref{f1redII}).
\end{proof}
Some of the remarks following the proof of Theorem \ref{thmI} apply
here as well.   In a certain sense
these results show that, under the cofactor conditions,
triple junctions are dual to parallel austenite/twin interfaces.
The duality is that which maps Type I into Type II twins.

\subsection{The cofactor conditions for Compound domains}

We assume in this subsection the hypotheses of Proposition \ref{lem1} which
gives the basic characterization of Compound domains.  Specifically, we
assume that there are orthonormal vectors $\hat{\bfe}_1, \hat{\bfe}_2$
such that $\hat{\bfU} = (-\bfI + 2 \hat{\bfe}_1 \otimes \hat{\bfe}_1)\bfU
(-\bfI + 2 \hat{\bfe}_1 \otimes \hat{\bfe}_1) = (-\bfI + 2 \hat{\bfe}_2 \otimes \hat{\bfe}_2)\bfU
(-\bfI + 2 \hat{\bfe}_2 \otimes \hat{\bfe}_2) \ne \bfU$.  The  two solutions
of (\ref{twin_rel2}) for Compound domains
$\bfa_C^1 \otimes \bfn_C^1,\ \bfa_C^2 \otimes \bfn_C^2$
are then given by (\ref{comp}).

\begin{lem} \label{lemcomp}  Suppose that there are orthonormal vectors $\hat{\bfe}_1, \hat{\bfe}_2$
such that $\hat{\bfU} = (-\bfI + 2 \hat{\bfe}_1 \otimes \hat{\bfe}_1)\bfU
(-\bfI + 2 \hat{\bfe}_1 \otimes \hat{\bfe}_1) = (-\bfI + 2 \hat{\bfe}_2 \otimes \hat{\bfe}_2)\bfU
(-\bfI + 2 \hat{\bfe}_2 \otimes \hat{\bfe}_2) \ne \bfU$, and let
$\bfa_C^1 \otimes \bfn_C^1,\ \bfa_C^2 \otimes \bfn_C^2$
be given by (\ref{comp}). The cofactor conditions are satisfied for
either of these solutions if and only if $\hat{\bfe}_1 \cdot \bfv_2 = 0$,
$\hat{\bfe}_2 \cdot \bfv_2 = 0$, $\hat{\bfe}_1 $ is not parallel
to either $\bfv_1$ or $\bfv_3$, and the inequality (\ref{cc3}) holds.
\end{lem}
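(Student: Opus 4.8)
The plan is to strip the three cofactor conditions down, one at a time, using the machinery already in place, until they become the stated statements about where $\hat{\bfe}_1,\hat{\bfe}_2$ sit relative to the eigenframe $\bfv_1,\bfv_2,\bfv_3$ of $\bfU$. First I would record the geometry handed to us by Proposition \ref{lem1}: in the Compound case $\bfv:=\hat{\bfe}_1\times\hat{\bfe}_2$ is a unit eigenvector of $\bfU$ and $\{\bfv,\hat{\bfe}_1,\hat{\bfe}_2\}$ is an orthonormal basis, so $\hat{\bfe}_1,\hat{\bfe}_2$ span the plane $\bfv^{\perp}$, which is spanned by the other two eigenvectors of $\bfU$. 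Since $\hat{\bfU}\neq\bfU$, neither $\hat{\bfe}_1$ nor $\hat{\bfe}_2$ is an eigenvector of $\bfU$ --- a reflection $-\bfI+2\bfw\otimes\bfw$ commutes with $\bfU$ exactly when $\bfw$ is an eigenvector --- and, because the Compound solutions are genuine rank-one connections, $\bfa_C^1\neq0$ and $\bfa_C^2\neq0$, which forces the scalars $\xi,\eta$ in (\ref{comp}) to be nonzero. Throughout, $\bfv_2$ denotes the unit eigenvector of $\bfU$ with $\bfU\bfv_2=\bfv_2$ (which, under (\ref{cc1}), is exactly the middle eigenvector $\hat{\bfv}_2$ of Corollary \ref{cor3}).

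Second, I would reduce (\ref{cc2}). By Corollary \ref{cor3}, in the presence of (\ref{cc1}) and (\ref{cc3}) the condition (\ref{cc2}) is equivalent to (\ref{cc2'}), $\left(\bfa\cdot\bfv_2\right)\left(\bfn\cdot\bfv_2\right)=0$. For the first Compound solution $\bfn_C^1=\hat{\bfe}_1$ and $\bfa_C^1=\xi\,\bfU\hat{\bfe}_2$, so $\bfn_C^1\cdot\bfv_2=\hat{\bfe}_1\cdot\bfv_2$ and $\bfa_C^1\cdot\bfv_2=\xi\,\hat{\bfe}_2\cdot\bfU\bfv_2=\xi\,\hat{\bfe}_2\cdot\bfv_2$; since $\xi\neq0$, (\ref{cc2'}) for this solution reads $\left(\hat{\bfe}_1\cdot\bfv_2\right)\left(\hat{\bfe}_2\cdot\bfv_2\right)=0$, and interchanging the roles of $\hat{\bfe}_1,\hat{\bfe}_2$ (with $\eta\neq0$) yields the same condition for the second solution.

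Third, I would upgrade ``one factor vanishes'' to ``both factors vanish.'' If $\hat{\bfe}_1\cdot\bfv_2=0$ then either $\bfv_2=\pm\bfv$ (hence also $\hat{\bfe}_2\cdot\bfv_2=0$), or $\bfv_2$ is one of the two eigenvectors lying in $\bfv^{\perp}=\mathrm{span}(\hat{\bfe}_1,\hat{\bfe}_2)$, in which case $\hat{\bfe}_1$, being a unit vector of that plane orthogonal to the eigenvector $\bfv_2$, is parallel to the other eigenvector of the plane --- one of $\bfv_1,\bfv_3$ --- contradicting the stated hypothesis (and, redundantly, $\hat{\bfU}\neq\bfU$); the argument from $\hat{\bfe}_2\cdot\bfv_2=0$ is symmetric. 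Thus (\ref{cc2}) forces $\bfv_2=\pm\hat{\bfe}_1\times\hat{\bfe}_2$, i.e.\ both orthogonality relations; conversely, once $\hat{\bfe}_1\cdot\bfv_2=\hat{\bfe}_2\cdot\bfv_2=0$ we have $\hat{\bfe}_1\in\mathrm{span}(\bfv_1,\bfv_3)$, and ``$\hat{\bfe}_1$ not parallel to $\bfv_1$ or $\bfv_3$'' is then just ``$\hat{\bfe}_1$ not an eigenvector,'' automatic from $\hat{\bfU}\neq\bfU$. For the forward implication this, together with $\lambda_2=1$ and $\lambda_1<1<\lambda_3$ from Corollary \ref{cor2}, delivers all the listed conditions. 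For the converse I would, given the geometry and (\ref{cc3}), note that (\ref{cc2'}) holds trivially (one factor is $\hat{\bfe}_1\cdot\bfv_2=0$), so only (\ref{cc1}) remains to be established: with $\bfv_2=\pm\bfv$ fixed by $\bfU$, expanding the left side of (\ref{cc3}) in the eigenbasis of $\bfU$ with $\bfn_C^1=\hat{\bfe}_1$, $\bfa_C^1=\xi\bfU\hat{\bfe}_2$ collapses it to $-(1-\alpha^2)(1-\beta^2)-\tfrac14|\bfa_C^1|^2|\bfn_C^1|^2$, where $\alpha,\beta$ are the eigenvalues of $\bfU$ at $\bfv_1,\bfv_3$; since $|\bfa_C^1|^2|\bfn_C^1|^2>0$, (\ref{cc3}) forces $(1-\alpha^2)(1-\beta^2)<0$, i.e.\ $\alpha<1<\beta$ or $\beta<1<\alpha$, so $1$ is the middle eigenvalue of $\bfU$, which is (\ref{cc1}). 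Corollary \ref{cor3} then returns (\ref{cc2}) from (\ref{cc2'}), and (\ref{cc3}) was assumed.

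The step I expect to be the main obstacle is the upgrade in the third paragraph: it is precisely there that the non-degeneracy hypothesis $\hat{\bfe}_1\not\parallel\bfv_1,\bfv_3$ (and $\hat{\bfU}\neq\bfU$) must be invoked to kill the spurious branch in which (\ref{cc2'}) is satisfied merely because $\hat{\bfe}_1$ or $\hat{\bfe}_2$ happens to align with an extreme eigenvector of $\bfU$. The secondary, purely computational point --- that expanding (\ref{cc3}) in the eigenbasis of $\bfU$ with the explicit Compound vectors (\ref{comp}) produces the clean form $-(1-\alpha^2)(1-\beta^2)-\tfrac14|\bfa_C^1|^2|\bfn_C^1|^2$ --- is routine but indispensable: it closes the converse and shows that under (\ref{cc1}) the inequality (\ref{cc3}) is a genuine restriction rather than an automatic consequence.
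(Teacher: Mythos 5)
Your reduction of (\ref{cc2}) to $(\hat{\bfe}_1\cdot\bfv_2)(\hat{\bfe}_2\cdot\bfv_2)=0$ via Corollary \ref{cor3}, and your forward argument (the eigenvector $\bfv=\hat{\bfe}_1\times\hat{\bfe}_2$ supplied by Proposition \ref{lem1} cannot be $\bfv_1$ or $\bfv_3$, since that would force $\hat{\bfe}_1$ or $\hat{\bfe}_2$ to be an eigenvector of $\bfU$ and hence $\hat{\bfU}=\bfU$) are essentially the paper's proof. The genuine problem is your converse, where you try to \emph{derive} (\ref{cc1}) from (\ref{cc3}). Writing the eigenvalues of $\bfU$ on $\bfv_1,\bfv_2,\bfv_3$ as $\lambda_1,\lambda_2,\lambda_3$, the exact expansion of the left side of (\ref{cc3}) for $\bfn_C^1=\hat{\bfe}_1$, $\bfa_C^1=\xi\bfU\hat{\bfe}_2$ with $\hat{\bfe}_1,\hat{\bfe}_2\in\mathrm{span}(\bfv_1,\bfv_3)$ is
\[
-(1-\lambda_1^2)(1-\lambda_3^2)-\tfrac{1}{4}|\bfa_C^1|^2|\bfn_C^1|^2+(\lambda_2^2-1)(1-\lambda_1^2\lambda_3^2),
\]
and since $|\bfa_C^1|^2|\bfn_C^1|^2$ involves only $\lambda_1,\lambda_3$ and the angle (both vectors and $\xi$ live in the $\bfv_1,\bfv_3$ plane), the extra term does not disappear: your claimed ``collapse'' is an identity only when $\lambda_2=1$ is already known, so the deduction of (\ref{cc1}) is circular. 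Worse, the implication you want is false: take $(\lambda_1,\lambda_2,\lambda_3)=(0.8,\,1.05,\,1.3)$ and $\hat{\bfe}_1=(\bfv_1+\bfv_3)/\sqrt{2}$; then $\hat{\bfU}\ne\bfU$, both orthogonality conditions hold, $\hat{\bfe}_1$ is not parallel to $\bfv_1$ or $\bfv_3$, and the left side of (\ref{cc3}) is $\approx +0.003$, yet $\lambda_2=1.05\ne1$, so the cofactor conditions fail.

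The way out is not to derive (\ref{cc1}) but to carry it along: in this subsection, as throughout Section \ref{cof:condn}, $\bfv_2$ is the eigenvector with $\bfU\bfv_2=\bfv_2$, i.e.\ $\lambda_2=1$ is part of the standing setup (equivalently, (\ref{cc1}) should be understood as included on the right-hand side of the lemma). With that reading the converse is immediate, which is all the paper does: $(\bfn_C^i\cdot\bfv_2)=\hat{\bfe}_i\cdot\bfv_2=0$ gives (\ref{cc2'}), Corollary \ref{cor3} upgrades this to (\ref{cc2}), and (\ref{cc3}) is assumed. Your side observation that, once $\lambda_2=1$ holds, (\ref{cc3}) together with $|\bfa_C^1|^2|\bfn_C^1|^2>0$ forces $(1-\lambda_1^2)(1-\lambda_3^2)<0$, i.e.\ $\lambda_1<1<\lambda_3$, is correct (and consistent with Corollary \ref{cor2}), but it cannot be run backwards to obtain (\ref{cc1}).
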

\begin{proof}  By Corollary \ref{cor3},
the condition (\ref{cc2})
of the cofactor conditions
for either solution
$\bfa_C^1 \otimes \bfn_C^1$ or  $\bfa_C^2 \otimes \bfn_C^2$  reduces to
\beq
(\hat{\bfe}_1 \cdot \bfv_2)(\hat{\bfe}_2 \cdot \bfv_2) = 0. \label{compccc3}
\eeq
Suppose the cofactor conditions are satisfied.
According to Proposition \ref{comp} both
$\hat{\bfe}_1$ and  $\hat{\bfe}_2$ are perpendicular to an eigenvector
of $\bfU$.  But this eigenvector cannot be $\bfv_1$ or $\bfv_3$, because
then (\ref{compccc3})  would force either $\hat{\bfe}_1$ or
$\hat{\bfe}_2$ to be parallel to
an eigenvector of $\bfU$ which contradicts $\hat{\bfU} \ne \bfU$.
Therefore the eigenvector in question must be $\bfv_2$ and we have
both $\hat{\bfe}_1 \cdot \bfv_2 = 0$ and
$\hat{\bfe}_2 \cdot \bfv_2 = 0$.  Of course, it also follows from
the hypothesis $\hat{\bfU} \ne \bfU$ that $\hat{\bfe}_1 $ is not parallel
to either $\bfv_1$ or $\bfv_3$.  The remaining condition of the
cofactor conditions is the inequality (\ref{cc3}).  Clearly, these
necessary conditions are also sufficient for the cofactor conditions.
\end{proof}
\begin{figure}[htbp]
\begin{center}
\subfigure{\includegraphics[width = 0.7 \textwidth]{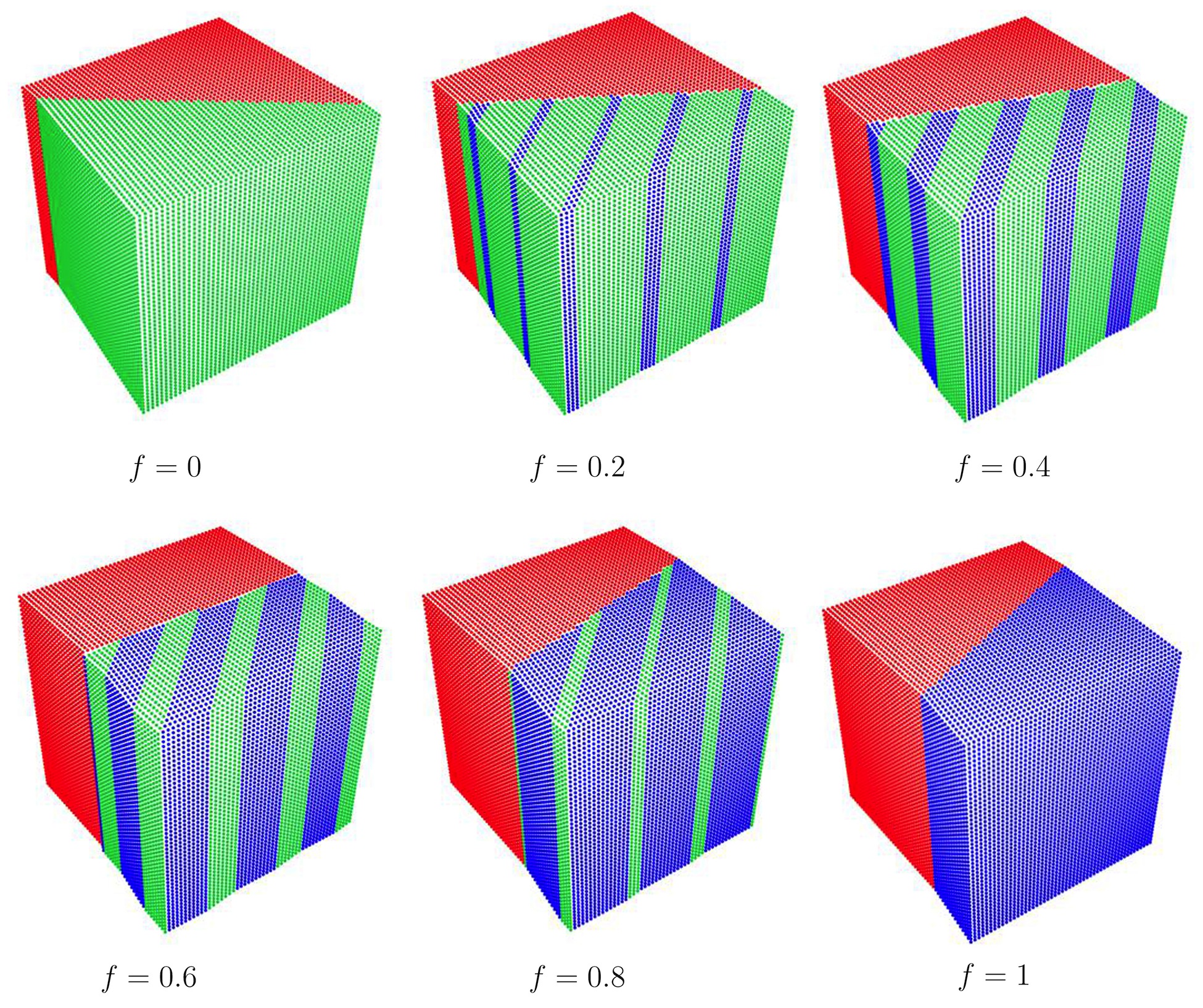}}\hspace{20pt}
\caption{Austenite/martensite interfaces for Compound
twin system satisfying the cofactor conditions at various $f$ from 0 to 1.
The deformation is a plane strain.  In this case there is an elastic distortion near the habit plane. }
\label{comptwin}
\end{center}
\end{figure}

This result says that we satisfy cofactor conditions for Compound domains by
putting the orthonormal vectors  $\hat{\bfe}_1, \hat{\bfe}_2$ in the
$\bfv_1, \bfv_3$ plane and satisfying the inequality (\ref{cc3}).
If $\bfU$ is given as above, there is then only one degree-of-freedom,
say, the angle $\theta$ between $\hat{\bfe}_1$ and $\bfv_1$, in the
assignment of $\hat{\bfe}_1, \hat{\bfe}_2$. The left hand side of the
inequality (\ref{cc3}) then becomes a function of $\lambda_1, \lambda_3$
and $\theta$.  Given $\theta$, it can be seen from numerical examples
that there is a domain in $\rz^2$ of possible values of
$\lambda_1, \lambda_3$ at which (\ref{cc3}), and therefore the
cofactor conditions, are satisfied.
For many choices of $\theta$ this domain seems to be quite large,
including many potential alloys, but does not include all of
$\lambda_1 < 1 < \lambda_3$.  We do not see any general statements one
can make about this domain, except the obvious point that
if $\theta$ is fixed, then the left hand side of the inequality
(\ref{cc3}) tends to 0 as $|\lambda_3-1| + |1-\lambda_1| \to 0$.

It should be noted that except for the possibility of a restricted
domain for $\lambda_1, \lambda_3$, Compound domains can satisfy the
cofactor conditions {\it merely by symmetry and $\lambda_2 = 1$}.
That is, if the lattice parameters
of a potential alloy are first tuned to satisfy $\lambda_2 = 1$, and
the symmetry happens to be such that there are two 180 degree rotations in
the point group $\calP$ with perpendicular axes that lie in
a plane perpendicular to $\bfv_2$, then the cofactor conditions are
satisfied as long as the domain for $\lambda_1, \lambda_3$ is
suitable.     See the example of VO$_2$ in
Section \ref{alloys}.

There seem to be no general statements about the elimination of
the transition layer that one can make that are independent of the
choice of $\hat{\bfe}_1$ (satisfying Lemma \ref{lemcomp}), as was
done in the cases of Types I and II domains.  For example, the main
condition $\bfR_0 = \bfR_1$ that eliminated the transition layer for
Type I domains becomes a single scalar equation restricting
$\lambda_1, \lambda_3$ and $\theta$ in the case of Compound domains.
It may well be possible
for quite special choices of $\lambda_1, \lambda_3$ and $\theta$ to
eliminate the transition layer.   For practical alloy development
such a condition seems not so useful, as usually $\theta$ would
be given, and the resulting further restriction on $\lambda_1, \lambda_3$
would seem to be difficult to satisfy.  But further investigation
 is warranted.

\section{Simultaneous satisfaction of the cofactor conditions for different domain systems} \label{satisfaction_cofactor}

In the introduction we have argued that the cofactor conditions imply the existence of
many deformations consistent with the coexistence of austenite and martensite, and many
of these cases also have zero elastic energy.  Here we quantify these statements for one of
the two types of cubic to monoclinic phase transformations (see, e.g., \citet{soligo_99} and
\citet{james_2000}). 
This case is interesting with regard to applications (see Section \ref{alloys}), and is
representative of other high-to-low symmetry cases.

We consider symmetry change from cubic to monoclinic with $<\!\!100\!\!>_\textup{a}$ as the inherited 
2-fold axis. There are 12 martensite variants in this case
with transformation stretch matrices given by
\beq
\begin{array}{cccc}
\mathbf U_1 = \begin{bmatrix}\alpha&\beta&0\\\beta&\delta&0\\0&0&\gamma\end{bmatrix}, &\mathbf U_2 = \begin{bmatrix}\alpha&-\beta&0\\-\beta&\delta&0\\0&0&\gamma\end{bmatrix},&\mathbf U_3 = \begin{bmatrix}\delta&\beta&0\\\beta&\alpha&0\\0&0&\gamma\end{bmatrix},&\mathbf U_4 = \begin{bmatrix}\delta&-\beta&0\\-\beta&\alpha&0\\0&0&\gamma\end{bmatrix},\\ \\
\mathbf U_5 = \begin{bmatrix}\gamma&0&0\\0&\delta&\beta\\0&\beta&\alpha\end{bmatrix}, &\mathbf U_6 = \begin{bmatrix}\gamma&0&0\\0&\delta&-\beta\\0&-\beta&\alpha\end{bmatrix},&\mathbf U_7 = \begin{bmatrix}\alpha&0&\beta\\0&\gamma&0\\\beta&0&\delta\end{bmatrix},&\mathbf U_8 = \begin{bmatrix}\alpha&0&-\beta\\0&\gamma&0\\-\beta&0&\delta\end{bmatrix},\\ \\
\mathbf U_9 = \begin{bmatrix}\delta&0&\beta\\0&\gamma&0\\\beta&0&\alpha\end{bmatrix}, &\mathbf U_{10} = \begin{bmatrix}\delta&0&-\beta\\0&\gamma&0\\-\beta&0&\alpha\end{bmatrix},&\mathbf U_{11} = \begin{bmatrix}\gamma&0&0\\0&\alpha&\beta\\0&\beta&\delta\end{bmatrix},&\mathbf U_{12} = \begin{bmatrix}\gamma&0&0\\0&\alpha&-\beta\\0&-\beta&\delta\end{bmatrix}.\\ 
\end{array} \label{variants:monoI}
\eeq

To avoid degeneracies, we assume for the rest of this section that $\alpha \ne \delta$
and that the eigenvalues of $\bfU_1$ are distinct.  Between these martensite variants, 
there are 24 Type I twins, 24 Type II twins, 24 Compound twins, 24 Type I
domains, 24 Type II domains and 12 Compound domains.  
These domains with labels of pairs of compatible variants are listed in Table \ref{twintbl:monoI}.
The notation for variants is consistent with \eqref{variants:monoI}. 

In the case of domains that are not conventional twins (Table \ref{twintbl:monoI}), 
the rotation relating each
pair of compatible variants is a 90$^\circ$ rotation.  The 180$^\circ$ rotation that necessarily relates
these variants is given by formulas in the appendix.

The colored boxes in Table \ref{twintbl:monoI} have the property that if one pair
in the box satisfies the cofactor conditions for a certain type of domain, 
then all pairs in the box satisfy the
cofactor conditions for the same type of domain.  For example, if variants
1 and 6 have a Type I twin satisfying the cofactor conditions, then the Type I
twin pairs (2,5), (1,5) and (2,6) also satisfy the cofactor conditions.  In each of these cases
there are compatible triple junctions leading to numerous zero elastic energy microstructures
of austenite coexisting with martensite as discussed in Theorem \ref{thmI}.

The green box is particularly interesting.  If $\gamma=1$ (only) then the cofactor conditions
are satisfied (Lemma \ref{lemcomp}).  As can be seen from Table \ref{twintbl:monoI} there are then a very large number of Compound
domains that satisfy the cofactor conditions.  For each of these there are infinitely many
deformation gradients of martensite that coexist with $\bfI$ in the sense of the crystallographic
theory.  Thus, there is a huge collection of compatible deformations of austenite and martensite,
although none of these have zero elastic energy.  Under our hypotheses, Compound twins with
$\gamma \ne 1$ cannot satisfy the cofactor conditions, and the numerical evidence suggests that
this is also true for the Compound domains.

\definecolor{milky}{rgb}{1,1,0.85}
\definecolor{orange}{rgb}{0.9,0.5,0.2}
\definecolor{lightorange}{rgb}{1, 0.8, 0.1}
\definecolor{violet}{rgb}{1, 0.5, 0.6}
\renewcommand{\arraystretch}{1.2}
\newlength\Origarrayrulewidth
\newcommand{\Cline}[2]{%
  \noalign{\global\setlength\Origarrayrulewidth{\arrayrulewidth}}%
  \noalign{\global\setlength\arrayrulewidth{2pt}}\arrayrulecolor{#1}\cline{#2}\arrayrulecolor{black}%
  \noalign{\global\setlength\arrayrulewidth{\Origarrayrulewidth}}%
}
\newcommand{\thickcell}[2]{%
  \multicolumn{1}{!{\color{#1}{\vrule width 2pt}}c!{\color{#1}{\vrule width 2pt}}}{#2}%
}
\begin{threeparttable}[htp]
\small
\centering
\caption{List of all possible twin systems for cubic to monoclinic transformations with $<\!\!100\!\!>_\textup{a}$ as the inherited 
2-fold axis. The notation $(i, j)$ presents domains which are symmetry related by $\mathbf U_i = \mathbf R \mathbf U_j \mathbf R^\textup{T}$, where $\mathbf R \in \mathcal P$ is characterized by the angle and rotational axis.  See text.}\label{twintbl:monoI}\vspace*{5pt}
    
\begin{tabular}{c|c|ll|c|c}

\hline

\multirow{2}{*}{Type} & \multirow{2}{*}{$\begin{array}{c}\mathbf R\\\theta^\circ, [h,k,l]\end{array}$} & \multicolumn{2}{c|}{\multirow{2}{*}{Type I/II twins or domains}} & \multicolumn{2}{c}{Compound twins or domains}\\
\cline{5-6}
\ &\ &\ & \ &$\gamma = 1$&$\gamma \neq 1$\\
 \hline
\Cline{green}{5-5}
    
\multirow{9}{*}{$\substack{\text{\normalsize Conventional}\\\text{\normalsize twins}}$}

& $180^\circ, [1, 0, 0]$ & & &\thickcell{green}{\renewcommand{\arraystretch}{0.8} $\begin{array}{c}(1, 2), (7, 8)\\(3, 4), (9, 10)\end{array}$\renewcommand{\arraystretch}{1.2}}  & \renewcommand{\arraystretch}{0.8} $\begin{array}{c}(1, 2), (7, 8)\\(3, 4), (9, 10)\end{array}$\renewcommand{\arraystretch}{1.2} \\  
\cline{2-2}

& $180^\circ, [0, 1, 0]$ & & & \thickcell{green}{ \renewcommand{\arraystretch}{0.8}$\begin{array}{c}(1, 2), (5, 6)\\(3, 4), (11, 12)\end{array}$ \renewcommand{\arraystretch}{1.2}} &  \renewcommand{\arraystretch}{0.8}$\begin{array}{c}(1, 2), (5, 6)\\(3, 4), (11, 12)\end{array}$ \renewcommand{\arraystretch}{1.2}\\  \cline{2-2}

& $180^\circ, [0, 0, 1]$ & & &\thickcell{green}{ \renewcommand{\arraystretch}{0.8}$\begin{array}{c}(7, 8), (11, 12)\\(5, 6), (9, 10)\end{array}$ \renewcommand{\arraystretch}{1.2}} &  \renewcommand{\arraystretch}{0.8}$\begin{array}{c}(7, 8), (11, 12)\\(5, 6), (9, 10)\end{array}$ \renewcommand{\arraystretch}{1.2} \\

 \cline{2-2} \Cline{blue}{3-3} \Cline{blue}{4-4}

& $180^\circ, [1, 0, 1]$& \thickcell{blue}{$(1, 6), (2, 5),$} & \thickcell{blue}{$(3, 12), (4, 11)$} & \thickcell{green}{$(7, 9), (8, 10)$} & $(7, 9), (8, 10)$ \\

\cline{2-2}

& $180^\circ, [1, 0, \bar{1}]$& \thickcell{blue}{$(1, 5), (2, 6),$} & \thickcell{blue}{$(3, 11), (4, 12)$} & \thickcell{green}{$(7, 9), (8, 10)$} & $(7, 9), (8, 10)$\\

\cline{2-2}\Cline{blue}{3-3} \Cline{blue}{4-4}

& $180^\circ, [1, 1, 0]$ &\thickcell{blue}{$(5, 10), (6, 9),$} & \thickcell{blue}{$(7, 12), (8, 11)$} & \thickcell{green}{$(1, 3), (2, 4)$} & $(1, 3), (2, 4)$ \\

\cline{2-2}

& $180^\circ, [\bar{1}, 1, 0]$ &\thickcell{blue}{$(5, 9), (6, 10),$} &\thickcell{blue}{$(7, 11), (8, 12)$} & \thickcell{green}{$(1, 3), (2, 4)$} & $(1, 3), (2, 4)$ \\

\cline{2-2}  \Cline{blue}{3-3}  \Cline{blue}{4-4}

& $180^\circ, [0, 1, 1]$& \thickcell{blue}{$(1, 8), (2, 7),$} & \thickcell{blue}{$(3, 10), (4, 9)$} & \thickcell{green}{$(5, 11), (6, 12)$} & $(5, 11), (6, 12)$ \\

\cline{2-2}

& $180^\circ, [0, \bar{1}, 1]$ & \thickcell{blue}{$(1, 7), (2, 8),$} & \thickcell{blue}{$ (3, 9), (4, 10)$} & \thickcell{green}{$(5, 11), (6, 12)$} & $(5, 11), (6, 12)$ \\  

  \Cline{blue}{3-3}  \Cline{blue}{4-4}  
\cline{1-4}

\multirow{6}{*}{$\substack{\text{\normalsize Domains} \\\ \text{(\normalsize all are} \\\ \text{\normalsize nonconventional}\\\text{\normalsize twins)}}$}

& \multirow{2}{*}{$90^\circ, [0, 0, 1]$}& \thickcell{blue}{$(5, 9), (6, 10),$} & \thickcell{blue}{$(7, 12), (8, 11)$} & \thickcell{green}{\multirow{2}{*}{$(1, 4), (2, 3)$}} & \multirow{2}{*}{$(1, 4), (2, 3)$} \\


& & \thickcell{blue}{$(5, 10), (6, 9),$} & \thickcell{blue}{$(7, 11), (8, 12)$} & \thickcell{green}{\ }&\\ 

 \Cline{blue}{3-3}  \Cline{blue}{4-4}   
\cline{2-4}

& \multirow{2}{*}{$90^\circ, [0, 1, 0]$}& \thickcell{blue}{$(1, 5), (2, 6),$} & \thickcell{blue}{$(3, 11), (4, 12)$} & \thickcell{green}{\multirow{2}{*}{$(7, 10), (8, 9)$}} & \multirow{2}{*}{$(7, 10), (8, 9)$} \\


& & \thickcell{blue}{$(1, 6), (2, 5),$} & \thickcell{blue}{$(4, 11), (3, 12)$} & \thickcell{green}{\ }& \\

\Cline{blue}{3-3}  \Cline{blue}{4-4}    
\cline{2-4}

& \multirow{2}{*}{$90^\circ, [1, 0, 0]$}& \thickcell{blue}{$(1, 8), (2, 7),$} & \thickcell{blue}{$(3, 10), (4, 9)$} & \thickcell{green}{\multirow{2}{*}{$(5, 12), (6, 11)$}} & \multirow{2}{*}{$(5, 12), (6, 11)$} \\


& & \thickcell{blue}{$(1, 7), (2, 8), $} & \thickcell{blue}{$(3, 9), (4, 10)$} & \thickcell{green}{\ }& \\
\Cline{blue}{3-3}
\Cline{blue}{4-4}
\Cline{green}{5-5}
\hline

\end{tabular} 
    
\end{threeparttable}

\section{Nucleation under the cofactor conditions}

The analysis given above suggests simple microstructures
with zero elastic energy that allow a continuous increase of the
volume of the new phase, starting at zero volume, in a material
satisfying the cofactor conditions. In a single
crystal there are obviously cases in which a layer of
martensite can grow in austenite and {\it vice versa}, merely
due to the condition $\lambda_2 =1$.
We illustrate some cases in which the set on which nucleation
takes place is lower dimensional, e.g., a line. As illustrated
and analyzed by \citet{ball_2011, ball_2011a} and \citet{seiner_2009},
 the geometry of these nuclei are important for
nucleation phenomena.

\begin{figure}[htp]
\begin{center}
\subfigure{\includegraphics[width = 0.8 \textwidth]{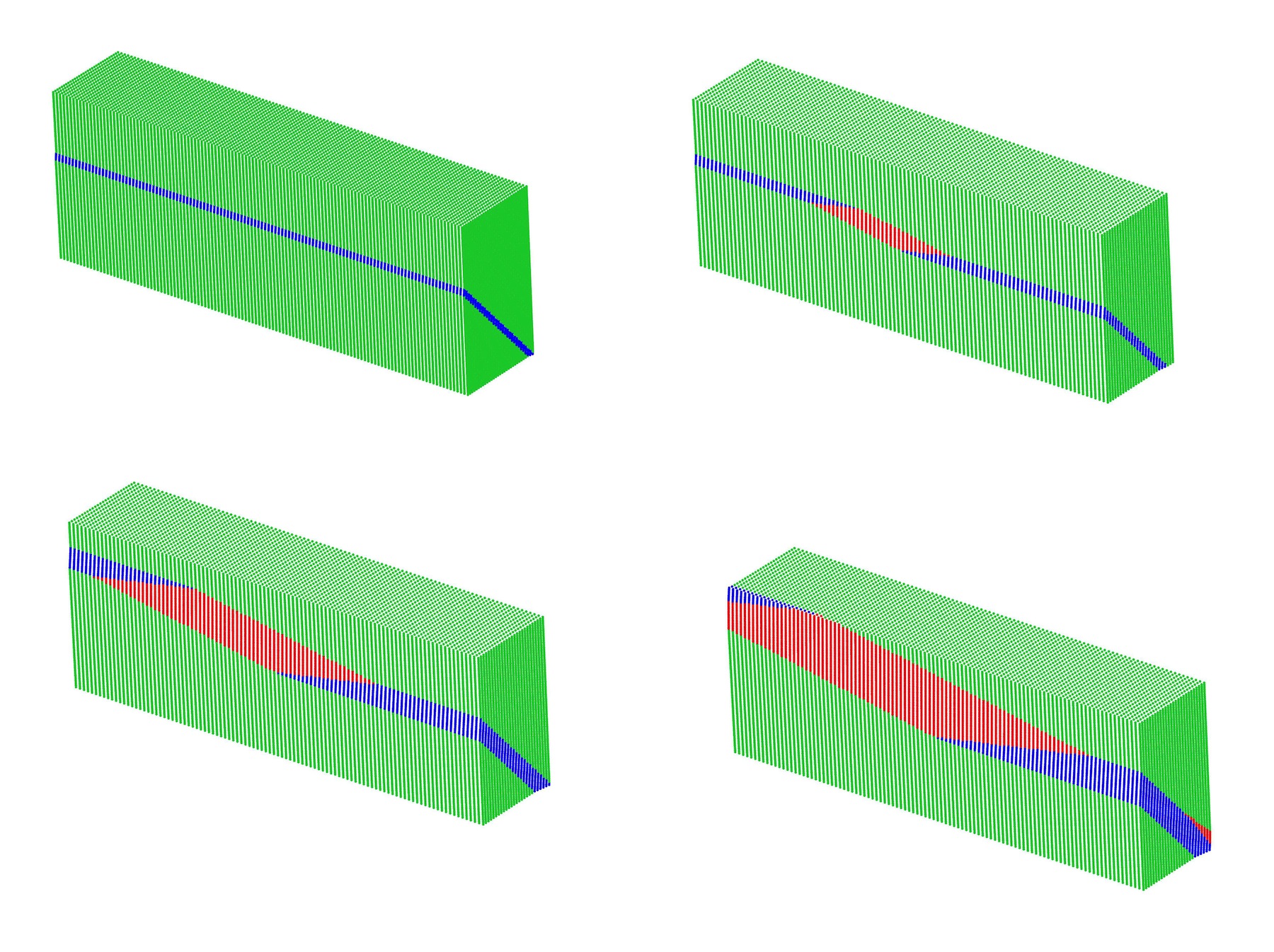}}\hspace{20pt}
\caption{ Example of nucleation of austenite (red) in a band of martensite
with zero elastic energy, under the cofactor condtions for Type I domains.
The blue and green are  two compatible variants
of martensite that can form a triple junction with austenite, as described
by Theorem \ref{thmI}. \label{nucleation_AinM}}
\end{center}
\end{figure}

An example of nucleation of austenite in martensite is given in
Figure \ref{nucleation_AinM}.  It is constructed from any Type I
domain for which the cofactor conditions are satisfied, and it uses
the  three deformation gradients $\bfI, \bfR_0^{\sigma} \bfU,
\bfR_0^{\sigma}\hat{\bfR} \hat{\bfU}$  given in Theorem  \ref{thmI}.
The regions on which these deformation gradients occur are shown as
red, green and blue, respectively,  in Figure \ref{nucleation_AinM}.
Nucleation in this case occurs on a line; four triple junctions
are simultaneously emitted from this line.

Under the same conditions, a simple mechanism for boundary nucleation
of martensite in austenite is shown in Figure \ref{nucleation_MinA}.  This is seen as a
simplified version of Figure \ref{curve_inf}.
\begin{figure}[htp]
\begin{center}
\subfigure{\includegraphics[width = 0.75 \textwidth]{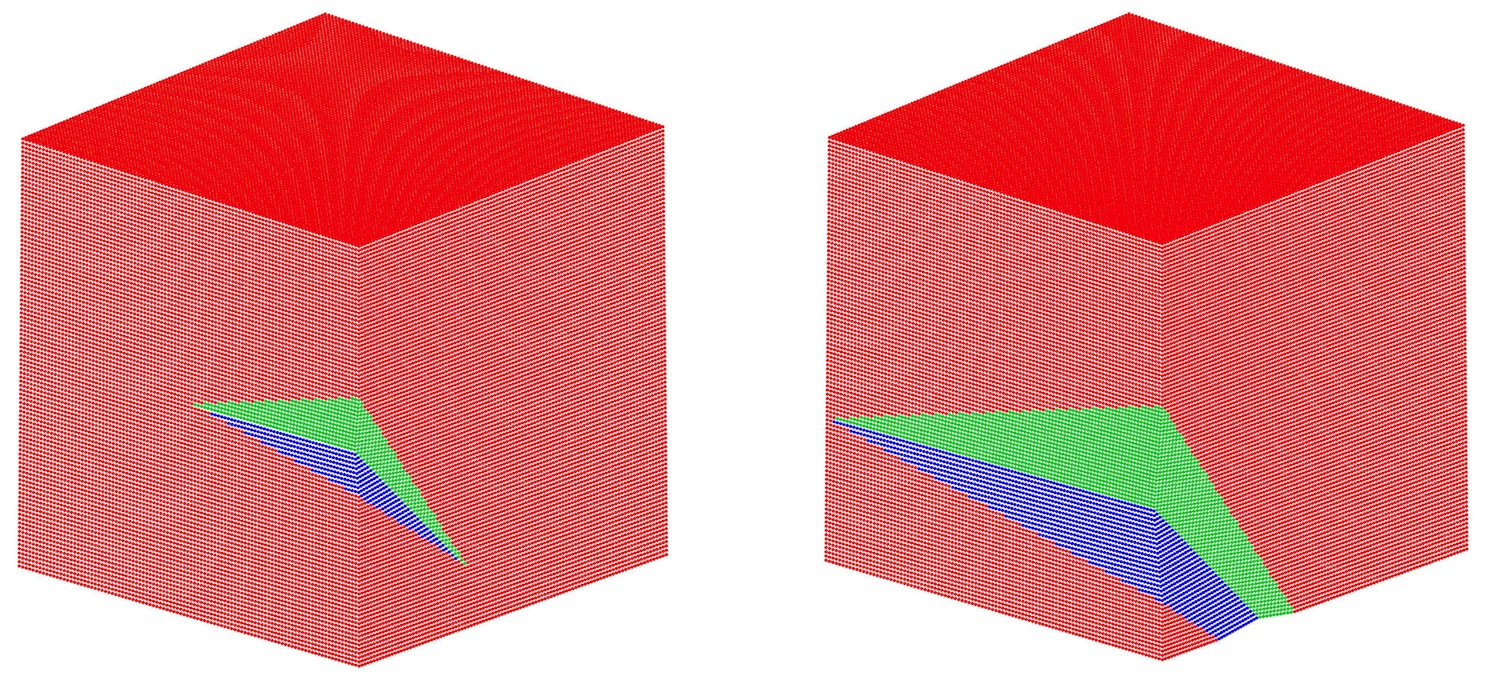}}\hspace{20pt}
\caption{ Example of nucleation of martensite (blue/green bands) in austenite (red lattice)
with zero elastic energy, with satisfaction of the cofactor conditions for Type I domains. \label{nucleation_MinA}}
\end{center}
\end{figure}

\section{Cofactor conditions in the geometrically linear case}

A number of versions of the geometrically linear theory of martensite
are in wide use for both fundamental theoretical and computational
studies \citep{khacha_1969, roitburd_1978, barsch_1984, kohn_1989, bhatt_1993,knupfer_2011}.  
There is a version of the cofactor conditions in the geometrically linear case.
Since the satisfaction of the cofactor conditions is expected to
have a dramatic effect on predicted microstructure and behavior in
the geometrically linear theory, we give these conditions here.

The cofactor conditions in geometrically linear theory
are different from the cofactor conditions in the geometrically
nonlinear theory, owing to the fact that the geometrically linear
theory is obtained from the geometrically nonlinear theory by
Taylor expansion \citep{bhatt_1993} or asymptotic analysis \citep{shimidt_08}.
As discussed below,
the cofactor conditions in the geometrically linear case should not
be used for alloy development in materials with appreciable
transformation strain.

The cofactor conditions in the geometrically linear case can be obtained
in two ways: i) by formal linearization of the cofactor conditions in
the geometrically nonlinear case following the expansion
given in \citet{ball_1992}, or ii) by writing down the equations of the
crystallographic theory of martensite in the geometrically linear case,
and imposing the condition that they be satisfied for any volume
fraction $0 \le f \le 1$.  The latter method is preferable because it
proves the existence of actual energy minimizing microstructures (or minimizing
sequences) for a broad family of geometrically linear theories of martensite.
We therefore follow method (ii).

The geometrically linear version of the crystallographic theory of
martensite in the cubic-to-tetragonal case first appeared in a paper of \citet{barkurt_1953} in the same
issue of AIME Journal of Metals as the general version of the crystallographic
theory by \citet{wechsler_1953}.

The basic kinematics of geometrically linear theory is the same as linearized
elasticity: it is based on the
displacement gradient $\nabla \bfu = \bfH \in \mathbb R^{3 \times 3}$,
which is decomposed into symmetric and skew parts $\bfH = \bfS + \bfW$,
$\bfS = \bfS^T, \ \bfW = -\bfW^T$ representing infinitesimal strain and
rotation.  A particular strain $\bfS = \bfE$ is given as the {\it transformation
strain}, and strains associated with the variants of martensite
are obtained by symmetry.  As above, we consider another variant
defined by the strain $\hat{\bfE}  = \bfQ \bfE \bfQ^T$ where
$\bfQ = -\bfI + 2 \hat{\bfe} \otimes \hat{\bfe}$, $|\hat{\bfe}| = 1$.
The basic compatibility
condition for variants with displacement gradients
$\nabla \bfu = \hat{\bfE} + \hat{\bfW}$ and $\nabla \bfu = \bfE$ is
\beq
 \hat{\bfE} + \hat{\bfW} -  \bfE = \bfa \otimes \bfn.   \label{lin_twin}
\eeq
(This is also the direct geometric linearization of (\ref{twin_rel2}).)
Taking the symmetric part of (\ref{lin_twin}) we have the compatibility
condition of geometrically linear theory:
\beq
 \hat{\bfE}  -  \bfE = \frac{1}{2}(\bfa \otimes \bfn + \bfn \otimes \bfa).
 \label{lin_twin2}
\eeq
By taking the trace, we have necessarily that $\bfa \cdot \bfn = 0$.
The basic lemma governing solutions of (\ref{lin_twin2}) is the following.
\begin{lem} \label{lin_comp} Necessary and sufficient conditions that
$\bfS \in \mathbb R^{3 \times 3}_\text{sym}$  is expressible in the form
$\bfS  = (1/2) (\bfa \otimes \bfn + \bfn \otimes \bfa)$ for some nonzero
$\bfa, \bfn \in \mathbb R^3$ is that the middle eigenvalue of $\bfS$
is zero.  If
$\bfS = s_1 \bfe_1 \otimes \bfe_1 + s_3 \bfe_3 \otimes \bfe_3$
with $\bfe_1, \bfe_3$ orthonormal and $ s_1 \le 0 \le s_3$, then
solutions $\bfa, \bfn$ of $\bfS  = (1/2) (\bfa \otimes \bfn + \bfn \otimes \bfa)$
can be taken as
\beq
\bfa = \sqrt{-s_1} \bfe_1 + \sqrt{s_3} \bfe_3, \quad
\bfn = -\sqrt{-s_1} \bfe_1 + \sqrt{s_3} \bfe_3.   \label{lin_an}
\eeq
These are unique up to switching $\bfa \to \bfn,\ \bfn \to \bfa$
and scaling $\bfa \to \mu \bfa, \
\bfn \to (1/\mu) \bfn$, $\mu \ne 0$.
\end{lem}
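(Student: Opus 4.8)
The plan is to read off necessity from the coefficients of the characteristic polynomial of $\bfS$, obtain sufficiency by an explicit computation in an orthonormal eigenbasis, and settle uniqueness by matching entries in that same basis.

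For necessity, suppose $\bfS=\tfrac12(\bfa\otimes\bfn+\bfn\otimes\bfa)$ with $\bfa,\bfn\neq\mathbf{0}$. This matrix has rank at most two, so $\det\bfS=0$ and $0$ is an eigenvalue; denote the other two by $\mu,\nu$. Now $0$ is a middle eigenvalue of $\bfS$ exactly when $\mu\nu\le 0$, and $\mu\nu$ equals the sum of the principal $2\times2$ minors, $\tfrac12\bigl[({\rm tr}\,\bfS)^2-{\rm tr}(\bfS^2)\bigr]$. A short trace computation gives ${\rm tr}\,\bfS=\bfa\cdot\bfn$ and ${\rm tr}(\bfS^2)=\tfrac12\bigl[(\bfa\cdot\bfn)^2+|\bfa|^2|\bfn|^2\bigr]$, hence $\mu\nu=\tfrac14\bigl[(\bfa\cdot\bfn)^2-|\bfa|^2|\bfn|^2\bigr]\le 0$ by Cauchy--Schwarz, which is the claim. (The same identity shows $\bfS=\mathbf{0}$ is impossible in this form, as it should be: $\bfa\otimes\bfn+\bfn\otimes\bfa=\mathbf{0}$ applied to $\bfn$ forces $\bfa=\mathbf{0}$.)

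For sufficiency, write $\bfS=s_1\,\bfe_1\otimes\bfe_1+s_3\,\bfe_3\otimes\bfe_3$ in an orthonormal eigenbasis with the middle eigenvalue zero and $s_1\le 0\le s_3$, and take $\bfa,\bfn$ as in \eqref{lin_an}. Expanding $\bfa\otimes\bfn$ in this basis, its diagonal part is $s_1\bfe_1\otimes\bfe_1+s_3\bfe_3\otimes\bfe_3$ and its off-diagonal part is $\sqrt{-s_1 s_3}\,(\bfe_1\otimes\bfe_3-\bfe_3\otimes\bfe_1)$, which is antisymmetric; symmetrizing therefore returns $\tfrac12(\bfa\otimes\bfn+\bfn\otimes\bfa)=\bfS$, and $\bfa,\bfn\neq\mathbf{0}$ because $\bfS\neq\mathbf{0}$ makes $s_1$ or $s_3$ nonzero.

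For uniqueness, in the generic case $s_1<0<s_3$ the range of $\bfS$ is $\mathrm{span}\{\bfe_1,\bfe_3\}$; any representation $\bfS=\tfrac12(\bfa'\otimes\bfn'+\bfn'\otimes\bfa')$ forces $\bfa',\bfn'$ into that plane (an $\bfe_2$-component of either would make $\bfS\bfe_2\neq\mathbf{0}$), and matching the three independent entries of $\bfS$ in the basis $\{\bfe_1,\bfe_3\}$ gives $a_1'n_1'=s_1$, $a_3'n_3'=s_3$, $a_1'n_3'+a_3'n_1'=0$; solving this small system shows $(\bfa',\bfn')$ equals \eqref{lin_an} up to the scaling $\bfa\to\mu\bfa,\ \bfn\to\mu^{-1}\bfn$ and the swap $\bfa\leftrightarrow\bfn$, the reciprocal scaling being forced since $\tfrac12(\mu\bfa)\otimes(\mu^{-1}\bfn)+\cdots=\bfS$. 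The degenerate patterns $s_1=0$ or $s_3=0$ are handled directly ($\bfS$ is then a nonzero rank-one semidefinite $c\,\bfu\otimes\bfu$, forcing $\bfa,\bfn$ both parallel to $\bfu$ with scalar product $c$). I expect the only real bookkeeping to be this case split in the uniqueness part; necessity and sufficiency are each essentially two lines once the characteristic-polynomial identity and the eigenbasis are in hand.
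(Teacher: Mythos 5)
Your proof is correct, and it differs from the paper's mainly in how necessity and uniqueness are handled. For necessity the paper also starts from the rank observation that $0$ is an eigenvalue, but then shows it is the \emph{middle} one by testing the quadratic form $\bfz\cdot\bfS\bfz$ on the bisector of $\bfa$ and $\bfn$ (where it is $\ge 0$) and on the in-plane direction perpendicular to that bisector (where it is $\le 0$), i.e.\ a variational/test-vector argument; you instead encode the same sign information in the single invariant identity $\mu\nu=\tfrac14\bigl[(\bfa\cdot\bfn)^2-|\bfa|^2|\bfn|^2\bigr]\le 0$ obtained from trace identities plus Cauchy--Schwarz. Both are elementary; your route gives as a byproduct the precise degeneracy criterion (equality exactly when $\bfa\parallel\bfn$, i.e.\ the rank-one case), while the paper's bisector argument makes the geometry transparent, since the two bisectors are essentially the eigenvectors $\bfe_1,\bfe_3$, i.e.\ the inverse of the construction (\ref{lin_an}). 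The sufficiency computation is the same direct verification in both. For uniqueness the paper gives no argument at all (it defers to the cited reference), so your entry-matching argument in the $\{\bfe_1,\bfe_3\}$ basis, with the separate rank-one case, is a genuine addition; the only compressed step is the claim that an $\bfe_2$-component of $\bfa'$ or $\bfn'$ ``would make $\bfS\bfe_2\ne\mathbf 0$'' --- as literally stated this needs one more line, e.g.\ note that the range of $\bfS$ lies in ${\rm span}\{\bfa',\bfn'\}$, so rank two forces ${\rm span}\{\bfa',\bfn'\}={\rm span}\{\bfe_1,\bfe_3\}$, or equivalently that $\bfS\bfe_2=0$ together with a nonzero $\bfe_2$-component would force $\bfn'\parallel\bfa'$ and hence $\bfS$ semidefinite of rank one, contradicting $s_1<0<s_3$. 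With that line added the argument is complete.
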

\begin{proof} (See e.g., \citet{kaushik_03})
Briefly, it is clear that a necessary condition that $\bfS$ has the given form
is  that $\bfS$ has an eigenvalue
equal to zero.  By examining the quadratic form
$\bfz \cdot \bfS \bfz$ with $\bfz$ taken as a bisector of $\bfa$ and $\bfn$,
and as a vector in the $\bfa, \bfn$ plane that is perpendicular to this
bisector, it is seen that the zero eigenvalue is the middle one.
The converse is proved by direct calculation using (\ref{lin_an}).
\end{proof}

In the special case that $\hat{\bfE} = \bfQ \bfE \bfQ^T$ as given above,
an alternative
representation of a solution of (\ref{lin_twin2}) is possible:
\beq
 \bfa = 4 \big( (\hat{\bfe} \cdot \bfE \hat{\bfe})\hat{\bfe} -
 \bfE \hat{\bfe} \big), \quad \bfn = \hat{\bfe}.    \label{alt_an}
\eeq
This form of the solution can be interpreted as the geometric linearization
of the Type I/II domains.  That is, due to the switching invariance
of Lemma \ref{lin_comp}, there exist infinitesimal rotations
$\hat{\bfW}_I  = -\hat{\bfW}_I^T$ and $\hat{\bfW}_{II} = -\hat{\bfW}_{II}^T$ such that, with
$\bfa$ and $\bfn$ defined by (\ref{alt_an}),
\beq
\hat{\bfE} + \hat{\bfW}_I   -  \bfE = \bfa \otimes \bfn, \quad  \quad
\hat{\bfE} + \hat{\bfW}_{II}   -  \bfE = \bfn \otimes \bfa,
\eeq
i.e., either $\bfa$ or $\bfn$ can be considered the interface normal.
$ \hat{\bfW}_I = - \hat{\bfW}_{II}$ as defined by these formulas
is necessarily skew.

From these compatibility conditions and the comments of Section \ref{crythe}
it is seen
that the equations of the crystallographic theory of martensite in
the geometrically linear case are the following.   Given
$\bfE \in \mathbb R^{3 \times 3}_\text{sym}$ and $\hat{\bfE} = \bfQ \bfE \bfQ^T$
as above, so that $ \hat{\bfE}  -  \bfE = \frac{1}{2}(\bfa \otimes \bfn +
\bfn \otimes \bfa)$ for some $\bfa, \bfn \in \mathbb R^3$, find
$\bfb_f, \bfm_f \in \mathbb R^3$ and $0 \le f \le 1$ such that
\beq
   f \hat{\bfE} + (1-f) \bfE = \frac{1}{2}
   (\bfb_f \otimes \bfm_f + \bfm_f \otimes \bfb_f).  \label{lin_cry}
\eeq
The {\it cofactor conditions in geometrically linear theory} are
necessary and sufficient conditions that there
exist $\bfb_f, \bfm_f \in \mathbb R^3$ satisfying (\ref{lin_cry})
for every  $0 \le f \le 1$. An explicit form of these conditions is
given in the following theorem.

\begin{theorem} \label{lin_cof} (Cofactor conditions in the geometrically
linear theory)  Let $\bfE \in \mathbb R^{3 \times 3}_\text{sym}$ and
$\hat{\bfe} \in  \mathbb R^3,\ |\hat{\bfe}| = 1$, be given.  Define
$\hat{\bfE}  = \bfQ \bfE \bfQ^T$ where
$\bfQ = -\bfI + 2 \hat{\bfe} \otimes \hat{\bfe}$, suppose that
$\hat{\bfE} \ne \bfE$, and define $\bfa, \bfn$ by (\ref{alt_an}).  There
exist $\bfb_f, \bfm_f \in \mathbb R^3$ satisfying (\ref{lin_cry}) for
every $0 \le f \le 1$ if and only if
\begin{align*}
&\eps_2 = 0, \text{ where $\eps_2$ is the middle eigenvalue of
$\mathbf E$, and rank $\bfE = 2$,}\tag{CCL1}\label{ccl1}\\
&(\bfa \cdot \bfv_2)(\bfn \cdot \bfv_2) = 0, \  \text{where}\ \bfE\bfv_2 = 0,
\ |\bfv_2| = 1, \tag{CCL2}\label{ccl2}\\
& \big(\textup{tr}(\bfE + \hat{\bfE})\big) ^2 -
\textup{tr}\big((\bfE + \hat{\bfE})^2 \big) \le 0.  \tag{CCL3}\label{ccl3}
\end{align*}
\end{theorem}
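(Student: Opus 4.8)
The plan is to run the exact analog of the proof of Theorem \ref{thm:cofactor}, with the determinant criterion for perfect compatibility in the nonlinear theory replaced by the middle-eigenvalue criterion of Lemma \ref{lin_comp}. First I would set $\bfS(f) := f\hat{\bfE} + (1-f)\bfE = \bfE + \frac{f}{2}(\bfa\otimes\bfn + \bfn\otimes\bfa)$, using the representation (\ref{alt_an}), so that by Lemma \ref{lin_comp} the equation (\ref{lin_cry}) is solvable for a given $f$ if and only if the middle eigenvalue of the symmetric matrix $\bfS(f)$ is zero. A symmetric $3\times3$ matrix has zero middle eigenvalue exactly when $\det\bfS = 0$ and its second principal invariant $I_2(\bfS) = \frac{1}{2}[(\text{tr}\,\bfS)^2 - \text{tr}(\bfS^2)]$ is $\le 0$ (the second condition says the other two eigenvalues, possibly coincident, do not have the same sign, so that $0$ lies between them). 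Hence the cofactor conditions are equivalent to: $g(f) := \det\bfS(f) = 0$ \emph{and} $h(f) := I_2(\bfS(f)) \le 0$ for every $f\in[0,1]$.

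Next I would analyze these two scalar functions. Since $\frac{1}{2}(\bfa\otimes\bfn+\bfn\otimes\bfa)$ has rank two (recall $\bfa\cdot\bfn = 0$ and $\bfa,\bfn\ne 0$), its determinant vanishes and $g$ is at most quadratic; moreover $\bfS(0) = \bfE$ and $\bfS(1) = \hat{\bfE} = \bfQ\bfE\bfQ^T$ are orthogonally conjugate, so $g(0) = \det\bfE = g(1)$, and we may write $g(f) = A\,f(f-1) + \det\bfE$ with $A = -g'(0) = -\bfa\cdot\adj(\bfE)\bfn$. For $h$, using $\text{tr}(\bfa\otimes\bfn+\bfn\otimes\bfa) = 2\bfa\cdot\bfn = 0$ one finds that $h$ is a downward parabola with leading coefficient $-\frac{1}{4}|\bfa|^2|\bfn|^2 < 0$, while orthogonal invariance of $I_2$ gives $h(0) = h(1) = I_2(\bfE)$; thus $h$ is symmetric about $f = 1/2$ and $\max_{[0,1]} h = h(1/2)$. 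Finally, the identity $\bfE + \hat{\bfE} = 2\bfS(1/2)$ yields $8\,h(1/2) = (\text{tr}(\bfE+\hat{\bfE}))^2 - \text{tr}((\bfE+\hat{\bfE})^2)$, so (CCL3) is precisely the statement $h(1/2)\le 0$.

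Assembling the pieces gives both directions. For necessity: solvability at $f = 0$ forces $\det\bfE = 0$ and $I_2(\bfE)\le 0$; and if $I_2(\bfE) = 0$ then the downward parabola $h$, which cannot be identically zero, would be strictly positive on $(0,1)$, contradicting solvability there; hence $I_2(\bfE) < 0$, so $\bfE$ has eigenvalues $\eps_1 < \eps_2 = 0 < \eps_3$, which is (CCL1). Given (CCL1) one has $\adj(\bfE) = \eps_1\eps_3\,\bfv_2\otimes\bfv_2$, so $A = -\eps_1\eps_3(\bfa\cdot\bfv_2)(\bfn\cdot\bfv_2)$; since $\eps_1\eps_3\ne 0$, the requirement $g\equiv 0$ on $[0,1]$ (i.e. $\det\bfE = 0$ together with $A = 0$) is equivalent to (CCL2); and $h\le 0$ on $[0,1]$ is equivalent to $h(1/2)\le 0$, i.e. (CCL3). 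Conversely, (CCL1)+(CCL2) give $g\equiv 0$ on $[0,1]$, (CCL1)+(CCL3) give $h\le 0$ on $[0,1]$, hence $\bfS(f)$ has zero middle eigenvalue for all $f$, and Lemma \ref{lin_comp} produces the desired $\bfb_f,\bfm_f$. The only genuinely delicate point is the necessity of the rank condition $\text{rank}\,\bfE = 2$: it is \emph{not} extracted from the $f = 0$ equation alone (a rank-one $\bfE$ already satisfies it), but only from examining $h(f)$ for small $f>0$; the key computational input there is $\bfa\cdot\bfE\bfn < 0$ strictly, which follows from Cauchy--Schwarz together with $\hat{\bfE}\ne\bfE$ (equivalently, $\hat{\bfe}$ not parallel to the eigendirection of a rank-one $\bfE$). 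The remaining steps — the quadratic/parabola bookkeeping and the evaluations of $A$, $\adj(\bfE)$, and $h(1/2)$ using $\bfE\bfv_2 = 0$ and (\ref{alt_an}) — are routine.
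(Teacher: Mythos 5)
Your proposal is correct and follows essentially the same route as the paper: reduce via Lemma \ref{lin_comp} to the requirement that the middle eigenvalue of $f\hat{\bfE}+(1-f)\bfE$ vanish for every $f$, then treat the determinant and the second invariant as quadratics in $f$ with equal endpoint values (by orthogonal invariance) whose relevant extremum sits at $f=1/2$. The only deviations are minor: you derive rank $\bfE=2$ from the strict concavity of $h$ (leading coefficient $-\tfrac14|\bfa|^2|\bfn|^2<0$) instead of the paper's rank-one dyad argument, and you pin down that coefficient exactly where the paper only establishes the sign of the corresponding term via Cauchy--Schwarz.
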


\begin{proof} Necessity of the conditions (CCL).   Clearly $\eps_2 = 0$
is a necessary condition at $f = 0$.  Also, $\bfE$ cannot vanish because
$\hat{\bfE} \ne \bfE$.  Potentially, $\bfE$ could be of rank 1,
$\bfE = \bfg \otimes \bfg \ne 0$, but then we would have $\hat{\bfE} =
\hat{\bfg} \otimes \hat{\bfg}$ with $|\bfg| = |\hat{\bfg}|$
and $\bfg \nparallel \hat{\bfg}$.
The unique zero eigenspace of $f \hat{\bfE} + (1-f) \bfE$ for $0<f<1$
would then be the 1-D subspace $\delta\, \bfg \times \hat{\bfg}$,
$\delta \in \mathbb R$.  The only possibility that the corresponding
zero eigenvalue of $f \hat{\bfE} + (1-f) \bfE$
would be its middle eigenvalue is that it is a double eigenvalue, because
the quadratic form $f \bfz \cdot \hat{\bfE}\bfz + (1-f) \bfz \cdot \bfE \bfz$
is clearly positive semidefinite.  This contradicts that the zero eigenspace
is one dimensional.  Hence, rank $\bfE = 2$.

The necessity of (\ref{ccl2}) follows by direct calculation of the
determinant of $f\hat{\mathbf E}+(1-f)\mathbf E$.  That is, if we write $\bfE = diag(\eps_1, 0, \eps_3)$
for $\eps_1< 0 < \eps_3$ (using (\ref{ccl1})), a direct calculation gives
\beq
  \det \big( f \hat{\bfE} + (1-f) \bfE \big) = \det \big(\bfE + (f/2)(\bfa \otimes \bfn
  + \bfn \otimes \bfa)\big) =
  4 f(1-f) \eps_1 \eps_3 (\bfa \cdot \bfv_2)(\bfn \cdot \bfv_2).
\eeq
The remaining necessary condition is that the implied
zero eigenvalue is the middle one.  Assume  (\ref{ccl1})and (\ref{ccl2})
and let $\eps_1^f, 0, \eps_2^f$
be the eigenvlaues of $f \hat{\bfE} + (1-f) \bfE$, with no particular
ordering.  If 0 is the middle eigenvalue, then $\eps_1^f \eps_2^f \le 0$
for $0 \le f \le 1$.  The quantity $\eps_1^f \eps_2^f$ is the second invariant
of $f \hat{\bfE} + (1-f) \bfE$.  This invariant is quadratic in $f$ and
has the same values at $f = 0,1$, and so it has the form
${\rm II}_f = \alpha f(1-f) + \eps_1 \eps_3$.  The coefficient $\alpha$
can be evaluated from $\alpha = d{\rm II}_f(0)/df = -\bfa \cdot \bfE \bfn$.
Also, $\alpha \ge 0$ by
$\bfa \cdot \bfE \bfn = \bfE \cdot (\hat{\bfE} - \bfE)$ and the
Cauchy�-Schwarz inequality, $\hat{\bfE} \cdot \bfE \le
|\bfE||\hat{\bfE}| = |\bfE|^2  = \bfE \cdot \bfE$.  Therefore, the largest
value of $\eps_1^f \eps_2^f \le 0$  occurs at $f = 1/2$, and so we have
the necessary condition ${\rm II}_{1/2} \le 0$ which is (\ref{ccl3}).
The conditions (\ref{ccl1}), (\ref{ccl2}) and (\ref{ccl3})
are also sufficient for (\ref{lin_cry}) to be satisfied for
every $0 \le f \le 1$, since they imply that the middle eigenvalue
of $f \hat{\bfE} + (1-f) \bfE$ is zero for all $0 \le f \le 1$.
\end{proof}

The explicit form of the conditions (\ref{ccl1})-(\ref{ccl3}) in the
eigenbasis of $\bfE$ is
\begin{align*}
&\bfE = diag(\eps_1, 0, \eps_3), \ \ \eps_1 < 0 < \eps_3, \tag{CCL1'}
\label{ccla} \\
& n_2^2 (n_1^2 \eps_1 + n_3^2 \eps_3) = 0, \tag{CCL2'}
\label{cclb} \\
& \left\{ \begin{array}{ll}
 \eps_1 \eps_3 + n_1^2 n_3^2 (\eps_3-\eps_1)^2 \le 0, & {\rm if}\ n_2 = 0, \\
  \eps_1 \eps_3 + n_3^2 \eps_3(\eps_3-\eps_1) \le 0, & {\rm if}\ n_1^2 \eps_1 + n_3^2 \eps_3 = 0.
  \end{array}  \right. \tag{CCL3'}
\label{cclc}
\end{align*}

As expected, the elastic transition layer can also be eliminated in the geometrically
linear case.  This occurs if $n_1^2 \eps_1 + n_3^2 \eps_3 = 0$.
It follows from $n_1^2 \eps_1 + n_3^2 \eps_3 = 0$ and (\ref{ccla}), (\ref{cclb})
that $\bfb_0 \parallel \bfb_1$ or $\bfm_0 \parallel \bfm_1$,
which in turn lead to triple junctions or parallelism, analogous to the
nonlinear case.

As mentioned above, one should be cautious on  applying the cofactor
conditions of geometrically linear theory in alloy development
because of the errors of geometric
linearization.  As a particular example, we can consider the main condition
(\ref{cc2'}) in the case of Types I and II domains.  According to
Proposition \ref{propI/II}, the condition (\ref{cc2'}) is $|\bfU^{-1} \hat{\bfe}| = 1$
for Type I domains and $|\bfU \hat{\bfe}| = 1$ for Type II domains under
the general hypotheses given there.   Both of these conditions linearize
to the same condition $\hat{\bfe}\cdot \bfE \hat{\bfe} =
n_1^2 \eps_1 + n_3^2 \eps_3 = 0$ of
(\ref{cclb}) (Recall from (\ref{alt_an}) that $\bfn = \hat{\bfe}$).
If we use the standard way
of evaluating the transformation
strain of linearized theory, $\bfE = \bfU - \bfI$, we have
\beq
  \begin{array}{ll} {\rm Geometrically\ nonlinear,\ Type\ I}: &
   (\frac{1}{\lambda_1^2} -1)n_1^2 +
  (\frac{1}{\lambda_3^2} -1)n_3^2 = 0,  \vspace*{0.5mm} \\
  {\rm Geometrically\ nonlinear,\ Type\ II}: &
  (\lambda_1^2 -1)n_1^2 +
  (\lambda_3^2 -1)n_3^2 = 0,            \vspace*{0.5mm} \\
  {\rm Geometrically\ linear}: &  (\lambda_1 -1)n_1^2 +
  (\lambda_3 -1)n_3^2 = 0.    \end{array}  \label{gngl1}
\eeq
As a numerical example, we can take  typical twin systems in a
cubic to monoclinic case discussed in Section \ref{satisfaction_cofactor}, which is also represented by the particular alloys identified in Section \ref{alloys} as good starting points for alloy development. For example, we take $\bfn = \hat{\bfe} = (1,1,0)/\sqrt{2}$ (in the cubic basis).
We take a typical measured value of $\lambda_3 = 1.08$.
Then, the exact satisfaction of
the cofactor conditions in the three cases of (\ref{gngl1}) gives
\beq
  \begin{array}{ll} {\rm Geometrically\ nonlinear,\ Type\ I}: &
  \lambda_1 = 0.936 ,  \\
  {\rm Geometrically\ nonlinear,\ Type\ II}: &
  \lambda_1 =  0.913,             \\
  {\rm Geometrically\ linear}: &  \lambda_1 = 0.920 .    \end{array}  \label{gngl2}
\eeq
In light of the sensitive dependence of hysteresis on the middle eigenvalue
seen on the horizontal axis of Figure \ref{hires},
the discrepancies seen in (\ref{gngl2}) may be significant.
Of course, it is no more difficult to use the geometrically
nonlinear conditions. The present situation with regard to the linearization
of the cofactor conditions
is similar to a number of other special lattice parameter
relationships discussed
by \citet{bhatt_1993}. In geometrically linear theory the elastic energy
near the habit plane can also be eliminated in some cases.

\section{Implications of the results for alloy development}
\label{alloys}

Although the theory justifying and explaining the cofactor conditions is
intricate, the conditions themselves are  simple and easy to implement.  One first chooses a domain system, which is the choice of
a unit vector $\hat{\bfe}$ relating two variants as in (\ref{ehati}).
Then one calculates $\bfa$ and $\bfn$ from (\ref{typeI_twin}) or
(\ref{comp}), depending on whether the domain system is Type I/II
or Compound.  As explained in Section \ref{twins_domains},
this choice also covers the cases of non-conventional and
non-generic twins, thus the terminology ``domain'' throughout this paper.
From these choices one identifies whether the domain is Type I, Type II or
Compound.

A convenient form of the cofactor conditions for alloy
development is then (\ref{cc1}) and
(\ref{cc2'}) (as further simplified by Proposition \ref{propI/II}).
The inequality (\ref{cc3}) also  has to be checked.  Among the systems
identified below that are near to satisfying the cofactor conditions,
it seems that this inequality will be automatically satisfied.  A useful
alloy development procedure is by interpolation:
\begin{enumerate}
\item From x-ray measurements determine the transformation stretch
matrix $\bfU$ and unit vector $\hat{\bfe}$ relating two variants:
$\hat{\bfU} = \bfQ \bfU \bfQ^T$, $\bfQ = -\bfI + 2 \hat{\bfe}\otimes
\hat{\bfe}$.  See  \citet{chen_12a} for an algorithm that automates
this part.  Identify the type of domain.  Below, for definiteness, it is
assumed that we wish to find an alloy satisfying the cofactor conditions
for a Type I twin system.  $\bfU$ depends on composition,
and we assume there are
two compositional variables $x$ and $y$.
\item  Determine a one-parameter family of alloys satisfying
$\lambda_2 = 1$.  We have found the following procedure
to be useful.  For each
$x$, find and alloy with composition $(x,y_1)$ having  $\lambda_2 \gtrsim 1$
and another alloy $(x,y_2)$ having  $\lambda_2 \lesssim 1$.  Then interpolate
to find a family of alloys with composition $(x, y(x))$ with $\lambda_2 = 1$.
\item  Among alloys with composition $(x,y(x))$, find an alloy with
composition $(x_1,y(x_1))$ with $|\bfU^{-1} \hat{\bfe}| \gtrsim 1$
and another alloy with composition
$(x_2,y(x_2))$ satisfying $|\bfU^{-1} \hat{\bfe}| \lesssim 1$.  Then interpolate to
find an alloy with composition $(x^{\star}, y(x^{\star}))$ satisfying
$|\bfU^{-1} \hat{\bfe}| = 1$, where $x^{\star}$ is between $x_1$ and $x_2$. This alloy satisfies (\ref{cc1}) and
(\ref{cc2}).
\item Check that the inequality (\ref{cc3}) is satisfied for the alloy
$(x^{\star}, y(x^{\star}))$.
\end{enumerate}
This procedure relies on the lattice parameters changing smoothly with
composition, as in Vegard's law.  This is often the case in a
suitable domain.  It also relies on having good starting points.

\vspace{5mm}
\renewcommand{\arraystretch}{1.3}
\begin{threeparttable}[htp]
{\small
\centering
\caption{Potential starting points for an alloy development program whose goal is to satisfy the cofactor conditions. }
\begin{tabular}{c|c|c|c}
\hline
{Candidates}&Cu$_{69}$Al$_{24}$Mn$_7$\tnote{1}&Au$_{25}$Cu$_{30}$Zn$_{45}$\tnote{2} &VO$_2$\tnote{3}\\

\hline
\begin{tabular}{l}
Crystal structure \\
\quad Austenite \\
\quad Martensite
\end{tabular}
&  \begin{tabular}{l}
    \\
DO3 \\
6M
\end{tabular}
&  \begin{tabular}{l}
  \\
L2$_1$ \\
M18R
\end{tabular}
&  \begin{tabular}{l}
  \\
Rutile  \\
Rutile monocl.
\end{tabular} \\
\hline
\begin{tabular}{l}
Bravais lattice \\
\quad Austenite \\
\quad Martensite
\end{tabular}
&  \begin{tabular}{l}
  \\
FCC  \\
Primitive monocl.
\end{tabular}
&  \begin{tabular}{l}
  \\
FCC  \\
Primitive monocl.
\end{tabular}
&  \begin{tabular}{l}
  \\
Primitive tetragonal \\
Base-centered monocl.
\end{tabular}  \\
\hline 
\begin{tabular}{l} Transformation \\ stretch matrix $\bfU$
\end{tabular}
&\!\!{\renewcommand{\arraystretch}{1} $\begin{bmatrix}1.1098&0.0279&0\\0.0279&1.0062&0\\0&0&0.8989\end{bmatrix}$ \renewcommand{\arraystretch}{1.2}}\!\!&
\!\!{\renewcommand{\arraystretch}{1}$\begin{bmatrix}1.0508&0&0.0142\\0&0.9108&0\\0.0142&0&1.0059\end{bmatrix}$\renewcommand{\arraystretch}{1.2}}\!\!&
\!\!{\renewcommand{\arraystretch}{1}$\begin{bmatrix}1.0669&0&0.0421\\0&0.9939&0\\0.0421&0&0.9434\end{bmatrix}$\renewcommand{\arraystretch}{1.2}}\!\!\\
\hline
{$|\lambda_2-1|$} & 0.0008 & 0.0018 & 0.0061 \\
\hline
{180$^\circ$ axis $\hat{\mathbf e}$}&$[011]$ or $[01\bar{1}]$&$[10\bar{1}]$&$[001]$\\
\hline
\begin{tabular}{l}
Cofactor conditions \\
 \quad Type I, $|\bfU^{-1} \hat{\bfe}|-1$ \\
 \quad Type II, $|\bfU \hat{\bfe}|-1$ \\
 \quad Compound
\end{tabular}
&  \begin{tabular}{l}
      \\
0.0256 \\
0.0202 \\
       \\
\end{tabular}
&\begin{tabular}{l}
      \\
0.0263 \\
0.029 \\
      \\
\end{tabular}
&\begin{tabular}{l}
      \\
      \\
      \\
satisfied if $\lambda_2 = 1$
\end{tabular}  \\
\hline
Inequality \eqref{cc3} &  0.0016  &  0.0175  &  0.0144  \\
\hline
\end{tabular}
}
\begin{tablenotes}
{\small
\item[1] \citep{zhang_07}
\item[2] \citep{hiroshi_76} The lattice parameters of austenite, which are needed to calculate $\bfU$,
 were not measured by these authors, so we have supplied this measurement.
\item[3] \citep{mcwhan_70}
}
\end{tablenotes}
\end{threeparttable}

\renewcommand{\arraystretch}{1}

\section*{Acknowledgements}
This work was supported by the MURI projects FA9550-12-1-0458 (administered by AFOSR) and W911NF-07-1-0410 (administered by ARO). This research was also benefited from the support of NSF-PIRE Grant No. OISE-0967140. The experimental work presented here was also partly supported by the Institute on the Environment and CharFac at the University of Minnesota. 

\newpage
\appendix
\section{Twin domains}
\normalsize
\vspace{5mm}

Here it is proved that general solutions of the equation of compatibility
(\ref{twin_rel2}) between martensite variants are represented as
Type I or Type II domains. 

\begin{proposition} \label{domains} Let  $\bfA = \bfA^T$ and $\bfB = \bfB^T$
be $3 \times 3$ positive-definite matrices satisfying
$\bfB  = \bfR \bfA \bfR^T$
for some $\bfR \in$ {\rm O(3)}.  Suppose $\bfA$ and $\bfB$
are compatible in the
sense that there is a matrix $\bfQ \in$ {\rm SO(3)}
such that
\beq
 \bfQ \bfB - \bfA = \bfa \otimes \bfn,   \label{com}
\eeq
$\bfa, \bfn \in \rz^3$. Then there is
a unit vector $\hat{\bfe} \in \rz^3$
such that
\beq
   \bfB = (-\bfI + 2 \hat{\bfe} \otimes \hat{\bfe})\bfA (-\bfI + 2 \hat{\bfe} \otimes \hat{\bfe}).
   \label{180}
\eeq
Conversely, if $3 \times 3$ matrices $\bfA$ and $\bfB$ satisfy (\ref{180})
for some unit vector $\hat{\bfe}$, then
 there is $\bfQ \in$ {\rm SO(3)} so that
(\ref{com}) is satisfied.  A formula for $\hat{\bfe}$ can be given as follows.
Under the hypotheses, there is an orthonormal basis
$\{\bfe_1, \bfe_2, \bfe_3\}$ such that
\beq
\bfA^{-1}\bfB^2\bfA^{-1}=\mu_{1} \bfe_{1} \otimes \bfe_{1}+
\bfe_2 \otimes \bfe_2+ \mu_3 \bfe_3 \otimes \bfe_3,   \label{C}
\eeq
where $0 < \mu_1\le 1 \le \mu_3$ and the following identities
hold:
\beq
  \mu_1 \mu_3 = 1, \quad \bfe_1 \cdot \bfA^2 \bfe_1 =
  \mu_3\, \bfe_3 \cdot \bfA^2 \bfe_3,
   \quad (\bfe_2 \cdot \bfA^2 \bfe_1)^2 =
   \mu_3 (\bfe_2 \cdot \bfA^2 \bfe_3)^2.
\eeq
In the case $\mu_3 >1$ all unit vectors $\hat{\bfe}$ satisfying
(\ref{180}) are given by
\beq
  \hat{\bfe}  = \pm (\delta_1 \bfA \bfe_1 + \delta_3 \bfA \bfe_3),
  \label{defhate}
\eeq
where
\beq
\delta_1 =  \bigg(2(\bfe_1 \cdot \bfA^2 \bfe_1  +
s \sqrt{\mu_3}\,  \bfe_3 \cdot \bfA^2 \bfe_1)\bigg)^{-1/2} \quad {\rm and}
\quad  \delta_3 = s \sqrt{\mu_3}\, \delta_1.   \label{deldef}
\eeq
and $s \in \{\pm 1\}$ satisfies
$s \sqrt{\mu_3} (\bfe_2 \cdot \bfA^2 \bfe_3) =
-\bfe_2 \cdot \bfA^2 \bfe_1$.
In the case $\mu_3=1$ necessarily $\bfB = \bfA$ and $\hat{\bfe}$ can be
chosen as a normalized eigenvector of $\bfA$.

\end{proposition}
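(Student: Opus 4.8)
The plan is to reduce everything to the eigenstructure of $\bfC:=\bfA^{-1}\bfB^2\bfA^{-1}$ and then verify the explicit formula \eqref{defhate} by a computation carried out in the $2$-plane $\mathrm{span}\{\bfA\bfe_1,\bfA\bfe_3\}$. Throughout I would abbreviate the $180^\circ$ rotation by $\bfQ_{\hat\bfe}:=-\bfI+2\hat\bfe\otimes\hat\bfe$ (so $\bfQ_{\hat\bfe}^2=\bfI$) and set $p_{ij}:=\bfe_i\cdot\bfA^2\bfe_j$.

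First I would invoke the classical twinning lemma (Prop.~4 of \citet{ball_james_87}, already used above): for positive-definite symmetric $\bfA,\bfB$, a solution $\bfQ\bfB-\bfA=\bfa\otimes\bfn$ with $\bfQ\in{\rm SO(3)}$, $\bfa,\bfn\neq 0$ exists iff $\bfC$ has middle eigenvalue $1$; this yields the orthonormal eigenbasis and the representation \eqref{C} after ordering so that $0<\mu_1\le 1\le\mu_3$. Next I extract the three identities from $\bfB=\bfR\bfA\bfR^T$, $\bfR\in{\rm O(3)}$: then $\bfB^2=\bfR\bfA^2\bfR^T$, so $\bfB^2$ and $\bfA^2$ have equal principal invariants, and since $\bfB^2=\bfA\bfC\bfA$, comparing invariants in the $\bfe$-basis gives, from the determinant, $\mu_1\mu_3=1$; from the trace, $(\mu_1-1)p_{11}+(\mu_3-1)p_{33}=0$, hence $p_{11}=\mu_3 p_{33}$; and from the second invariant, after the first two relations are used (the $p_{13}$- and $p_{22}$-contributions cancel), $(1-\mu_1)p_{12}^2=(\mu_3-1)p_{23}^2$, hence $p_{12}^2=\mu_3 p_{23}^2$. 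In the degenerate case $\mu_3=1$ these force $\mu_1=1$, so $\bfC=\bfI$, $\bfB^2=\bfA^2$, and by uniqueness of positive-definite square roots $\bfB=\bfA$; then \eqref{180} holds for any unit eigenvector $\hat\bfe$ of $\bfA$.

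For the main case $\mu_3>1$, the plan is to verify $\bfQ_{\hat\bfe}\bfB^2\bfQ_{\hat\bfe}=\bfA^2$ for the vector $\hat\bfe=\delta_1\bfA\bfe_1+\delta_3\bfA\bfe_3$ of \eqref{defhate}--\eqref{deldef} and then take positive-definite square roots to obtain \eqref{180}; the implication \eqref{180}$\Rightarrow$\eqref{com} is immediate on taking $\bfn=\hat\bfe$ and $\bfa$ from the Type~I formula \eqref{typeI_twin} (with $\bfU$ replaced by $\bfA$). I would verify $\bfQ_{\hat\bfe}\bfB^2\bfQ_{\hat\bfe}=\bfA^2$ by testing it against the vectors $\bfA^{-1}\bfe_i$, because $\bfB^2=\bfA\bfC\bfA$ acts simply on them: $\bfB^2(\bfA^{-1}\bfe_i)=\mu_i\bfA\bfe_i$ and $\bfB^2(\bfA\bfe_i)=\sum_j\mu_j p_{ij}\,\bfA\bfe_j$, while $\bfQ_{\hat\bfe}(\bfA^{-1}\bfe_i)=-\bfA^{-1}\bfe_i+2\delta_i\hat\bfe$ with $\delta_2:=0$; consequently each entry $(\bfA^{-1}\bfe_j)\cdot\bfQ_{\hat\bfe}\bfB^2\bfQ_{\hat\bfe}(\bfA^{-1}\bfe_i)$ becomes a polynomial in $\mu_1,\mu_3$ and the $p_{ij}$ alone, and the target $\bfA^2$ is represented by the identity in this pairing (since $(\bfA^{-1}\bfe_j)\cdot\bfA^2(\bfA^{-1}\bfe_i)=\bfe_i\cdot\bfe_j$). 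The entries carrying the index $2$ vanish precisely when $\delta_1 p_{12}+\delta_3 p_{23}=0$; with $\delta_3=s\sqrt{\mu_3}\,\delta_1$ this is the stated condition $s\sqrt{\mu_3}\,p_{23}=-p_{12}$, solvable exactly because $p_{12}^2=\mu_3 p_{23}^2$. The three remaining $\{1,3\}$-entries, together with the normalization $1=|\hat\bfe|^2=\delta_1^2 p_{11}+2\delta_1\delta_3 p_{13}+\delta_3^2 p_{33}$ — which, using $p_{11}=\mu_3 p_{33}$, forces $\delta_1^2=(2(p_{11}+s\sqrt{\mu_3}\,p_{13}))^{-1}$, the denominator being positive by positive-definiteness of the Gram matrix of $\bfA\bfe_1,\bfA\bfe_3$ — then collapse to scalar identities that follow from $\mu_1\mu_3=1$, $p_{11}=\mu_3 p_{33}$, $p_{12}^2=\mu_3 p_{23}^2$ and the definitions of $\delta_1,\delta_3,s$. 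This last reduction is the computational heart of the argument and the step I expect to be the main obstacle: it is mechanical but lengthy, chiefly because the frame $\{\bfA\bfe_1,\bfA\bfe_3\}$ is not orthonormal.

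Finally, to see that \eqref{defhate} lists all unit $\hat\bfe$ satisfying \eqref{180} when $\mu_3>1$: any such $\hat\bfe$ gives $\bfQ_{\hat\bfe}(\bfB^2-\bfA^2)\bfQ_{\hat\bfe}=-(\bfB^2-\bfA^2)$, and since $\bfB^2-\bfA^2=\bfA(\bfC-\bfI)\bfA=(\mu_1-1)\bfA\bfe_1\otimes\bfA\bfe_1+(\mu_3-1)\bfA\bfe_3\otimes\bfA\bfe_3$ has image the $2$-plane $\mathrm{span}\{\bfA\bfe_1,\bfA\bfe_3\}$ and one-dimensional kernel $\mathrm{span}\{\bfA^{-1}\bfe_2\}$, the rotation $\bfQ_{\hat\bfe}$ must carry that kernel line into itself; the fixed-vector alternative $\bfQ_{\hat\bfe}\bfA^{-1}\bfe_2=\bfA^{-1}\bfe_2$ is impossible, since it would place $\hat\bfe$ orthogonal to the plane while the support of $\bfQ_{\hat\bfe}\bfA^2\bfQ_{\hat\bfe}-\bfA^2$ always contains $\hat\bfe$, so $\bfQ_{\hat\bfe}\bfA^{-1}\bfe_2=-\bfA^{-1}\bfe_2$, i.e.\ $\hat\bfe\perp\bfA^{-1}\bfe_2$, i.e.\ $\hat\bfe\in\mathrm{span}\{\bfA\bfe_1,\bfA\bfe_3\}$. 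Writing $\hat\bfe=\delta_1'\bfA\bfe_1+\delta_3'\bfA\bfe_3$ and re-running the matching of the previous paragraph forces $(\delta_3')^2=\mu_3(\delta_1')^2$, the sign relation, and the value of $(\delta_1')^2$, i.e.\ $\hat\bfe=\pm(\delta_1\bfA\bfe_1+\delta_3\bfA\bfe_3)$ as claimed — the eigenbasis $\{\bfe_i\}$ being unique up to signs because $\mu_1<1<\mu_3$ are distinct.
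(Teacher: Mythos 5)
Your proposal is correct, and while it shares the paper's skeleton (Ball--James Prop.~4 to obtain (\ref{C}), the three identities from equality of the invariants of $\bfA$ and $\bfB$, the degenerate case $\mu_3=1$, and algebra confined to the plane ${\rm span}\{\bfA\bfe_1,\bfA\bfe_3\}$), the key step is organized genuinely differently. The paper \emph{derives} $\hat{\bfe}$ by positing the dual representation $\hat{\bfe}=\sigma_1\bfA^{-1}\bfe_1+\sigma_3\bfA^{-1}\bfe_3=\delta_1\bfA\bfe_1+\delta_3\bfA\bfe_3$ (its (\ref{2planes})), matching coefficients in (\ref{eqn}) to get $\sigma_i\delta_i=\tfrac12$, $\delta_3^2=\mu_3\delta_1^2$, and then using consistency of the two representations (its (\ref{3eq})) to fix $s$ and $\delta_1$; you instead \emph{verify} the stated $\hat{\bfe}$ by testing $\bfQ_{\hat{\bfe}}\bfB^2\bfQ_{\hat{\bfe}}=\bfA^2$ against the basis $\{\bfA^{-1}\bfe_i\}$ and take positive-definite square roots, and you add a separate argument -- $\bfQ_{\hat{\bfe}}$ must preserve the one-dimensional kernel of $\bfB^2-\bfA^2$ and cannot fix it, so $\hat{\bfe}\in{\rm span}\{\bfA\bfe_1,\bfA\bfe_3\}$ -- for the ``all unit vectors'' claim, which is more explicit than the paper's ansatz-based route and is essentially how one would justify (\ref{defhate}) exhaustively. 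Two asserted steps deserve a word. The computation you flag as the main obstacle closes quickly: with $q_k:=\bfe_k\cdot\bfA\hat{\bfe}=\delta_1p_{k1}+\delta_3p_{k3}$, your choice of $s$ gives $q_2=0$, the definition of $\delta_1$ gives $q_1=1/(2\delta_1)$, and $p_{11}=\mu_3p_{33}$ gives $q_3=1/(2\delta_3)$; then $\mu_1q_1^2+\mu_3q_3^2=(\mu_1+1)/(4\delta_1^2)$ and the three $\{1,3\}$-entries reduce to $1,1,0$ using only $\mu_1\mu_3=1$ and $\delta_3^2=\mu_3\delta_1^2$ (this is the paper's algebra with $\sigma_i=q_i$), so the verification is sound. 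Second, your claim that the support of $\bfQ_{\hat{\bfe}}\bfA^2\bfQ_{\hat{\bfe}}-\bfA^2$ ``always contains $\hat{\bfe}$'' fails exactly when that matrix vanishes, i.e.\ when $\hat{\bfe}$ is an eigenvector of $\bfA$; you should note this is excluded because $\mu_3>1$ forces $\bfB^2-\bfA^2\ne0$. Likewise the final ``re-running the matching'' for uniqueness of the coefficients up to sign is left as a promissory note; it does follow from the same entrywise system (or, more cheaply, from the fact that Prop.~4 of Ball--James admits at most two rank-one connections, so at most two nonparallel $\hat{\bfe}$ can satisfy (\ref{180}), the viewpoint behind Corollary~\ref{cd}), but as written it is asserted rather than proved.
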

In words:  for stretch matrices
related by orthogonal similarity as we
have for variants of martensite,
necessary and sufficient conditions for compatibility are
that these matrices are related by a 180$^{\circ}$ rotation.

\vspace{2mm}

\proof Without loss of generality, by replacing $\bfR$
by $-\bfR$ if necessary,  we can assume
$\bfR \in\ $SO(3).  The condition (\ref{C}), which under the given hypotheses
is necessary and sufficient for
(\ref{com}), has been proved in  \citet{ball_james_87}, Prop. 4.
 We can assume without loss of generality that
 $0 < \mu_1 < 1 < \mu_3$.  That is, if, say, $\mu_3 =1$, then
 by taking $\det$ of  (\ref{C}) and using $\det \bfA = \det \bfB$
 we would get necessarily $\mu_1 = 1$.  This would lead to
 $\bfA^2 = \bfB^2$.  Then by taking the positive-definite square root,
 we would have $\bfA = \bfB$.  This, in turn, would imply that
 (\ref{180}) is satisfied, for example, with $\hat{\bfe}$ equal to an
 eigenvector of $\bfA$. Thus, below we will assume $\mu_1 < 1 < \mu_3$.

There are several identities satisfied by the quantities on the
right hand side of (\ref{C}).  These follow from the hypothesis that
$\bfB  = \bfR \bfA \bfR^T$ which implies that $\bfA$ and $\bfB$ have the
same eigenvalues and therefore $\det \bfA = \det \bfB$,
${\rm tr} \bfA^2 = {\rm tr} \bfB^2$ and ${\rm tr} \bfA^4 = {\rm tr} \bfB^4$.
These in turn yield the following necessary conditions:
\begin{enumerate}
\item $\det \bfA = \det \bfB \Longrightarrow \mu_1 \mu_3 = 1$.
Obvious by taking $\det$ of (\ref{C}).
\item ${\rm tr} \bfA^2 = {\rm tr} \bfB^2 \Longrightarrow
\bfe_1 \cdot \bfA^2 \bfe_1 = \mu_3 \bfe_3 \cdot \bfA^2 \bfe_3$. This
follows by subtracting the identity matrix from (\ref{C}) and then
pre and post multiplying by $\bfA$ to get
\beq
\bfB^2  - \bfA^2 =
(\mu_{1}- 1) \bfA \bfe_{1}\otimes \bfA \bfe_{1} +
(\mu_3 - 1)\bfA \bfe_3\otimes \bfA \bfe_3.
\eeq
Taking the trace and using $\mu_1 \mu_3 = 1$ and $\mu_3 \ne 1$,
we get $\bfe_1 \cdot \bfA^2 \bfe_1 = \mu_3 \bfe_3 \cdot \bfA^2 \bfe_3$.
\item ${\rm tr} \bfA^4 = {\rm tr} \bfB^4 \Longrightarrow
(\bfe_2 \cdot \bfA^2 \bfe_1)^2 = \mu_3  (\bfe_2 \cdot \bfA^2 \bfe_3)^2$.
This follows from (\ref{C}) by pre and post multiplying by $\bfA$ to
get $\bfB^2 = \mu_{1} \bfA  \bfe_{1}\otimes \bfA \bfe_{1}+
\bfA \bfe_2 \otimes \bfA \bfe_2+ \mu_3 \bfA \bfe_3 \otimes \bfA \bfe_3$
then squaring this to get $\bfB^4$.  Now write $\bfA^2 = \bfA\, \bfI\, \bfA =
\bfA (\bfe_{1} \otimes  \bfe_{1}+
\bfe_2 \otimes  \bfe_2 +   \bfe_3 \otimes  \bfe_3) \bfA $ and square this
to get $\bfA^4$.  Put ${\rm tr} \bfA^4 = {\rm tr} \bfB^4$ and simplify
using items 1 and 2 and $\mu_3 \ne 1$ to get the result.
\end{enumerate}
 
Substituting provisionally the expression (\ref{180})
for $\bf B$
into (\ref{C}), we get the necessary condition
\beq
\bfA^{-1}(-\bfI+2\hat{\bfe}\otimes\hat{\bfe})\bfA^2(-\bfI+2\hat{\bfe}
\otimes \hat{\bfe}) \bfA^{-1}=
\mu_{1} \bfe_{1}\otimes \bfe_{1}+ \bfe_2\otimes \bfe_2+\mu_3
\bfe_3\otimes \bfe_3.  \label{basic}
\eeq
Multiplying out the tensor products in (\ref{basic}) we derive
\beq
 -2\bfA \hat{\bfe} \otimes \bfA^{-1}\hat{\bfe}-2\bfA^{-1}\hat{\bfe}
 \otimes\bfA\hat{\bfe}+
 4(\hat{\bfe}\cdot\bfA^2\hat{\bfe})\bfA^{-1}\hat{\bfe}
 \otimes\bfA^{-1}\hat{\bfe} =
 (\mu_{1}-1) {\bfe}_{1} \otimes {\bfe}_{1}+(\mu_3-1)
 {\bfe}_3 \otimes {\bfe}_3.
 \label{eqn}
\eeq
To solve this equation, we try to find a unit vector $\hat{\bfe}$ of the form
\beq
\hat{\bfe} = \sigma_1 \bfA^{-1}\bfe_1 + \sigma_3  \bfA^{-1}\bfe_3 =
\delta_1 \bfA \bfe_1
+ \delta_3 \bfA \bfe_3.  \label{2planes}
\eeq
The condition $1 = \hat{\bfe} \cdot \hat{\bfe} =
\bfA \hat{\bfe} \cdot\bf A^{-1} \hat{\bfe}$
implies that
 \beq
 \sigma_1 \delta_1+ \sigma_3 \delta_3=1.  \label{sd}
\eeq
Substituting the expressions for $\bfA \hat{\bfe}$ and $\bfA^{-1}
\hat{\bfe}$ into the
equation (\ref{eqn}), we get,
\beqs
-2( \sigma_1 \bfe_1+\sigma_3 \bfe_3)\otimes(\delta_1 \bfe_1+
 \delta_3 \bfe_3)
- 2(\delta_1 \bfe_1+ \delta_3 \bfe_3)\otimes( \sigma_1 \bfe_1+
 \sigma_3 \bfe_3) \nonumber \\
+ 4(\sigma_1^2+ \sigma_3^2)(\delta_1 \bfe_1+ \delta_3 \bfe_3)
\otimes( \delta_1 \bfe_1+ \delta_3 \bfe_3)=(\mu_{1}-1)  \bfe_{1}
\otimes \bfe_{1}+(\mu_3-1) \bfe_3 \otimes \bfe_3.
\eeqs
Rearranging similar terms in the above equation results in the following:
\beqs
\lefteqn{\big(- 4 \delta_1 \sigma_1 + 4( \sigma_1^2+ \sigma_3^2)
 \delta_1^2\big) \bfe_1\otimes \bfe_1}  \nonumber \\
& &   +
\big(-2 \sigma_1 \delta_3-2 \delta_1 \sigma_3+4( \sigma_1^2+
\sigma_3^2) \delta_1\delta_3\big)
(\bfe_1\otimes \bfe_3+ \bfe_3\otimes \bfe_1) \nonumber \\
& &+\big(-4 \delta_3 \sigma_3+ 4( \sigma_1^2+ \sigma_3^2)
\delta_3^2\big) \bfe_3\otimes \bfe_3 =
(\mu_1 -1) \bfe_{1} \otimes \bfe_{1}+(\mu_3 - 1)
\bfe_3 \otimes \bfe_3.
\label{rearr}
\eeqs
Comparing the 13  terms on both sides of (\ref{rearr}) and using (\ref{sd}),
we get the following expression connecting
$\bf \sigma_1, \bf \sigma_3, \bf \delta_1$ and $\bf \delta_3$.
\beq
(\sigma_1 \delta_3-  \delta_1  \sigma_3)(1-2 \sigma_3 \delta_3)=0.
\eeq
The vanishing of the first factor,
$\sigma_1\delta_3- \delta_1\sigma_3 = 0$, leads to the
trivial case  $\mu_1 = \mu_3 = 1$ which has been excluded above.
The vanishing of the
second factor gives that $\sigma_3 \delta_3=\frac{1}{2}$ and then from
 (\ref{sd}), $ \sigma_1 \delta_1=\frac{1}{2}$. This
shows that none of the unknowns $\delta_1, \sigma_1, \delta_3,  \sigma_3$
vanish. Now  the $\bfe_1\otimes \bfe_1$ and $\bfe_3\otimes \bfe_3$
terms in equation (\ref{rearr}) give
\beqs
4 \sigma_3^2 \delta_1^2 &=& \mu_1 \quad \Longrightarrow \quad
 \frac{\delta_1^2}{\delta_3^2} = \mu_1,  \nonumber \\
 4 \sigma_1^2 \delta_3^2 &=& \mu_3 \quad \Longrightarrow \quad
 \frac{\delta_3^2}{\delta_1^2} = \mu_3.  \label{dl}
\eeqs
These equations are consistent with $\mu_1 \mu_3 = 1$, and we only
need to retain one
of them.  In summary, (\ref{basic}) is satisfied for a unit
vector $\hat{\bfe}$ of the form
(\ref{2planes}) if and only if $\sigma_1, \sigma_3, \delta_1, \delta_3$
satisfy
\beq
 \sigma_1 \delta_1=\frac{1}{2}, \quad \sigma_3 \delta_3=\frac{1}{2}, \quad
 \delta_3^2 = \mu_3 \delta_1^2.
\eeq
A useful way to write this solution is:
\beq
\delta_3 = s \sqrt{\mu_3} \delta_1, \quad \sigma_1 =
\frac{1}{2 \delta_1},
\quad \sigma_3 =  \frac{s}{2 \sqrt{\mu_3} \delta_1}, \quad s = \pm 1.
\label{choice}
\eeq
So far, $\delta_1 \ne 0$ and $s = \pm 1$ are free parameters.

Although we have solved (\ref{basic}) by the choice (\ref{choice}),
we have to be sure that these values of $\delta_1, \delta_3, \sigma_1,
\sigma_3$
satisfy (\ref{2planes}).  This is a vector equation in 3D and
 therefore is equivalent to the three equations one gets by
 dotting it with the three linearly independent vectors,
 $\bfA\bfe_1, \bfA\bfe_2, \bfA\bfe_3$.  This gives the three
 equations,
\beqs
 \sigma_1 &=& \delta_1 (\bfe_1 \cdot \bfA^2 \bfe_1) +
 \delta_3 (\bfe_3 \cdot \bfA^2 \bfe_1),   \nonumber \\
  \sigma_3 &=& \delta_1 (\bfe_1 \cdot \bfA^2 \bfe_3) +
 \delta_3 (\bfe_3 \cdot \bfA^2 \bfe_3),   \nonumber \\
 0 &=& \delta_1 (\bfe_2 \cdot \bfA^2 \bfe_1) +
 \delta_3 (\bfe_2 \cdot \bfA^2 \bfe_3).  \label{3eq}
\eeqs
If we square the last equation and use (\ref{choice}) and the nonvanishing
of $\delta_1$, we get
\beq
(\bfe_2 \cdot \bfA^2 \bfe_1)^2 = \mu_3  (\bfe_2 \cdot \bfA^2 \bfe_3)^2.
\label{id1}
\eeq
This is satisfied by virtue of Item 3 above. Hence, the square of the third
equation of (\ref{3eq}) is an identity.  So, we can satisfy the third of
(\ref{3eq}) by an appropriate choice of $s = \pm 1$ of (\ref{choice}).
In particular, there exists $s \in \{ \pm 1 \}$ satisfying
\beq
s \sqrt{\mu_3} (\bfe_2 \cdot \bfA^2 \bfe_3) = -\bfe_2 \cdot \bfA^2 \bfe_1.
\label{s}
\eeq
This uniquely determines $s$ unless it happens that $\bfe_2 \cdot \bfA^2 \bfe_3 = 0$,
in which case also $\bfe_2 \cdot \bfA^2 \bfe_1=0$ and $s$ can be either $\pm 1$.
Now we further note
that the first two equations in (\ref{3eq}) are not independent.  That is,
multiply the first of these by $\delta_1 \ne 0$ and the second by
$\delta_3 \ne 0$,
subtract the equations
and use the conditions $\sigma_3 \delta_3 = \sigma_1 \delta_1 =
\frac{1}{2}$. This leads to
\beq
\bfe_1 \cdot \bfA^2 \bfe_1 - \mu_3 \bfe_3 \cdot \bfA^2 \bfe_3 = 0.
\label{id2}
\eeq
This is automatically satisfied, by virtue of Item 2 above. Hence, there is
only one independent equation in (\ref{3eq}), that we can take to be the first
one:
\beq
\frac{1}{2 \delta_1} = \delta_1 (\bfe_1 \cdot \bfA^2 \bfe_1)  +
s \sqrt{\mu_3} \delta_1 (\bfe_3 \cdot \bfA^2 \bfe_1),  \label{ford}
\eeq
that is,
\beq
2 \delta_1^2 \big(  (\bfe_1 \cdot \bfA^2 \bfe_1)  +
s \sqrt{\mu_3} (\bfe_3 \cdot \bfA^2 \bfe_1)\big)
 = 1.  \label{del1}
\eeq

We claim that, under our hypotheses, (\ref{del1}) can always be solved for
$\delta_1 \ne 0$.   That is, by the positive definiteness of $\bfA^2$,
we have $\bfe_1 \cdot \bfA^2 \bfe_1 > 0,\ \bfe_3 \cdot \bfA^2 \bfe_3 > 0$,
$(\bfe_1 \cdot \bfA^2 \bfe_1)(\bfe_3 \cdot \bfA^2 \bfe_3)
> (\bfe_3 \cdot \bfA^2 \bfe_1)^2$.  Hence, eliminating $\sqrt{\mu_3}$
using (\ref{id2}) (see Item 2), we have
for either choice $s = \pm 1$,
\beq
(\bfe_1 \cdot \bfA^2 \bfe_1)  +
s \sqrt{\mu_3}  (\bfe_3 \cdot \bfA^2 \bfe_1)
= \sqrt{\frac{\bfe_1 \cdot \bfA^2 \bfe_1}{\bfe_3 \cdot \bfA^2 \bfe_3}}
\left( \sqrt{(\bfe_1 \cdot \bfA^2 \bfe_1)(\bfe_3 \cdot \bfA^2 \bfe_3)}
+ s \bfe_3 \cdot \bfA^2 \bfe_1 \right) > 0.  \label{co1}
\eeq
Hence, $\delta_1$ given by (\ref{deldef}) is well-defined.  Equations
(\ref{del1}) and (\ref{choice}) imply that the vector $\hat{\bfe}$ given
by (\ref{2planes}) is a unit vector and satisfies (\ref{basic})
and therefore (\ref{180}).

The sufficiency of the condition (\ref{180}) for compatibility is a
standard result, see \citet{kaushik_03} or (\ref{typeI_twin}) above.
The formula for $\hat{\bfe}$
follows from (\ref{2planes}), (\ref{choice}) and (\ref{del1}) above.   \qed
 
\begin{cor} \label{cd} (Compound domains)  Assume the hypotheses of
Proposition \ref{domains}.  There are two unit vectors
$\hat{\bfe}_{+} \nparallel \hat{\bfe}_{-}$ satisfying (\ref{180})
if and only if
\beq
\bfe_2 \cdot \bfA^2 \bfe_3 =
\bfe_2 \cdot \bfA^2 \bfe_1 = 0.  \label{deg}
\eeq
If (\ref{deg}) is satisfied and $\mu_3 >1$, there are precisely two
such nonparallel
unit vectors (up to a premultiplied $\pm$) that satisfy
(\ref{180}), and in fact these vectors are orthonormal,
$\hat{\bfe}_{+} \cdot\hat{\bfe}_{-} = 0$.  They are given by  the formulas
\beq
  \hat{\bfe}_{\sigma}  = \delta_1^{\sigma} \bfA \bfe_1 +
  \delta_3^{\sigma} \bfA \bfe_3,  \quad \sigma = \pm, \label{defhatepm}
\eeq
where
\beq
\delta_1^{\sigma} =  \bigg(2(\bfe_1 \cdot \bfA^2 \bfe_1  +
\sigma \sqrt{\mu_3}\,  \bfe_3 \cdot \bfA^2 \bfe_1)\bigg)^{-1/2}
\quad {\rm and}
\quad  \delta_3^{\sigma} = \sigma \sqrt{\mu_3}\, \delta_1^{\sigma},
\quad \sigma  = \pm.   \label{deldef1}
\eeq
In the case $\mu_3=1$ necessarily $\bfB = \bfA$ and the solutions
$\hat{\bfe}$ of (\ref{180}) consist of unit vectors in the
eigenspace of $\bfA$.
\end{cor}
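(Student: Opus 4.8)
The plan is to read this corollary off the internal structure of the proof of Proposition~\ref{domains}, where it was shown that (for $\mu_3>1$) every unit vector $\hat{\bfe}$ solving \eqref{180} has the two-term form $\hat{\bfe}=\pm(\delta_1\bfA\bfe_1+\delta_3\bfA\bfe_3)$ of \eqref{defhate}, with $\delta_3=s\sqrt{\mu_3}\,\delta_1$ and $\delta_1$ then fixed up to sign by \eqref{deldef}, so that the \emph{only} remaining degree of freedom is the sign $s\in\{\pm1\}$, which is constrained by \eqref{s}, namely $s\sqrt{\mu_3}\,(\bfe_2\cdot\bfA^2\bfe_3)=-\bfe_2\cdot\bfA^2\bfe_1$. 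First I would record, using Item~3 in the proof of Proposition~\ref{domains} (the identity $(\bfe_2\cdot\bfA^2\bfe_1)^2=\mu_3(\bfe_2\cdot\bfA^2\bfe_3)^2$), that \eqref{deg} is equivalent to $\bfe_2\cdot\bfA^2\bfe_3=0$ and equivalent to $\bfe_2\cdot\bfA^2\bfe_1=0$; hence \eqref{s} has a unique solution $s\in\{\pm1\}$ when \eqref{deg} fails, and is satisfied by both $s=+1$ and $s=-1$ when \eqref{deg} holds. This gives the ``only if'' direction immediately: if \eqref{deg} fails then $s$, and hence $\hat{\bfe}$ up to $\pm$, is unique, so two nonparallel solutions are impossible.

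For the converse, assume \eqref{deg} and $\mu_3>1$. Then for each $\sigma\in\{\pm\}$ the radicand $2\big(\bfe_1\cdot\bfA^2\bfe_1+\sigma\sqrt{\mu_3}\,\bfe_3\cdot\bfA^2\bfe_1\big)$ defining $\delta_1^\sigma$ in \eqref{deldef1} is strictly positive by the inequality \eqref{co1}, so the two vectors $\hat{\bfe}_\sigma=\delta_1^\sigma\bfA\bfe_1+\delta_3^\sigma\bfA\bfe_3$ of \eqref{defhatepm} are both well defined, and — since unit length and the solution property were established in Proposition~\ref{domains} for any admissible choice of $s$ — both are genuine unit-vector solutions of \eqref{180}. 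Because Proposition~\ref{domains} exhausts all solutions through these two sign choices, there are at most two such vectors up to $\pm$, hence exactly two once we know $\hat{\bfe}_+\nparallel\hat{\bfe}_-$.

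The one remaining computation is orthonormality: using $\delta_3^\sigma=\sigma\sqrt{\mu_3}\,\delta_1^\sigma$ and the symmetry of $\bfA^2$, the cross terms in $\hat{\bfe}_+\cdot\hat{\bfe}_-$ cancel and
\[
\hat{\bfe}_+\cdot\hat{\bfe}_-=\delta_1^+\delta_1^-\big(\bfe_1\cdot\bfA^2\bfe_1-\mu_3\,\bfe_3\cdot\bfA^2\bfe_3\big)=0,
\]
the last equality being Item~2 in the proof of Proposition~\ref{domains}. As each $\hat{\bfe}_\sigma$ has unit length, this shows $\hat{\bfe}_+\perp\hat{\bfe}_-$; in particular the two are nonparallel and distinct even up to sign, which finishes both the count and the orthonormality assertion. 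For the case $\mu_3=1$ I would argue exactly as in Proposition~\ref{domains}: taking $\det$ of \eqref{C} with $\det\bfA=\det\bfB$ forces $\mu_1=1$, so $\bfA^2=\bfB^2$ and, taking positive-definite square roots, $\bfB=\bfA$; then \eqref{180} reduces to $(-\bfI+2\hat{\bfe}\otimes\hat{\bfe})\bfA(-\bfI+2\hat{\bfe}\otimes\hat{\bfe})=\bfA$, and operating both sides on $\hat{\bfe}$ gives $\bfA\hat{\bfe}=(\hat{\bfe}\cdot\bfA\hat{\bfe})\hat{\bfe}$, so $\hat{\bfe}$ must be a unit eigenvector of $\bfA$, and conversely any unit eigenvector works. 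I do not anticipate a genuine obstacle here — the proof is essentially a bookkeeping exercise on Proposition~\ref{domains} — the only care needed is to invoke Items~2 and~3 and \eqref{co1} only in the regime $\mu_3>1$ where they were derived, and to note that ``nonparallel'' comes for free from the orthogonality identity rather than requiring a separate argument.
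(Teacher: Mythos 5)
Your proposal is correct and follows essentially the same route as the paper's own proof: it reads the corollary off Proposition \ref{domains}, observing that the only freedom left is the sign $s$, that $s$ fails to be unique exactly when \eqref{deg} holds (via Item 3), and then verifies orthonormality of $\hat{\bfe}_{+},\hat{\bfe}_{-}$ by the direct computation using $\delta_3^{\sigma}=\sigma\sqrt{\mu_3}\,\delta_1^{\sigma}$ and Item 2. Your write-up is in fact somewhat more explicit than the paper's (which states these steps as "immediate"), and your handling of the $\mu_3=1$ case matches the paper's argument.
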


\proof
The proof follows immediately from the statement
$s \sqrt{\mu_3} (\bfe_2 \cdot \bfA^2 \bfe_3) =
-\bfe_2 \cdot \bfA^2 \bfe_1$ of Proposition \ref{domains}, which does
not uniquely determine $s \in \{\pm 1\}$ if and only if (\ref{deg}) is
satisfied.  The fact that the two solutions $\bfe_{\pm1}$ are nonparallel
is seen from their forms (\ref{deldef}), and the fact that these are the
only possible solutions up to premultiplied $\pm$ follows from
Proposition \ref{domains}.  The orthonormality of
$\hat{\bfe}_{+}$ and $\hat{\bfe}_{-}$ follows by direct calculation
using (\ref{defhatepm}) and  (\ref{deldef1}).
\qed

 \bibliographystyle{apalike}
 \bibliography{reference}

\end{document}